\theoremstyle{plain}
\newtheorem{observation}{Observation}
\let\doendproof\endproof
\renewcommand\endproof{~\hfill\qed\doendproof}
\renewcommand{\subparagraph}[1]{\paragraph{#1}}
\def\lowhook#1{\lower.47em\hbox{$#1$}}
\def\LEorL{\mathrel{\lowhook\lhook\kern-.1em{\le}\kern-.08em\lowhook\rhook}}
\def\GEorG{\mathrel{\lowhook\lhook\kern-.08em{\ge}\kern-.1em\lowhook\rhook}}
\renewcommand{\paragraph}[1]{\smallskip\noindent\textbf{#1}\xspace}
\Crefname{observation}{Observation}{Observations}
\Crefname{property}{Property}{Properties}
\Crefname{algorithm}{Algorithm}{Algorithms}
\Crefname{section}{Section}{Sections}
\Crefname{observation}{Observation}{Observations}
\Crefname{lemma}{Lemma}{Lemmas}
\Crefname{claim}{Claim}{Claims}
\Crefname{figure}{Fig.}{Figs.}
\Crefname{figure}{Fig.}{Figs.}
\Crefname{enumi}{Condition}{Conditions}
\Crefname{statement}{Statement}{Statements}
\newcommand{\UBE}{$2$UBE\xspace}
\newcommand{\UBEs}{$2$UBEs\xspace}
\newcommand{\HPC}{HP-completion\xspace}
\newcommand{\KUBE}{{\scshape $k$UBE Testing}\xspace}
\newcommand{\TUBE}{{\scshape 3UBE Testing}\xspace}
\newcommand{\TWOUBE}{{\scshape $2$UBE Testing}\xspace}
\newcommand{\BETW}{{\scshape Betweenness}\xspace}
\newcommand{\shell}{shell digraph}
\newcommand{\filled}{filled \shell}
\newcommand{\gadgeted}{$\Lambda$-\filled}
\definecolor{realblue}{rgb}{0,0,1}
\definecolor{blue}{rgb}{0.274,0.392,0.666}
\definecolor{darkerblue}{rgb}{0.094,0.455,0.804}
\definecolor{darkblue}{rgb}{0.063,0.306,0.545}
\definecolor{red}{rgb}{0.627,0.117,0.156}
\definecolor{green}{rgb}{0,0.588,0.509}
\definecolor{orange}{rgb}{0.903,0.739,0.382}
\definecolor{realred}{rgb}{1,0,0}
\definecolor{ourgreen}{rgb}{0,0.588,0.509}
\definecolor{lipicsblue}{rgb}{0.08235294118,0.3098039216,0.537254902}
\definecolor{ourred}{rgb}{1,0.3,0.3}
\definecolor{darkgreen}{rgb}{0, 0.7, 0}
  \DeclareMathOperator{\midset}{mid}
  \DeclareMathOperator{\skel}{skel}
  \DeclareMathOperator{\mypert}{pert}
  \newcommand{\pert}[1]{\mypert(#1)}
  \newcommand{\bwsymb}{\ensuremath{\beta}}
  \DeclareMathOperator{\bw}{\bwsymb}
\newcommand{\darkblue}[1]{{{\textcolor{darkblue}{#1}\xspace}}}
\renewcommand{\emph}[1]{\darkblue{\em #1}}
\newcommand{\LBR}{\mbox{\raisebox{-2pt}{\includegraphics[page=1,height=10pt,width=7pt]{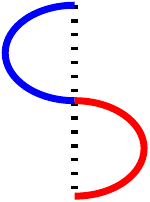}}\hspace{1pt}-$\langle L, B, R \rangle$}\xspace}
\newcommand{\RBL}{\mbox{\raisebox{-2pt}{\includegraphics[page=2,height=10pt,width=7pt]{icons.pdf}}\hspace{1pt}-$\langle R, B, L \rangle$}\xspace}
\newcommand{\RBR}{\mbox{\raisebox{-2pt}{\includegraphics[page=3,height=10pt,width=7pt]{icons.pdf}}\hspace{1pt}-$\langle R, B, R \rangle$}\xspace}
\newcommand{\LBL}{\mbox{\raisebox{-2pt}{\includegraphics[page=4,height=10pt,width=7pt]{icons.pdf}}\hspace{1pt}-$\langle L, B, L \rangle$}\xspace}
\newcommand{\RRR}{\mbox{\raisebox{-2pt}{\includegraphics[page=5,height=10pt,width=5pt]{icons.pdf}}\hspace{1pt}-$\langle R, R, R \rangle$}\xspace}
\newcommand{\QRRR}{\mbox{\raisebox{-2pt}{\includegraphics[page=19,height=10pt,width=5pt]{icons.pdf}}\hspace{1pt}-$\langle R, R, R \rangle$}\xspace}
\newcommand{\LLL}{\mbox{\raisebox{-2pt}{\includegraphics[page=6,height=10pt,width=5pt]{icons.pdf}}\hspace{1pt}-$\langle L, L, L \rangle$}\xspace}
\newcommand{\QLLL}{\mbox{\raisebox{-2pt}{\includegraphics[page=20,height=10pt,width=5pt]{icons.pdf}}\hspace{1pt}-$\langle L, L, L \rangle$}\xspace}
\newcommand{\NBR}{\mbox{\raisebox{-2pt}{\includegraphics[page=7,height=10pt,width=7pt]{icons.pdf}}\hspace{1pt}-$\langle N, B, R \rangle$}\xspace}
\newcommand{\NBL}{\mbox{\raisebox{-2pt}{\includegraphics[page=8,height=10pt,width=7pt]{icons.pdf}}\hspace{1pt}-$\langle N, B, L \rangle$}\xspace}
\newcommand{\RBN}{\mbox{\raisebox{-2pt}{\includegraphics[page=9,height=10pt,width=7pt]{icons.pdf}}\hspace{1pt}-$\langle R, B, N \rangle$}\xspace}
\newcommand{\LBN}{\mbox{\raisebox{-2pt}{\includegraphics[page=10,height=10pt,width=7pt]{icons.pdf}}\hspace{1pt}-$\langle L, B, N \rangle$}\xspace}
\newcommand{\NRR}{\mbox{\raisebox{-2pt}{\includegraphics[page=11,height=10pt,width=7pt]{icons.pdf}}\hspace{1pt}-$\langle N, R, R \rangle$}\xspace}
\newcommand{\NLL}{\mbox{\raisebox{-2pt}{\includegraphics[page=12,height=10pt,width=7pt]{icons.pdf}}\hspace{1pt}-$\langle N, L, L \rangle$}\xspace}
\newcommand{\RRN}{\mbox{\raisebox{-2pt}{\includegraphics[page=13,height=10pt,width=7pt]{icons.pdf}}\hspace{1pt}-$\langle R, R, N \rangle$}\xspace}
\newcommand{\LLN}{\mbox{\raisebox{-2pt}{\includegraphics[page=14,height=10pt,width=7pt]{icons.pdf}}\hspace{1pt}-$\langle L, L, N \rangle$}\xspace}
\newcommand{\NBN}{\mbox{\raisebox{-2pt}{\includegraphics[page=15,height=10pt,width=7pt]{icons.pdf}}\hspace{1pt}-$\langle N, B, N \rangle$}\xspace}
\newcommand{\NRN}{\mbox{\raisebox{-2pt}{\includegraphics[page=16,height=10pt,width=7pt]{icons.pdf}}\hspace{1pt}-$\langle N, R, N \rangle$}\xspace}
\newcommand{\NNN}{\mbox{\raisebox{-2pt}{\includegraphics[page=18,height=10pt,width=7pt]{icons.pdf}}\hspace{1pt}-$\langle N, N, N \rangle$}\xspace}
\begin{document}

\newcommand{\perugia}{$^1$}
\newcommand{\rome}{$^2$}
\newcommand{\kit}{$^3$}

\newcommand{\acks}{
  This work was supported in part 
  by project ``Algoritmi e sistemi di analisi visuale di reti complesse e di grandi dimensioni'' -- Ricerca di Base 2018, Dipartimento di Ingegneria dell'Università degli Studi di Perugia (Binucci, Di Giacomo, and Didimo),
  and in part
  by MIUR Project ``MODE'' under PRIN 20157EFM5C, 
  by MIUR Project ``AHeAD'' under PRIN 20174LF3T8, 
  by MIUR-DAAD JMP N$^\circ$ 34120, 
  by H2020-MSCA-RISE project 734922 -- ``CONNECT'',
  and by Roma Tre University Azione 4 Project ``GeoView'' ({Da Lozzo} and Patrignani).\xspace}

\title{Upward Book Embeddings of st-Graphs\thanks{\acks}}

\author{
Carla Binucci\perugia,
{Giordano {Da Lozzo}}\rome, 
Emilio {Di Giacomo}\perugia,\\
Walter Didimo\perugia,
Tamara Mchedlidze\kit,
Maurizio Patrignani\rome
}
\institute{
Universit\`a degli Studi di Perugia, Perugia, Italy\\
\href{mailto:carla.binucci@unipg.it,emilio.digiacomo@unipg.it,walter.didimo@unipg.it}{\{carla.binucci,emilio.digiacomo,walter.didimo\}@unipg.it}\\
\smallskip
Karlsruhe Institute of Technology, Karlsruhe, Germany\\
\href{mailto:mched@iti.uka.de}{mched@iti.uka.de}\\
\smallskip
Roma Tre University, Rome, Italy\\
\href{mailto:giordano.dalozzo@uniroma3.it,maurizio.patrignani@uniroma3.it}{\{giordano.dalozzo,maurizio.patrignani\}@uniroma3.it}
}

\maketitle

\begin{abstract} 
We study \emph{$k$-page upward book embeddings} ($k$UBEs) of $st$-graphs, that is, book embeddings of single-source single-sink directed acyclic graphs on $k$ pages with the additional requirement that the vertices of the graph appear in a topological ordering along the spine of the book.
We show that testing whether a graph admits a $k$UBE is NP-complete for $k\geq 3$. A hardness result for this problem was previously known only for $k = 6$ \href{https://epubs.siam.org/doi/10.1137/S0097539795291550}{[Heath and Pemmaraju, 1999]}.
Motivated by this negative result, we focus our attention on $k=2$. On the algorithmic side, we present polynomial-time algorithms for testing the existence of \UBEs of planar \mbox{$st$-graphs} with branchwidth~$\bwsymb$ and of plane $st$-graphs whose faces have a special structure. These algorithms run in $O(f(\bwsymb)\cdot n+n^3)$ time and $O(n)$ time, respectively, where $f$ is a singly-exponential function on $\bwsymb$. Moreover, on the combinatorial side, we present two notable families of plane \mbox{$st$-graphs} that always admit an embedding-preserving~\UBE.
\end{abstract}

\section{Introduction}\label{se:introduction}
A \emph{$k$-page book embedding} $\langle \pi, \sigma \rangle$ of an undirected graph $G=(V,E)$ consists of a vertex ordering \mbox{$\pi: V \leftrightarrow \{1,2,\dots,|V|\}$} and of an assignment $\sigma: E \rightarrow \{1,\dots,k\}$ of the edges of $G$ to one of $k$ sets, called \emph{pages}, so that for any two edges $(a,b)$ and $(c,d)$ in the same page, with $\pi(a) < \pi(b)$ and $\pi(c) < \pi(d)$, we have neither $\pi(a) < \pi(c)$ $< \pi(b) < \pi(d)$ nor $\pi(c) < \pi(a) $ $< \pi(d) < \pi(b)$. 
From a geometric perspective, a $k$-page book embedding can be associated with a \emph{canonical drawing} $\Gamma(\pi,\sigma)$ of $G$ 
where the $k$ pages correspond to $k$ half-planes sharing a vertical line, called the \emph{spine}. Each vertex $v$ is a point on the spine with $y$-coordinate $\pi(v)$; each edge $e$ is a circular arc on the $\sigma(e)$-th page, and the edges in the same page do not cross.

For $k$-page book embeddings of directed graphs (digraphs), a typical requirement is that all the edges are oriented in the upward direction. This implies that $G$ is acyclic and that all the vertices appear along the spine in a topological ordering.
This type of book embedding for digraphs is called an \emph{upward $k$-page book embedding} of $G$ (for short, \emph{$k$UBE}).
Note that, when $k=2$ and the two pages are coplanar, drawing $\Gamma(\pi,\sigma)$ is an \emph{upward planar drawing} of $G$, i.e., a planar drawing where all the edges monotonically increase in the upward direction. The study of upward planar drawings is a most prolific topic in the theory of graph visualization~\cite{DBLP:journals/talg/AngeliniLBDKRR18,DBLP:journals/algorithmica/AngeliniLBF17,DBLP:journals/algorithmica/BertolazziBD02,DBLP:journals/siamcomp/BertolazziBMT98,DBLP:journals/cj/BinucciD16,DBLP:journals/comgeo/Brandenburg14,DBLP:conf/gd/ChaplickCCLNPTW17,DBLP:conf/gd/LozzoBFPR18,DBLP:journals/tcs/BattistaT88,DBLP:journals/dcg/BattistaTT92,DBLP:journals/siamcomp/GargT01,DBLP:journals/cj/RextinH17}.

The \emph{page number} of a (di)graph $G$ (also called \emph{book thickness}) is the minimum number~$k$ such that $G$ admits a (upward) $k$-page book embedding. Computing the page number of directed and undirected graphs is a widely studied problem, which finds applications in a variety of domains, including VLSI design, fault-tolerant processing, parallel process scheduling, sorting networks, parallel matrix computations~\cite{Chung87,HeathLR92,pemmaraju92}, computational origami~\cite{AkitayaDHL17}, and graph drawing~\cite{DBLP:journals/jgaa/BiedlSWW99,GiacomoDLW06,GiordanoLMSW15,DBLP:conf/gd/Wood01}. See~\cite{DBLP:journals/dmtcs/DujmovicW04} for additional references.

\subparagraph{Book embeddings of undirected graphs.} Seminal results on book embeddings of undirected graphs are described in the paper of    
Bernhart and Kainen~\cite{BERNHART1979320}. They prove that the graphs with page number one are exactly the outerplanar graphs, while graphs with page number two are 
the sub-Hamiltonian graphs.
This second result implies that it is NP-complete to decide whether a graph admits a 2-page book embedding~\cite{Wigderson82}. Yannakakis~\cite{Yannakakis89} proved that every planar graph has a 4-page book embedding, while the fascinating question whether the page number of planar graphs can be reduced to three is still open. The aforementioned works have inspired several papers about the page number of specific families of undirected graphs (e.g.,~\cite{DBLP:journals/algorithmica/BekosBKR17,BekosGR16,Chung87,EnomotoNO97}) and about the relationship between the page number and other graph parameters (e.g.,~\cite{DujmovicW05,DBLP:journals/dam/GanleyH01,DBLP:journals/jal/Malitz94a,DBLP:journals/jal/Malitz94}).  
Different authors studied constrained versions of $k$-page book embeddings where either the vertex ordering $\pi$ is (partially) fixed~\cite{DBLP:journals/jgaa/AngeliniLBFPR17,DBLP:journals/heuristics/Cimikowski06,DBLP:journals/tc/MasudaNKF90,DBLP:conf/stacs/Unger88,DBLP:conf/stacs/Unger92} or the page assignment $\sigma$ for the edges is given~\cite{DBLP:conf/gd/AngeliniBB12,DBLP:journals/tcs/AngeliniLN15,DBLP:journals/jda/AngeliniBFPR12,DBLP:journals/tcs/HongN18}. 
Relaxed versions of book embeddings where edge crossings are allowed (called \emph{$k$-page drawings}) or where edges can cross the spine (called \emph{topological book embeddings}) have also been considered (e.g.,~\cite{DBLP:conf/compgeom/AbregoAFRS12,DBLP:conf/gd/BannisterE14,DBLP:journals/ejc/BinucciGHL18,DBLP:journals/comgeo/CardinalHKTW18,DBLP:journals/comgeo/GiacomoDLW05,em-egtpb-99,emo-lbneo-99}). Finally, 2-page (topological) book embeddings find applications to point-set embedding and universal point set (e.g.,~\cite{DBLP:journals/jgaa/AngeliniEFKLMTW14,DBLP:journals/tcs/BadentGL08,DBLP:journals/ijfcs/GiacomoLT06,DBLP:journals/algorithmica/GiacomoLT10,DBLP:journals/dcg/EverettLLW10,DBLP:conf/gd/LofflerT15}).

\subparagraph{Book embeddings of directed graphs.}
As for undirected graphs, there are many papers devoted to the study of upper and lower bounds on the page number of directed graphs.   
Heath et al.~\cite{HeathPT99} show that directed trees and unicyclic digraphs have page number one and two, respectively.  Alzohairi and Rival~\cite{Alzohairi97}, and later Di Giacomo et al.~\cite{GiacomoDLW06} with an improved linear-time construction, show that series-parallel digraphs have page number two. Mchedlidze and Symvonis~\cite{crossingFreeHpCompl09} generalize this result and prove that $N$-free upward planar digraphs, which contain series-parallel digraphs, also have page number two  (a digraph is upward planar if it admits an upward planar drawing). 
Frati et al.~\cite{FratiFR13} give several conditions under which upward planar triangulations have bounded page number. Overall, the question asked by Nowakowski and Parker~\cite{NowakowskiP89} almost 30 years ago, of whether the page number of upward planar digraphs is bounded, remains open. Several works study the page number of acyclic digraphs in terms of posets, i.e., the page number of their Hasse diagram (e.g.,~\cite{AlhashemJZ15,NowakowskiP89}). 

About the lower bounds, Nowakowski and Parker~\cite{NowakowskiP89} give an example of a \emph{planar $st$-graph} that requires three pages for an upward book embedding (see \cref{fi:no2ube}). A planar $st$-graph is an upward planar digraph with a single source $s$ and a single sink $t$. Hung~\cite{Hung89} shows an upward planar digraph with page number four, while Heath and Pemmaraju~\cite{HeathP97} describe an acyclic planar digraph (which is not upward planar) requiring $\lfloor n/2\rfloor$ pages. Syslo~\cite{Syslo89} provides a lower bound on the page number of a poset in terms of its bump number. 

Besides the study of upper and lower bounds on the page number of digraphs, several papers concentrate on the design of testing algorithms for the existence of $k$UBEs. The problem is NP-complete for $k=6$~\cite{HeathP99}. For $k=2$, Mchedlidze and Symvonis~\cite{MchedlidzeS11} give linear-time testing algorithms for outerplanar and planar triangulated $st$-graphs.
An $O(w^2n^w)$-time testing algorithm for $2$UBEs of planar $st$-graphs whose width is $w$ is given in~\cite{crossingFreeHpCompl09}, where the \emph{width} is the minimum number of directed paths that cover all the vertices. Heath and Pemmaraju~\cite{HeathP99} describe a linear-time algorithm to recognize digraphs that admit $1$UBEs. 

Finally, as for the undirected case, constrained or relaxed variants of $k$UBEs for digraphs are studied \cite{AkitayaDHL17,GiacomoGL11,GiordanoLMSW15}, as well as applications to the point-set embedding problem \cite{GiacomoDLW06,GiordanoLMSW15}.

\subparagraph{Contribution.} Our paper is motivated by the gap present in the literature about the computation of upward book embeddings of digraphs: Polynomial-time algorithms are known only for one page or for two pages and subclasses of planar digraphs, while NP-completeness is known only for exactly $6$ pages. 
We shrink this gap and address the research direction proposed by Heath and Pemmaraju~\cite{HeathP99}: Identification of graph classes for which the existence of $k$UBEs can be solved efficiently. Our results are as follows:

\begin{itemize}
	\item We prove that testing whether a digraph $G$ admits a $k$UBE is NP-complete for every $k \geq 3$, even if $G$ is an $st$-graph (\cref{se:complexity}).	An analogous result was previously known only for the constrained version in which the page assignment is given~\cite{AkitayaDHL17}.  
	
	\item We describe another meaningful subclass of upward planar digraphs that admit a \UBE (\cref{se:existential}). This class is structurally different from the $N$-free upward planar digraphs, the largest class of upward $2$-page book embeddable digraphs previously known.
    
    \item We give algorithms to test the existence of a \UBE for notable families of planar $st$-graphs. First, we give a linear-time algorithm for plane $st$-graphs whose faces have a special structure (\cref{se:testing-special}). Then, we describe an $O(f(\bw)\cdot n + n^3)$-time algorithm for $n$-vertex planar $st$-graphs of branchwidth $\bw$, where $f$ is a singly-exponential function (\cref{se:testing}). 
    The algorithm works for both variable and fixed embedding. This result also implies a sub-exponential-time algorithm for general planar $st$-graphs. 
\end{itemize}    

Full details for omitted or sketched proofs can be found in the Appendix.

\section{Preliminaries}\label{se:preliminaries}

We assume familiarity with basic definitions on graph connectivity and planarity (see, e.g.,~\cite{dett-gd-99} and \cref{se:app-prel}). We only consider  (di)graphs without loops and multiple edges, and we denote by $V(G)$ and $E(G)$ the sets of vertices and edges of a (di)graph $G$.  

A digraph $G$ is a \emph{planar $st$-graph} if and only if: (i) it is acyclic; (ii) it has a single source $s$ and a single sink $t$; and (iii) it admits a planar embedding $\mathcal E$ with $s$ and $t$ on the outer face. A graph $G$ together with $\mathcal E$ is a \emph{planar embedded $st$-graph}, also called a \emph{plane $st$-graph}.

Let $G$ be a plane $st$-graph and let $e=(u,v)$ be an edge of $G$. The \emph{left face} (resp. \emph{right face}) of $e$ is the face to the left (resp. right) of $e$ while moving from $u$ to $v$. The boundary of every face $f$ of $G$ consists of two directed paths $p_l$ and $p_r$ from a common source $s_f$ to a common sink $t_f$. The paths $p_l$ and $p_r$ are the \emph{left path} and the \emph{right path} of $f$, respectively. The vertices $s_f$ and $t_f$ are the \emph{source} and the \emph{sink} of $f$, respectively. If $f$ is the outer face, $p_l$ (resp. $p_r$) consists of the edges for which $f$ is the left face (resp. right face); in this case $p_l$ and $p_r$ are also called the \emph{left boundary} and the \emph{right boundary} of $G$, respectively. 
If $f$ is an internal face, $p_l$ (resp. $p_r$) consists of the edges for which $f$ is the right face (resp. left face).      

The \emph{dual graph} $G^*$ of a plane $st$-graph $G$ is a plane $st$-graph (possibly with multiple edges) such that: 
\begin{inparaenum}[(i)]
\item $G^*$ has a vertex associated with each internal face of $G$ and two vertices $s^*$~and~$t^*$ associated with the outer face of $G$, that are the source and the sink of $G^*$, respectively; 
\item for each internal edge $e$ of $G$, $G^*$ has a dual edge from the left to the right face of $e$; 
\item for each edge $e$ in the left boundary of $G$, there is an edge from $s^*$ to the right face of $e$; (v) for each edge $e$ in the right boundary of $G$, there is an edge from the left face of $e$ to $t^*$.     
\end{inparaenum}

Consider a planar $st$-graph $G$ and let $\overline{G}$ be a planar $st$-graph obtained by augmenting $G$ with directed edges in such a way that it contains a directed Hamiltonian $st$-path $P_{\overline{G}}$. The graph $\overline{G}$ is an \emph{\HPC} of $G$.
Consider now a plane $st$-graph $G$ and let $\mathcal E$ be a planar embedding of $G$. 
Let $\overline{G}$ be an embedded {\HPC} of $G$ whose embedding $\overline{\mathcal E}$ is such that its restriction to $G$ is $\mathcal{E}$. We say that $\overline{G}$ is an \emph{embedding-preserving} \emph{\HPC} of $G$.

Bernhart and Kainen~\cite{BERNHART1979320} prove that an undirected planar graph admits a $2$-page book embedding if and only if it is \emph{sub-Hamiltonian}, i.e., it can be made Hamiltonian by adding edges while preserving its planarity. \cref{th:2ube-hpc-preserving} is an immediate consequence of the result in~\cite{BERNHART1979320} for planar digraphs (see also~\cref{fi:2ube-hpc}); when we say that a \UBE $\langle \pi,\sigma \rangle$ is \emph{embedding-preserving} we mean that the drawing $\Gamma(\pi,\sigma)$ preserves the planar embedding of $G$. 

\begin{theorem}\label{th:2ube-hpc-preserving}
	A planar (plane) $st$-graph $G$ admits a (embedding-preserving) \UBE $\langle \pi,\sigma \rangle$ if and only if $G$ admits a (embedding-preserving) \HPC $\overline{G}$. Also, the order $\pi$ coincides with the order of the vertices along $P_{\overline{G}}$.
\end{theorem}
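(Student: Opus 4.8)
The plan is to prove the two directions by explicit constructions, reducing the crossing-free page assignment to the undirected sub-Hamiltonicity characterization of Bernhart and Kainen~\cite{BERNHART1979320} and supplying the upward/directed and embedding-preserving refinements separately. The one directed fact I would isolate at the outset is that a directed Hamiltonian $st$-path in an acyclic digraph is automatically a topological order: if $(v_j,v_i)$ were an edge with $j>i$ along the path $v_1\to\cdots\to v_n$, it would close a directed cycle together with the path segment $v_i\to\cdots\to v_j$. This reconciles the ``topological order along the spine'' requirement of a \UBE with the mere existence of a Hamiltonian path.

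For the ``only if'' direction I start from a \UBE $\langle\pi,\sigma\rangle$ and write $v_1,\dots,v_n$ for the spine order $\pi$. Since $G$ is an $st$-graph and $\pi$ is topological, the unique source and sink force $v_1=s$ and $v_n=t$. I then add every missing edge $(v_i,v_{i+1})$ and draw it along the spine; because the spine is the common boundary of the two pages, such an edge meets no arc of either page, so the canonical drawing $\Gamma(\pi,\sigma)$ extends to a planar drawing of the augmented graph $\overline G$. This $\overline G$ is acyclic (all new edges respect $\pi$), retains $s$ and $t$ as its only source and sink, and contains the Hamiltonian $st$-path $v_1\to\cdots\to v_n$; hence it is an \HPC whose path order is exactly $\pi$. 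In the plane case, inserting each spine edge into the rotation at $v_i$ in the gap between its left-page and right-page edges keeps the embedding restricting to $\mathcal E$, so $\overline G$ is embedding-preserving.

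For the ``if'' direction I take an \HPC $\overline G$ with Hamiltonian path $P_{\overline G}=v_1,\dots,v_n$, put $\pi(v_i)=i$, and lay the vertices on the spine in this order, which is topological by the opening observation, so every edge of $\overline G$ (and hence of $G$) points upward. For the page assignment I invoke the Bernhart--Kainen argument: since the endpoints $s=v_1$ and $t=v_n$ lie on the outer face, I route a virtual edge $(t,s)$ through it to turn $P_{\overline G}$ into a Hamiltonian cycle $C$, and in the embedding $\overline{\mathcal E}$ every remaining edge becomes a chord lying inside or outside $C$. Assigning the interior chords to one page and the exterior chords to the other yields two crossing-free pages, and discarding the virtual edge and restricting to $E(G)$ gives the claimed \UBE with $\pi$ equal to the path order. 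The main obstacle is precisely this step: one must verify that ``inside versus outside $C$'' matches the combinatorial non-crossing condition, i.e., that two chords on the same side of $C$ fail to interleave in the cyclic order along $C$ exactly when they may share a page once $C$ is unrolled onto the spine. This is the content of the Bernhart--Kainen correspondence, so the remaining work is only to confirm that the refinements survive: the upward condition is immediate from the topological-order observation, and embedding preservation holds in both directions because the inside/outside split of $C$ is read off from $\overline{\mathcal E}$, whose restriction to $G$ is $\mathcal E$.
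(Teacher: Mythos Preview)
Your proposal is correct and matches the paper's approach: the paper does not give a detailed proof of this theorem but states it as ``an immediate consequence of the result in~\cite{BERNHART1979320} for planar digraphs,'' and your argument is precisely the natural elaboration of that claim---invoking the Bernhart--Kainen inside/outside-of-$C$ page split for the core crossing-free step and supplying the directed (topological order from the Hamiltonian $st$-path) and embedding-preserving refinements explicitly. There is nothing to add.
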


\begin{figure}[t]
	\centering
	\begin{subfigure}{.18\textwidth}
		\centering
		\includegraphics[width=\columnwidth, page=1]{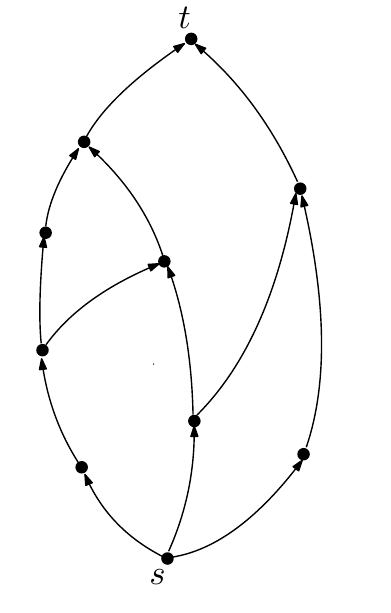}
		\subcaption{}
		\label{fi:2ube-hpc.1}
	\end{subfigure}
	\begin{subfigure}{.36\textwidth}
		\centering
		\includegraphics[width=\columnwidth, page=4]{2ube-hpc}
		\subcaption{}
		\label{fi:2ube-hpc.4}
	\end{subfigure}
	\begin{subfigure}{.18\textwidth}
		\centering
		\includegraphics[width=\columnwidth, page=2]{2ube-hpc}
		\subcaption{}
		\label{fi:2ube-hpc.2}
	\end{subfigure}
	\begin{subfigure}{.18\textwidth}
		\centering
		\includegraphics[width=\columnwidth, page=3]{2ube-hpc}
		\subcaption{}
		\label{fi:2ube-hpc.3}
	\end{subfigure}  
	\caption{\small{(a) A plane $st$-graph $G$. (b) The dual of $G$ is shown in gray. (c) An embedding-preserving \HPC of~$G$. (d)  An embedding-preserving \UBE $\Gamma$ of $G$ corresponding to (c). }}\label{fi:2ube-hpc}
\end{figure}

\section{NP-Completeness for $\mathbf{k}$UBE ($\mathbf{k \geq 3}$)}\label{se:complexity}
We prove that the \KUBE problem of deciding whether a digraph $G$ admits an upward $k$-page book embedding is NP-complete for each fixed $k \geq 3$.  
The proof uses a reduction from the \BETW problem~\cite{Opatrny79}.

\begin{quote}
	\textsc{Betweenness}
	
	\textit{Instance:}  
	\begin{minipage}[t]{0.85\linewidth}
		A finite set $S$ of elements and a set $R \subseteq S \times S \times S$ of triplets.  
	\end{minipage}
	
	\textit{Question:} 
	\begin{minipage}[t]{0.85\linewidth}
		Does there exist an ordering $\tau: S \rightarrow \mathbb{N}$ of the elements of $S$ such that for any element $(a,b,c) \in R$ either $\tau(a) < \tau(b) <\tau(c)$ or $\tau(c) < \tau(b) < \tau(a)$?	
	\end{minipage} 
\end{quote}

We incrementally define a set of families of digraphs and prove some properties of these digraphs. Then, we use the digraphs of these families to reduce a generic instance of \BETW to an instance of \TUBE, thus proving the hardness result for $k=3$. We then explain how the proof can be easily adapted to work for $k>3$.  

For a digraph $G$, we denote by $u \leadsto v$ a directed path from a vertex $u$ to a vertex $v$ in $G$. Let $\gamma=\langle \pi, \sigma \rangle$ be a 3UBE of $G$. Two edges $(u,v)$ and $(w,z)$ of $G$ \emph{conflict} if either $\pi(u) < \pi(w) < \pi(v) < \pi(z)$ or $\pi(w) < \pi(u) < \pi(z) < \pi(v)$. Two conflicting edges cannot be assigned to the same page. The next property will be used in the following; it is immediate from the definition of book embedding and from the pigeonhole principle.

\begin{property}\label{pr:conflicting}
	In a 3UBE there cannot exist $4$ edges that mutually conflict.
\end{property} 

\subparagraph{Shell digraphs.}  
The first family that we define are the \emph{\shell{}s}, recursively defined as follows. Digraph $G_0$, depicted in \cref{fi:hardness-a}, consists of a directed path $P$ with $8$ vertices denoted as $s_0$, $q_0$, $p_{-1}$,  $t_{-1}$, $s'_0$, $q'_0$, $t'_0$, and $p_0$ in the order they appear along $P$. Besides the edges of $P$, the following directed edges exists in $G_0$: $(s_0,s'_0)$, $(q_0,q'_0)$, $(t_{-1},p_0)$. Finally, there is a vertex $t_0$ connected to $P$ by means of the two directed edges $(p_{-1},t_0)$ and $(t'_0,t_0)$. Graph $G_h$ is obtained from $G_{h-1}$ with additional vertices and edges as shown in  \cref{fi:hardness-b}. A new directed path of two vertices $s_h$ and $q_h$ is connected to $G_{h-1}$ with the edge $(q_h,s_{h-1})$; a second path of four vertices $s'_{h}$, $q'_{h}$, $t'_{h}$, and $p_h$ is connected to $G_h$ with the edge $(t_{h-1},s'_h)$. The following edges exist between these new vertices: $(s_h,s'_h)$, $(q_h,q'_h)$, $(t_{h-1},p_h)$. Finally, there is a vertex $t_h$ connected to the other vertices by means of the two directed edges $(p_{h-1},t_h)$ and $(t'_h,t_h)$. For any $h \geq 0$, the edges $(s_h,s'_h)$ and $(q_h,q'_h)$ are called the \emph{forcing edges} of $G_h$; the edges  $(p_{h-1},t_h)$ and $(t_{h-1},p_h)$ are the \emph{channel edges} of $G_h$; the edge $(t'_h,t_h)$ is the \emph{closing edge} of $G_h$. The vertices and edges of $G_h \setminus G_{h-1}$ are the \emph{exclusive vertices and edges} of $G_h$. The following lemma establishes some basic properties of the \shell{}s. 

\begin{figure}[!t]
	\centering
	\begin{subfigure}{.4\textwidth}
		\centering
		\includegraphics[width=\columnwidth, page=1]{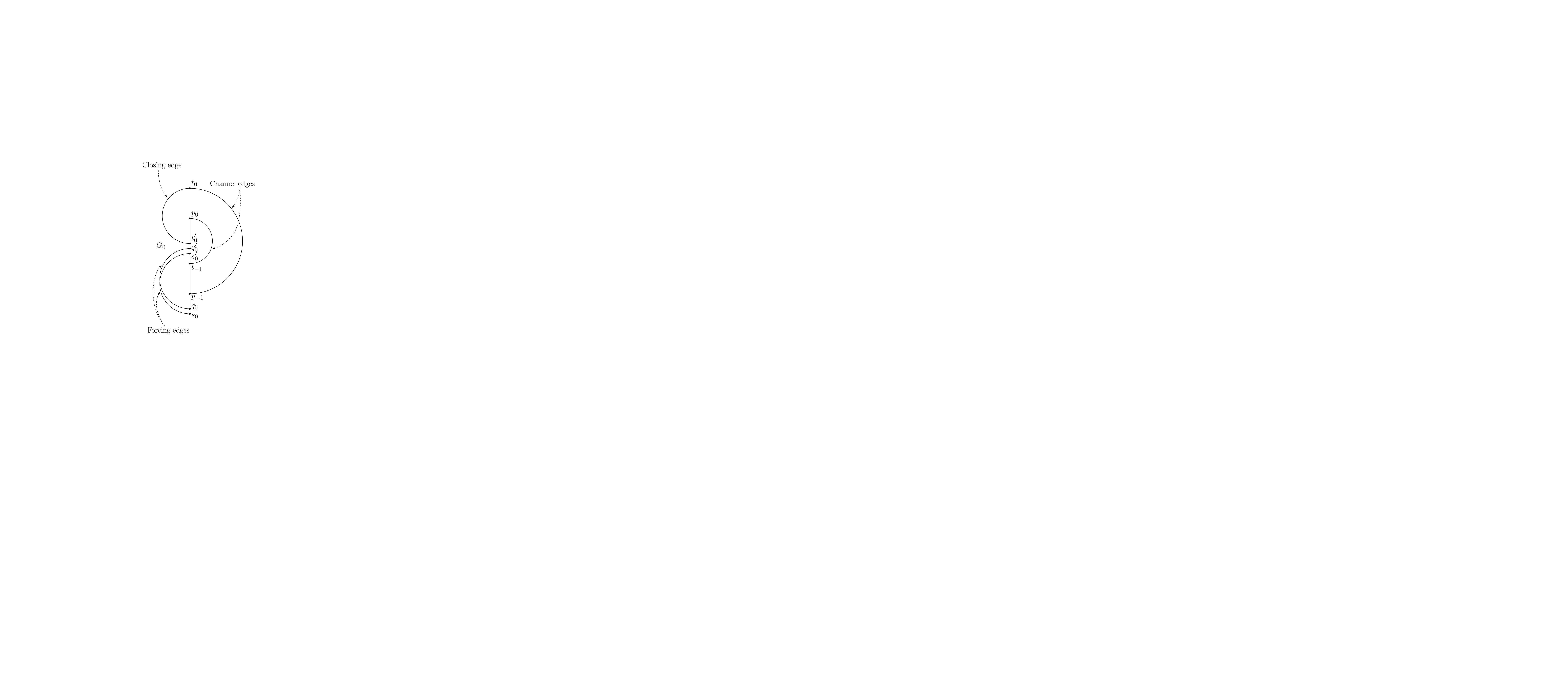}
		\subcaption{$G_0$}
		\label{fi:hardness-a}
	\end{subfigure}
	\hfil
	\begin{subfigure}{.4\textwidth}
		\centering
		\includegraphics[width=\columnwidth, page=2]{hardness}
		\subcaption{$G_k$}
		\label{fi:hardness-b}
	\end{subfigure}
	\caption{\small{Definition of \shell{}s. Edges are oriented from bottom to top. }}\label{fi:hardness}
\end{figure}

\begin{restatable}{lemma}{leshell}\label{le:shell} Every \shell{} $G_h$ for $h \geq 0$ admits a 3UBE. In any 3UBE $\gamma=\langle \pi, \sigma \rangle$ of $G_h$ the following conditions hold for every $i=0,1,\dots,h$:
	\begin{enumerate}
		\item[\textsf{S1}] all vertices of $G_i$ are between $s_i$ and $t_i$ in $\pi$;
		\item[\textsf{S2}] the channel edges of $G_i$ are in the same page;
		\item[\textsf{S3}] if $i > 0$, the channel edges of $G_i$ and those of $G_{i-1}$ are in different pages.
	\end{enumerate} 
\end{restatable}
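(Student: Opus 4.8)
The plan is to establish the two assertions of the lemma separately. Existence is the easy direction: I would place the vertices in the nested order forced by \textsf{S1}, namely $s_h,q_h,\dots,s_0,q_0,p_{-1},t_{-1},s'_0,q'_0,t'_0,p_0,t_0,s'_1,\dots,t_1,\dots,s'_h,q'_h,t'_h,p_h,t_h$, which is readily checked to be topological, and then exhibit an assignment of the edges to three pages (forcing edges of a level split across two pages, the two channel edges on the third, alternating the channel page between consecutive levels); a direct check confirms that three pages suffice and that no two edges on a common page conflict. For the structural statement I would induct on $h$. The key enabling observation is that the restriction of any 3UBE $\gamma=\langle\pi,\sigma\rangle$ of $G_h$ to the induced subgraph $G_{h-1}$ is again a 3UBE of $G_{h-1}$ (a restricted topological order stays topological, and deleting edges never creates a page conflict); hence by the inductive hypothesis \textsf{S1}--\textsf{S3} hold for all $i\le h-1$, and only $i=h$ remains.

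Everything for level $h$ hinges on one ordering fact, which I would read off from the edge set together with \textsf{S1} for $G_{h-1}$ (which confines $G_{h-1}$ between $s_{h-1}$ and $t_{h-1}$, so in particular $p_{h-1}<t_{h-1}$): in $\pi$ one has $s_h<q_h<\dots<p_{h-1}<t_{h-1}<s'_h<q'_h<t'_h$, while $p_h$ and $t_h$ both lie after $t'_h$ but are ordered by no edge relative to each other. A direct check of the interleaving pattern in the definition of conflict then shows two things, which form the heart of the proof: the forcing edges $(s_h,s'_h)$ and $(q_h,q'_h)$ together with the channel edge $(p_{h-1},t_h)$ are pairwise conflicting and therefore occupy all three pages; and the second channel edge $(t_{h-1},p_h)$ conflicts with both forcing edges.

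The three conditions then follow quickly. For \textsf{S1}: if $\pi(p_h)>\pi(t_h)$, then $(p_{h-1},t_h)$ and $(t_{h-1},p_h)$ would also interleave, producing four pairwise conflicting edges and contradicting \cref{pr:conflicting}; hence $\pi(p_h)<\pi(t_h)$, so $t_h$ is the last vertex and, with the displayed order, every vertex of $G_h$ lies between the source $s_h$ and $t_h$. For \textsf{S2}: the three pairwise conflicting edges already use all three pages, and $(t_{h-1},p_h)$ conflicts with both forcing edges, so it must lie on the one remaining page, which is exactly the page of $(p_{h-1},t_h)$; thus both channel edges share a page. For \textsf{S3} (with $h\ge1$): using \textsf{S1} for the inner shells to locate $p_{h-2}$, one checks that the channel edge $(p_{h-2},t_{h-1})$ of $G_{h-1}$ conflicts with the channel edge $(p_{h-1},t_h)$ of $G_h$, and since by \textsf{S2} the channel edges of each level share a page, these two pages must differ. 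The base case $h=0$ is handled by the identical conflict analysis, with the path vertices $p_{-1},t_{-1}$ playing the role of $s_{h-1},t_{h-1}$ and condition \textsf{S3} being vacuous.

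The main obstacle, and the step that needs the most care, is fixing the relative order of the nine relevant vertices and reading off the exact interleavings, since the entire argument rests on correctly identifying the triple of mutually conflicting edges that saturates the three pages together with the fourth edge whose conflict with that triple is governed precisely by the unknown order of $p_h$ and $t_h$. Once this configuration is pinned down, the remaining derivations of \textsf{S1}--\textsf{S3} are routine bookkeeping.
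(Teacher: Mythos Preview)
Your proposal is correct and follows essentially the same approach as the paper's proof: both argue by induction on $h$, pin down the vertex ordering from the directed paths together with \textsf{S1} for $G_{h-1}$, and then use the observation that the two forcing edges and the two channel edges of level $h$ would form four mutually conflicting edges (violating \cref{pr:conflicting}) unless $t_h$ comes after $p_h$, from which \textsf{S1}--\textsf{S3} follow by the same page-counting you describe. The only organizational difference is that you cleanly separate the existence construction from the structural claims, whereas the paper interleaves them within a single induction; the underlying conflict analysis is identical.
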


\medskip
Note that Condition \textsf{S1} uniquely defines the vertex ordering of $G_h$ in every 3UBE. Namely, the path $s_h \leadsto p_0$ precedes each path $t_{i-1} \leadsto p_i$ (for $i=1,\dots,h$), and each path $t_{i-1} \leadsto p_i$ precedes the path $t_{i} \leadsto p_{i+1}$ (for $i=1,\dots,h-1$) (see \cref{fi:hardness-c} for an example with $h=2$).

\begin{figure}[!t]
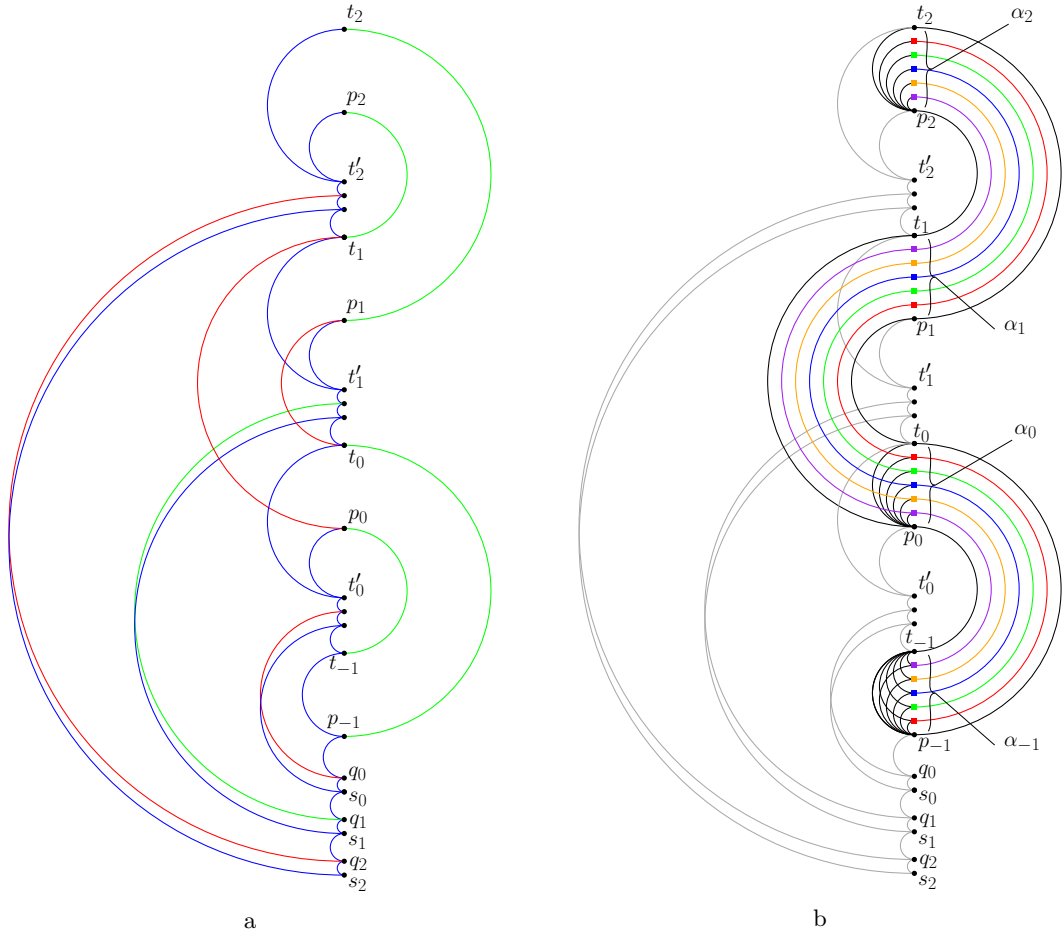

	\centering
	\begin{subfigure}{.4\textwidth}
		\centering
		\includegraphics[width=\columnwidth, page=3]{hardness}
		\subcaption{}
		\label{fi:hardness-c}
	\end{subfigure}
	\hfil
	\begin{subfigure}{.4\textwidth}
		\centering
		\includegraphics[width=\columnwidth, page=4]{hardness}
		\subcaption{}
		\label{fi:hardness-d}
	\end{subfigure}
	\caption{\small{(a) A 3UBE of the \shell{} $G_2$; the colors of the edges represent the pages. (b) Definition of $H_{h,s}$ for $h=2$ and $s=5$. In both figures edges are oriented from bottom to top.}}\label{fi:hardness-2}
\end{figure}

\subparagraph{Filled shell digraphs.} 
Let $G_h$ be a \shell{}. A \emph{\filled{}} $H_{h,s}$ (for $h \geq 0$ and $s \geq 1$) is obtained from $G_h$ by adding $h+2$ groups $\alpha_{-1}, \alpha_0, \dots, \alpha_h$ of $s$ vertices each; see \cref{fi:hardness-d} for an illustration.  The vertices of group $\alpha_i$ are denoted as $v_{i,1}, v_{i,2}, \dots v_{i,s}$. These vertices will be used to map the elements of the set $S$ of an instance of \BETW to an instance of \TUBE. For each vertex $v_{-1,j}$ of the set $\alpha_{-1}$ there is a directed edge $(p_{-1},v_{-1,j})$ and a directed edge $(v_{-1,j},t_{-1})$. For each vertex $v_{i,j}$ of the set $\alpha_i$ with $i \geq 0$ and $i$ even, there is a directed edge  $(p_i,v_{i,j})$. Finally,  for each vertex $v_{i,j}$ of the set $\alpha_i$ with $i \geq 0$, there is a directed edge  $(v_{i-1,j},v_{i,j})$. 

\begin{restatable}{lemma}{lefilled}\label{le:filled}
	Every \filled{} $H_{h,s}$ for $s>0$ and even $h \geq 0$ admits a 3UBE. In any 3UBE $\gamma=\langle \pi, \sigma \rangle$ of $H_{h,s}$ the following conditions hold for every $i=-1,0,1,\dots,h$:
	\begin{enumerate}
		\item[\textsf{F1}] the vertices of the group $\alpha_{i}$ are between $p_{i}$ and $t_{i}$ in $\pi$;
		\item[\textsf{F2}] if $i \geq 0$ the vertices of $\alpha_{i}$ are in reverse order with respect to those of  $\alpha_{i-1}$ in $\pi$;
		\item[\textsf{F3}] if $i \geq 0$ each edge $(v_{i-1,j},v_{i,j})$ is in the page of the channel edges of $G_i$ (for $j=1,\dots,s$).
	\end{enumerate} 
\end{restatable}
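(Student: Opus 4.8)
The plan is to treat the two assertions separately: first exhibit one 3UBE, then show that \emph{every} 3UBE obeys \textsf{F1}--\textsf{F3} by induction on $i$. Throughout I write $c_i\in\{1,2,3\}$ for the page carrying both channel edges of $G_i$; this page is well defined by Condition \textsf{S2} of \cref{le:shell}, and $c_i\neq c_{i-1}$ by \textsf{S3}. I also use repeatedly that the spine order of the underlying $G_h$ is forced by \textsf{S1} (the remark after \cref{le:shell}), so that $p_{i-1}<t_{i-1}<p_i<t_i$ along the spine and the gaps between consecutive $p_i$ and $t_i$ are empty, and that $\gamma$ is upward, so every edge runs from its lower to its higher endpoint.

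\emph{Existence.} I start from a 3UBE of $G_h$, which exists by \cref{le:shell} and whose spine order is the forced one. I insert the $s$ vertices of each group $\alpha_i$ into the empty portion of the spine between $p_i$ and $t_i$, placing $\alpha_i$ in the reverse of the order chosen for $\alpha_{i-1}$ (the order of $\alpha_{-1}$ being arbitrary). Every chain edge $(v_{i-1,j},v_{i,j})$ is assigned to page $c_i$, and the remaining incident edges---namely $(p_{-1},v_{-1,j})$, $(v_{-1,j},t_{-1})$, and, for even $i$, the anchors $(p_i,v_{i,j})$---are placed on a page that is free inside the relevant slot. It then remains to check that no two edges share a page and conflict: the reverse ordering makes the chain edges of a fixed level pairwise nested, and each of them nests with the two channel edges of $G_i$ (it contains $(t_{i-1},p_i)$ and is contained in $(p_{i-1},t_i)$), so page $c_i$ stays planar; the short anchor and $\alpha_{-1}$ edges live inside a single slot and are easily nested on a page avoided by the surrounding channel edges.

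\emph{Forcing.} For the base case $i=-1$, the upward edges $(p_{-1},v_{-1,j})$ and $(v_{-1,j},t_{-1})$ give $\pi(p_{-1})<\pi(v_{-1,j})<\pi(t_{-1})$, which is \textsf{F1}; \textsf{F2} and \textsf{F3} are vacuous. For the inductive step I assume \textsf{F1}--\textsf{F3} for $i-1$ and prove them for $i$ in the order \textsf{F1}, \textsf{F3}, \textsf{F2}. For \textsf{F1} the chain edge yields $\pi(v_{i,j})>\pi(v_{i-1,j})$ and, when $i$ is even, the anchor $(p_i,v_{i,j})$ gives the sharper $\pi(v_{i,j})>\pi(p_i)$; to confine $v_{i,j}$ to the slot $(p_i,t_i)$ I show that letting it fall outside produces, together with the channel edges of $G_i$, the closing edge $(t'_i,t_i)$, and a channel edge of an adjacent shell, four pairwise-conflicting edges, contradicting \cref{pr:conflicting}. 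For \textsf{F3}, the chain edge already nests with both channel edges of $G_i$ (all on page $c_i$); I argue that each of the other two pages hosts an edge conflicting with it---a forcing edge of $G_i$ on one side and a channel or forcing edge of a neighbouring level on the other---so by \cref{pr:conflicting} the chain edge can only go on $c_i$. Finally \textsf{F2} follows cleanly: by \textsf{F1} for $i-1$ and $i$ all of $\alpha_{i-1}$ precedes all of $\alpha_i$ on the spine, and by \textsf{F3} all $s$ chain edges lie on the single page $c_i$, so no two of them may conflict; hence the edge with the earlier left endpoint must have the later right endpoint, i.e.\ $\alpha_i$ appears in the reverse order of $\alpha_{i-1}$.

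\emph{Main obstacle.} The routine parts are the existence construction and the derivation of \textsf{F2} from \textsf{F1} and \textsf{F3}. The real work is the conflict bookkeeping behind \textsf{F1} and \textsf{F3}: one must locate, using the forced interleaving of the forcing, channel, closing, and anchor edges guaranteed by \cref{le:shell}, explicit quadruples of mutually conflicting edges and invoke \cref{pr:conflicting}. This is exactly where the hypotheses enter---$s>0$ guarantees at least one chain to pin down, and the parity of $h$ (hence the presence of the anchors $(p_i,v_{i,j})$ at even levels) seeds the conflicts that identify the channel page $c_i$---and keeping this case analysis correct across both parities of $i$ is the delicate point.
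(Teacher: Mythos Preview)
Your overall plan—separate existence from forcing, then derive \textsf{F2} from \textsf{F1} and \textsf{F3}—matches the paper, and the existence sketch and the \textsf{F2} derivation are fine. The trouble is in the forcing part.

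First, the quadruple you propose for \textsf{F1} (``the channel edges of $G_i$, the closing edge $(t'_i,t_i)$, and a channel edge of an adjacent shell'') cannot work: the two channel edges of $G_i$ lie in the \emph{same} page by \textsf{S2} and hence do not conflict with one another, and in any case none of these edges involves $v_{i,j}$, so their conflict pattern is independent of where $v_{i,j}$ sits on the spine. The set that actually does the job is the chain edge $(v_{i-1,j},v_{i,j})$ together with \emph{one} channel edge of $G_i$ and the \emph{two forcing edges} of $G_i$; those four are pairwise conflicting precisely when $v_{i,j}$ falls outside the slot $(p_i,t_i)$.

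Second, your one-level-at-a-time induction on $i$ stalls at odd levels. For odd $i$ there is no anchor $(p_i,v_{i,j})$, so from \textsf{F1} at level $i-1$ and the chain edge you only get $\pi(v_{i,j})>\pi(p_{i-1})$, not $\pi(v_{i,j})>\pi(p_i)$. The missing lower bound is obtained by looking \emph{forward} to the anchor $(p_{i+1},v_{i+1,j})$ at the next even level and applying the conflict argument to the chain edge $(v_{i,j},v_{i+1,j})$; this is exactly why the paper inducts on $h$ in steps of two, treating the odd level $h-1$ and the even level $h$ together. Your scheme could be repaired by processing levels in pairs, but as written the inductive hypothesis at $i-1$ is not enough.

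For \textsf{F3} you are overcomplicating: there is no need to reach for a ``neighbouring level''. The chain edge $(v_{i-1,j},v_{i,j})$ conflicts with \emph{both} forcing edges of $G_i$, which already occupy two distinct pages; that forces it onto the third page, which is $c_i$.
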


\medskip 
Observe that, by Condition \textsf{F2}, all groups $\alpha_i$ with even index have the same ordering in $\pi$ and all groups with odd index have the opposite order. As mentioned above the vertices in the groups $\alpha_i$ will correspond to the elements of the set $S$ of an instance of \BETW in the reduced instance of \TUBE. If the reduced instance admits a 3UBE, the order of the groups in $\pi$ will give the desired order for the instance of \BETW.

\subparagraph{$\mathbf{\Lambda}$-filled shell digraphs and hardness proof.}
Starting from a \filled{} $H_{h,s}$, a \emph{\gadgeted{}} $\widehat{H}_{h,s}$ is obtained  by replacing some edges with a gadget that has two possible configurations in any 3UBE of $\widehat{H}_{h,s}$. More precisely, we replace each edge $(t'_i,p_i)$ of $H_{h,s}$ for $i$ odd with the gadget shown in \cref{fi:hardness-e}. The gadget replacing $(t'_i,p_i)$  will be denoted as $\Lambda_i$. 
Notice that, this replacement preserves Conditions \textsf{F1}--\textsf{F3} of \cref{le:filled}.  

\begin{figure}[!t]
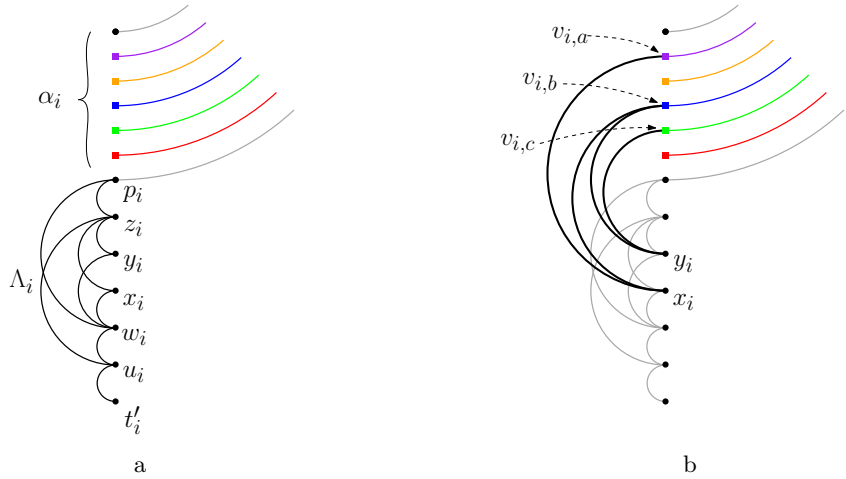

	\centering
	\begin{subfigure}{.35\textwidth}
		\centering
		\includegraphics[width=\columnwidth, page=5]{hardness}
		\subcaption{}
		\label{fi:hardness-e}
	\end{subfigure}
	\hfil
	\begin{subfigure}{.35\textwidth}
		\centering
		\includegraphics[width=\columnwidth, page=6]{hardness}
		\subcaption{}
		\label{fi:hardness-f}
	\end{subfigure}
	\caption{\small{(a) A gadget $\Lambda_i$ (black edges). (b) The triplet edges of $G_i$ (bold edges). }}\label{fi:hardness-3}
\end{figure} 

\begin{restatable}{lemma}{legadget}\label{le:gadgeted}
	Every \gadgeted{} $\widehat{H}_{h,s}$ for $s>0$ and even $h \geq 0$ admits a 3UBE. In any 3UBE $\gamma=\langle \pi, \sigma \rangle$ of $\widehat{H}_{h,s}$ the following conditions hold for every $i=1,3,\dots,h-1$:
	\begin{enumerate}
		\item[\textsf{G1}] the vertices of the gadget $\Lambda_{i}$ are between $t'_{i}$ and $p_{i}$ in $\pi$;
		\item[\textsf{G2}] the vertices $x_{i}$ and $y_{i}$ are between $w_{i}$ and $z_{i}$ in $\pi$ and there exists a 3UBE $\gamma'=\langle \pi', \sigma' \rangle$ of $\widehat{H}_{h,s}$ where the order of $x_i$ and $y_i$ is exchanged in $\pi'$.
	\end{enumerate} 
\end{restatable}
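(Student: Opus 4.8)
The plan is to prove the three assertions in turn, reusing the structure already pinned down by \cref{le:filled}. For the existence of a 3UBE, I would start from a 3UBE of the underlying \filled{} $H_{h,s}$, which \cref{le:filled} guarantees, and realize each gadget $\Lambda_i$ (for odd $i$) locally, inside the spine interval formerly occupied by the replaced edge $(t'_i,p_i)$. Since $\Lambda_i$ replaces a single edge on the directed path $s'_i\to q'_i\to t'_i\to p_i$ and keeps each of its internal vertices on a directed path from $t'_i$ to $p_i$, placing its vertices in that interval respects the topological order demanded by a UBE. I would then give an explicit page assignment for the gadget's internal edges, nesting them within the page that $(t'_i,p_i)$ used, and verify by direct inspection that no two edges sharing a page conflict. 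Fixing, say, $x_i$ before $y_i$ then yields one concrete valid 3UBE.

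For Condition \textsf{G1}, I expect the betweenness to follow directly from acyclicity: in any 3UBE the spine order $\pi$ is a topological order, and because $\Lambda_i$ provides a directed path $t'_i\leadsto v\leadsto p_i$ through each of its vertices $v$, we obtain $\pi(t'_i)<\pi(v)<\pi(p_i)$. The same reasoning handles the first half of \textsf{G2}: the gadget's paths $w_i\leadsto x_i\leadsto z_i$ and $w_i\leadsto y_i\leadsto z_i$ force both $x_i$ and $y_i$ strictly between $w_i$ and $z_i$ in $\pi$, while the absence of a directed edge between $x_i$ and $y_i$ means acyclicity alone leaves their relative order undetermined.

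The heart of the lemma is the second half of \textsf{G2}, namely the existence of a 3UBE in which $x_i$ and $y_i$ are exchanged. My plan is a local-exchange argument: given an arbitrary 3UBE $\gamma$, I would swap only the spine positions of $x_i$ and $y_i$ and, if needed, reassign the pages of the edges incident to them, leaving the rest of $\gamma$ untouched so that a single gadget can be flipped independently of all the others. By \textsf{G1} and the first half of \textsf{G2}, no vertex outside $\Lambda_i$ lies strictly between $x_i$ and $y_i$, so the swap creates no new conflict with any edge disjoint from the gadget; it then remains only to recolor the finitely many gadget-internal edges and to check the finitely many pairs among them, exhibiting the recoloring explicitly. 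The main obstacle is precisely this page bookkeeping: the pages of the channel edges and of the edges $(v_{i-1,j},v_{i,j})$ are already fixed by the inherited constraints \textsf{S2}--\textsf{S3} of \cref{le:shell} and \textsf{F3} of \cref{le:filled}, so I must verify that exactly enough freedom survives to recolor the gadget's edges after the exchange, and no more. This is where the gadget's design---its two, and only two, admissible configurations---does the work, so I would organize the final step as an explicit case analysis of the page each gadget edge may take before and after swapping $x_i$ and $y_i$, using \cref{pr:conflicting} to rule out any configuration that would force four mutually conflicting edges.
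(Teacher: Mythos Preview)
Your proposal is correct and follows essentially the same approach as the paper: both arguments insert each gadget locally into the spine interval vacated by the edge $(t'_i,p_i)$, derive \textsf{G1} and the first half of \textsf{G2} from topological-order constraints, and handle the second half of \textsf{G2} by a local swap of $x_i$ and $y_i$ followed by a page adjustment.

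Two small remarks. First, your justification ``by \textsf{G1} and the first half of \textsf{G2}, no vertex outside $\Lambda_i$ lies strictly between $x_i$ and $y_i$'' is not quite self-contained: \textsf{G1} and \textsf{G2} pin down where the gadget vertices lie, not where the non-gadget vertices do \emph{not} lie. What actually gives you this is the rigidity of the spine order of $H_{h,s}$ established in \cref{le:shell,le:filled}, which makes $t'_i$ and $p_i$ consecutive before the gadget is inserted. Second, the paper's recoloring step is much lighter than the case analysis you plan: swapping $x_i$ and $y_i$ creates exactly one new conflict, namely between $(w_i,x_i)$ and $(y_i,z_i)$, and these two edges conflict with nothing else (only $x_i$ and $y_i$ lie strictly between $w_i$ and $z_i$); hence it suffices to move one of them to another page if they happen to share one, with no need for \cref{pr:conflicting} or an exhaustive case split.
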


\begin{figure}[!t]
	\centering
	\includegraphics[width=\columnwidth, page=8]{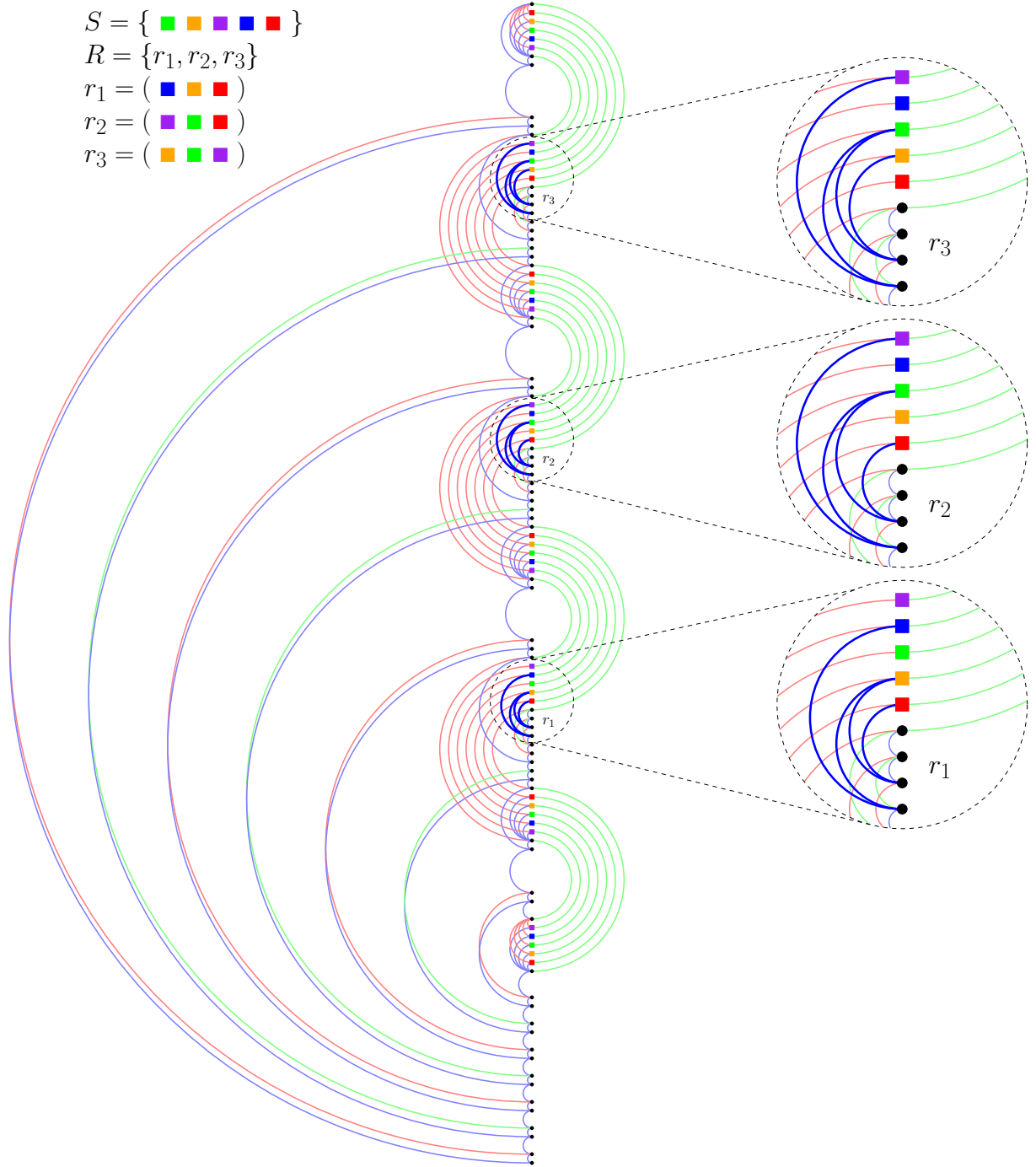}
	\caption{\small{A 3UBE of the $st$-graph $G_I$ reduced from a positive instance $I=\langle S,R\rangle$ of \BETW; the edge colors represent the corresponding pages. Edges are oriented from bottom to top.}}\label{fi:hardness-4}
\end{figure}  

\begin{restatable}{theorem}{thhardness}\label{th:hardness}
	\TUBE is NP-complete even for $st$-graphs. 
\end{restatable}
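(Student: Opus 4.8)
The plan is to establish NP-completeness by the standard two-part argument: membership in NP, followed by a polynomial-time reduction from \BETW, whose hardness is known~\cite{Opatrny79}. Membership is immediate, since a candidate \mbox{3UBE} $\langle \pi, \sigma\rangle$ of $G_I$ can be verified in polynomial time by checking that $\pi$ is a topological ordering and that no two edges assigned to the same page conflict. The substance of the theorem lies in the reduction, and all the machinery for it has been prepared in \cref{le:shell,le:filled,le:gadgeted}.

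Given an instance $I = \langle S, R\rangle$ of \BETW with $|S| = s$, I would build the \gadgeted{} $\widehat{H}_{h,s}$ with $h$ chosen large enough to host one ``level'' per triplet in $R$ (so $h$ is linear in $|R|$, keeping the reduction polynomial). The $s$ vertices in each group $\alpha_i$ encode the elements of $S$; by Condition~\textsf{F2} of \cref{le:filled}, every even-indexed group carries the \emph{same} left-to-right order along the spine and every odd-indexed group carries the \emph{reverse} order, so a single permutation $\tau$ of $S$ is consistently represented across all levels by any \mbox{3UBE}. To encode a triplet $(a,b,c) \in R$, I would attach at the appropriate level the \emph{triplet edges} of \cref{fi:hardness-f}, routed through the gadget $\Lambda_i$ so that the betweenness constraint ``$b$ lies between $a$ and $c$'' becomes a conflict/page-assignment constraint that is satisfiable precisely when $\tau(b)$ is strictly between $\tau(a)$ and $\tau(c)$. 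The two admissible configurations of $\Lambda_i$ guaranteed by Condition~\textsf{G2} correspond exactly to the two allowed orientations $\tau(a)<\tau(b)<\tau(c)$ and $\tau(c)<\tau(b)<\tau(a)$; in either case the triplet edges can be feasibly placed, whereas if $b$ is \emph{not} between $a$ and $c$ one is forced into four mutually conflicting edges, contradicting \cref{pr:conflicting}.

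For correctness I would argue both directions. If $I$ is a yes-instance with witness ordering $\tau$, I place the vertices of every group in the order dictated by $\tau$ (respecting the parity reversal of~\textsf{F2}), set each gadget $\Lambda_i$ to the configuration matching the orientation of its triplet under $\tau$, and route the triplet edges accordingly; the conditions of \cref{le:shell,le:filled,le:gadgeted} certify that the resulting assignment is a valid \mbox{3UBE}. Conversely, given any \mbox{3UBE} of $G_I$, Conditions~\textsf{S1}, \textsf{F1}--\textsf{F2}, and \textsf{G1}--\textsf{G2} force a rigid global structure: the vertex order is determined up to the single permutation $\tau$ of $S$, and at each level the feasibility of the triplet edges together with \cref{pr:conflicting} forces $\tau$ to satisfy the corresponding betweenness constraint. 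Reading off $\tau$ then yields a solution to $I$.

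The main obstacle I expect is the triplet gadget itself: proving that the triplet edges, together with $\Lambda_i$ and the channel edges whose page behavior is pinned down by~\textsf{S2}--\textsf{S3} and~\textsf{F3}, admit a consistent page assignment \emph{exactly} when the betweenness constraint holds. This requires a careful case analysis of which edges mutually conflict under each relative order of $a$, $b$, $c$ on the spine, invoking \cref{pr:conflicting} to rule out the ``$b$ not between'' cases while exhibiting an explicit legal $3$-coloring of the pages in the ``$b$ between'' cases. Finally, to extend the result to every fixed $k > 3$, I would augment $G_I$ with $k-3$ independent copies of a rigid subgadget, each requiring its own dedicated page, so that only three pages remain available for the betweenness encoding and the whole reduction goes through unchanged.
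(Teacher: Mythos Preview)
Your proposal is correct and follows essentially the same approach as the paper: reduce from \BETW by building $\widehat{H}_{h,s}$, attaching the triplet edges of \cref{fi:hardness-f} at odd levels, and arguing both directions via \cref{le:shell,le:filled,le:gadgeted} together with \cref{pr:conflicting}. Two small points worth tightening: (i) the constructed digraph is not yet an $st$-graph because the vertices of the final group $\alpha_h$ are sinks, so you must add the edges $(v_{h,j},t_h)$ as the paper does; (ii) your extension to $k>3$ by adjoining $k-3$ page-consuming subgadgets differs from the paper's method of replacing each forcing pair (and the pair $(u_i,z_i),(w_i,p_i)$ in $\Lambda_i$) by a bundle of $k-1$ mutually conflicting edges---both ideas work, but the paper's is more local.
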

\begin{proof}[sketch]
	\TUBE is clearly in NP. To prove the hardness we describe a reduction from \BETW.
	From an instance $I=\langle S, R \rangle$ of \BETW we construct an instance $G_I$ of \TUBE that is an $st$-graph; we start from the \gadgeted{} $\widehat{H}_{h,s}$ with $h=2|R|$ and $s=|S|$. Let $v_1,v_2,\dots,v_s$ be the elements of $S$. They are represented in $\widehat{H}_{h,s}$ by the vertices $v_{i,1},v_{i,2},\dots,v_{i,s}$ of the groups $\alpha_i$, for $i=-1,0,1,\dots,h$. In the reduction 
	each group $\alpha_i$ with odd index is used to encode one triplet and, in a 3UBE of $G_I$, the order of the vertices in these groups (which is the same by Condition \textsf{F2}) corresponds to the desired order of the elements of $S$ for the instance $I$.  
	Number the triplets of $R$ from $1$ to $|R|$ and let  $(v_a,v_b,v_c)$ be the $j$-th triplet. We use the group $\alpha_{i}$ and the gadget $\Lambda_{i}$ with $i=2j-1$ to encode the triplet $(v_a,v_b,v_c)$. More precisely, we add to $\widehat{H}_{h,s}$ the edges $(x_{i},v_{i,a})$, $(x_{i},v_{i,b})$, $(y_{i},v_{i,b})$, and $(y_{i},v_{i,c})$ (see \cref{fi:hardness-f}). These edges are called \emph{triplet edges} and are denoted~as~$T_i$. In any 3UBE of $G_I$ the triplet edges are forced to be in the same page and this is possible if and only if the constraints defined by the triplets in $R$ are respected. The digraph obtained by the addition of the triplet edges is not an $st$-graph because the vertices of the last group $\alpha_{h}$ are all sinks. The desired instance $G_I$ of \TUBE is the $st$-graph obtained by adding the edges $(v_{h,j},t_h)$ (for $j=1,2,\dots,s$). \Cref{fi:hardness-4} shows a 3UBE of the $st$-graph $G_I$ reduced from a positive instance $I$ of \BETW.
\end{proof}

\medskip For $k>3$, the reduction from an instance $I$ of \BETW to an instance $G_I$ of \KUBE is similar.
In the shell digraph every pair of forcing edges is replaced by a bundle of $k-1$ edges that mutually conflict (see \cref{fi:hardness-i}).
The edges in each such bundle require $k-1$ pages and force all edges that conflict with them to use the $k$-th page.
Analogously, the two edges $(u_i,z_i)$ and $(w_i,p_i)$ of the gadget $\Lambda_i$ are replaced by a bundle of $k-1$ edges that mutually conflict (see \cref{fi:hardness-j}); this forces the triplet edges to be in the $k$-th page.

\begin{corollary}
	\KUBE is NP-complete for every $k \geq 3$, even for $st$-graphs.
\end{corollary}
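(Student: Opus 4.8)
The plan is to dispatch membership in NP immediately and then reuse, with two local modifications, the reduction from \BETW that proves the case $k=3$ (\cref{th:hardness}). Membership is routine: a nondeterministic algorithm guesses the spine order $\pi$ and the page assignment $\sigma$ and checks in polynomial time that $\pi$ is a topological order and that no two edges sharing a page conflict. For hardness, the single structural fact about the page budget used throughout \cref{le:shell,le:filled,le:gadgeted} is the pigeonhole statement of \cref{pr:conflicting}: three pages cannot host four pairwise-conflicting edges. The correct generalization is that in a $k$UBE no $k+1$ edges mutually conflict, since each page carries at most one edge of a mutually-conflicting set. Hence a \emph{bundle} of $k-1$ pairwise-conflicting edges must occupy $k-1$ distinct pages, and any edge conflicting with all of them is forced onto the unique remaining page\,---\,exactly the role a pair of forcing edges (a bundle of size $2$) plays for $k=3$.

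Following the recipe sketched before the corollary, I would (i) replace, in the shell digraph, each pair of forcing edges by a bundle of $k-1$ pairwise-conflicting edges spanning the same portion of the spine, and (ii) replace in every gadget $\Lambda_i$ the two edges $(u_i,z_i)$ and $(w_i,p_i)$ by an analogous bundle. All bundle edges are oriented upward, so the resulting digraph remains acyclic and, after adding the edges $(v_{h,j},t_h)$ as in the proof of \cref{th:hardness}, is an $st$-graph. With these substitutions the statements of conditions \textsf{S1}--\textsf{S3}, \textsf{F1}--\textsf{F3}, and \textsf{G1}--\textsf{G2} are unchanged; in their proofs every appeal to \cref{pr:conflicting} is replaced by the $k$-page pigeonhole above, and each forcing step now reads ``the $k-1$ bundle edges saturate $k-1$ pages, so the conflicting channel (resp.\ triplet) edges share the one free page.'' The triplet-edge encoding of a \BETW constraint, and thus the soundness direction\,---\,a $k$UBE of $G_I$ yields a valid ordering~$\tau$ for~$I$\,---\,then follow verbatim from the forced page structure.

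The delicate point, and the one I expect to require the most care, is completeness: producing, for a positive instance $I$, a single $k$UBE of $G_I$ that realizes all bundles simultaneously. For $k=3$ this was achieved by the two-page alternation of channel edges across consecutive shell levels (condition \textsf{S3}); for general $k$ one must exhibit, level by level, a choice of the $k-1$ pages used by each bundle that stays globally consistent, leaving the $k$-th page free precisely for the channel, path, and triplet edges. The risk is that bundle edges belonging to different levels or gadgets conflict with one another and accumulate into a set of more than $k$ pairwise-conflicting edges, which no $k$-page assignment could separate. The crux is therefore to verify that the only inter-level interactions are the intended ones, so that the local $k$-page assignments compose into a valid global one; this bookkeeping is routine but must be checked for every bundle introduced.
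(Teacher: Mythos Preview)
Your proposal is correct and follows essentially the same approach as the paper: replace each pair of forcing edges by a bundle of $k-1$ mutually conflicting edges, replace the two edges $(u_i,z_i)$ and $(w_i,p_i)$ of each gadget $\Lambda_i$ by an analogous bundle, and appeal to the $k$-page pigeonhole generalization of \cref{pr:conflicting}. Your extra caution about the completeness direction (verifying that the local page assignments compose globally) is reasonable, though the paper itself does not elaborate beyond the sketch you cite and treats this bookkeeping as routine.
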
	

\begin{figure}[!t]
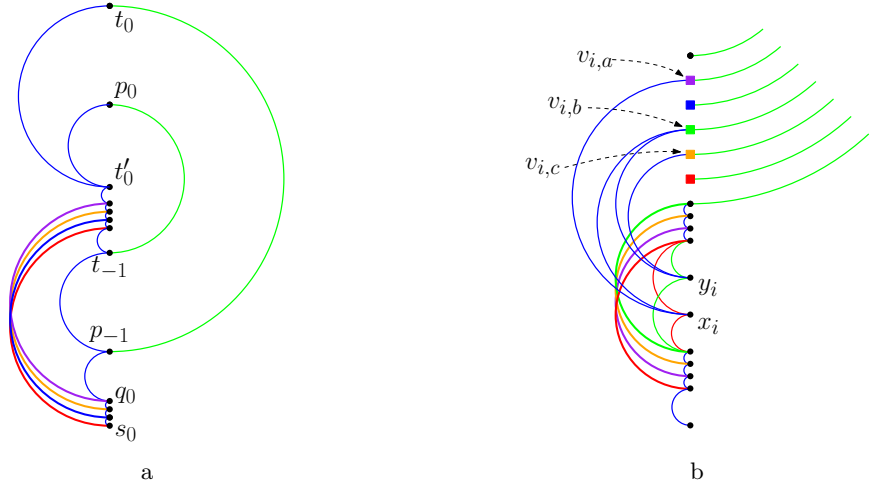

	\centering
	\begin{subfigure}{.35\textwidth}
		\centering
		\includegraphics[width=\columnwidth, page=9]{hardness}
		\subcaption{}
		\label{fi:hardness-i}
	\end{subfigure}
	\hfil
	\begin{subfigure}{.35\textwidth}
		\centering
		\includegraphics[width=\columnwidth, page=10]{hardness}
		\subcaption{}
		\label{fi:hardness-j}
	\end{subfigure}
	\caption{\small{Reduction for \KUBE (example with $k=5$). (a) Replacement of the forcing edges.  (b) Replacement of the gadget $\Lambda_i$. In both figures colors represent the pages. }}\label{fi:hardness-5}
\end{figure}

\section{Existential Results for 2UBE}\label{se:existential}

Let $f$ be an internal face of a plane $st$-graph, and let $p_l$ and $p_r$ be the left and the right path of $f$; $f$ is a \emph{generalized triangle} if either $p_l$ or $p_r$ is a single edge (i.e., a transitive edge), and it is a \emph{rhombus} if each of $p_l$ and $p_r$ consists of exactly two edges (see \cref{fi:generalized-triangle,fi:rhombus}).

Let $G$ be a plane $st$-graph. A \emph{forbidden configuration} of $G$ consists of a transitive edge $e=(u,v)$ shared by two internal faces $f$ and $g$ such that $s_f=s_g=u$ and $t_f=t_g=v$ (i.e., two generalized triangles sharing the transitive edge); see~\cref{fi:forbidden-conf}. The absence of forbidden configurations is a necessary condition for the existence of an embedding-preserving \UBE. 
If $G$ is triangulated, the absence of forbidden configurations is also a sufficient condition~\cite{MchedlidzeS11}.

\begin{figure}[!t]
	\centering
	\begin{subfigure}{.3\textwidth}
		\centering
		\includegraphics[width=0.6\columnwidth, page=1]{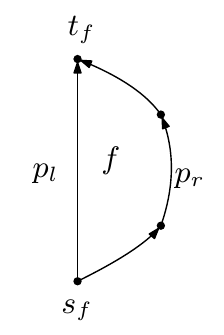}
		\subcaption{}
		\label{fi:generalized-triangle}  
	\end{subfigure}
	\begin{subfigure}{.3\textwidth}
		\centering
		\includegraphics[width=0.6\columnwidth, page=2]{definitions}
		\subcaption{}
		\label{fi:rhombus}
	\end{subfigure}
	\begin{subfigure}{.3\textwidth}
		\centering
		\includegraphics[width=0.6\columnwidth, page=3]{definitions}
		\subcaption{}
		\label{fi:forbidden-conf}
	\end{subfigure}
	\caption{\small{(a) A generalized triangle $G$. (b) A rhombus (c)  A forbidden configuration.}}\label{fi:definitions}
\end{figure}

\begin{restatable}{theorem}{thlongrightpath}\label{th:long-right-path}
  Any plane $st$-graph such that the left and the right path of every internal face contain at least two and three edges, respectively, admits an embedding-preserving \UBE.
\end{restatable}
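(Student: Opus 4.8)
The plan is to invoke \cref{th:2ube-hpc-preserving} and reduce the statement to a purely combinatorial augmentation problem: it suffices to show that any plane $st$-graph $G$ meeting the hypotheses admits an embedding-preserving \HPC, that is, that we can insert upward directed chords inside the internal faces of $G$ so that the resulting plane $st$-graph contains a directed Hamiltonian $st$-path. I would construct such a chord set \emph{directly}, threading a single monotone path through the faces. Note that, unlike the triangulated case discussed above (where one argues via forbidden configurations), I do not need to triangulate the faces or to certify the absence of a forbidden configuration: by \cref{th:2ube-hpc-preserving} a single embedding-preserving Hamiltonian $st$-path already yields the desired \UBE, with $\pi$ being the order of the vertices along that path.

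The construction I would use is incremental. Fix a topological order $f_1,f_2,\dots,f_m$ of the internal faces induced by the dual graph $G^*$, so that each $f_{i+1}$ is processed only after all faces lying to its left, and let $G_i$ be the plane $st$-graph formed by $f_1,\dots,f_i$, which has the same source $s$ and sink $t$ as $G$. I would maintain the invariant that $G_i$ admits an embedding-preserving \HPC whose Hamiltonian path $P_i$ is monotone in the topological order. When incorporating $f_{i+1}$, its left path is already part of $G_i$ (it lies on the right boundary of $G_i$), while the interior vertices of its right path are exactly the vertices newly exposed. I would obtain $P_{i+1}$ by rerouting the portion of $P_i$ running along the shared boundary of $f_{i+1}$ so that it detours monotonically through these new vertices, using one (or two) chords inserted strictly inside $f_{i+1}$. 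This is exactly where the hypotheses are used: since the left path of $f_{i+1}$ has at least two edges and its right path at least three, $f_{i+1}$ always has interior vertices strictly between $s_f$ and $t_f$ on \emph{both} sides, so the detour can be anchored by a chord that is neither $s_f\to t_f$ nor a single boundary edge. A generalized triangle or a rhombus would instead collapse this detour onto one transitive edge, which is precisely the degenerate situation ruled out by the size conditions.

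The routine verifications I would then carry out are: (i) every inserted chord is upward, since the rerouted path only ever advances in the topological order and never revisits a vertex; (ii) every chord lies inside the face currently being processed, and the chords inserted in a common face are nested consistently, so no two chords cross and no chord crosses an edge of $G$, keeping the augmentation embedding-preserving; and (iii) every vertex is eventually covered, because each interior right-path vertex is newly exposed by exactly one face in the dual order. I expect the \textbf{main obstacle} to be the global consistency of the splicing in step~(ii)--(iii): proving that the portion of $P_i$ to be rerouted always exists in the right place and can be diverted through the new vertices while remaining a \emph{single} Hamiltonian path (rather than splitting into several monotone sub-paths) and without conflicting with chords inserted at earlier steps. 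This is exactly where the asymmetry between the two bounds---two edges on the left, three on the right---supplies the slack: the extra length of the right path lets the path enter each face low and leave it high, leaving room on the evolving right boundary for the next face to attach, so that the invariant on $P_i$ can be propagated all the way to $G_m=G$.
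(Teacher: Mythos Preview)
Your approach is essentially identical to the paper's: reduce via \cref{th:2ube-hpc-preserving} to building an embedding-preserving \HPC, process internal faces in a topological order of the dual, and at each step splice the new right-path vertices into the current Hamiltonian path by adding one or two chords inside the new face.

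The one point where your write-up is genuinely incomplete is the invariant. ``$P_i$ is monotone in the topological order'' is automatic for any directed Hamiltonian $st$-path and does not by itself guarantee that $P_i$ uses an edge of the left path of $f_{i+1}$ that you can reroute. The paper's invariant, which cleanly resolves the obstacle you flag, is: \emph{among any two consecutive edges along the right boundary of $\overline{G}_i$, at least one belongs to $P_i$}. With this in hand, the case analysis is driven by whether the two right-boundary edges adjacent to $s_f$ and $t_f$ (just outside the new face) already lie on $P_i$; in each of the four cases one or two chords suffice, and the hypothesis that the right path has at least three edges is what makes the worst case (neither adjacent edge on $P_i$) go through while still restoring the invariant on the new right boundary.
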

\begin{proof}[sketch]
	We prove how to construct an embedding-preserving \HPC. The idea is to construct $\overline{G}$ by adding a face of $G$ per time from left to right, according to a topological ordering of the dual graph of $G$. When a face $f$ is added, its right path is attached to the right boundary of the current digraph. We maintain the invariant that at least one edge $e$ in the left path of $f$ belongs to the Hamiltonian path of the current digraph. The Hamiltonian path is extended by replacing $e$ with a path that traverses the vertices of the right path of $f$. To this aim, dummy edges are suitably inserted inside $f$. When all faces are added, the resulting graph is an \HPC $\overline{G}$ of $G$. The idea is illustrated~in~\cref{fi:existential}.\end{proof}

\begin{figure}[htb]
	\centering
	\begin{subfigure}{.45\textwidth}
		\centering
		\includegraphics[width=0.5\columnwidth, page=1]{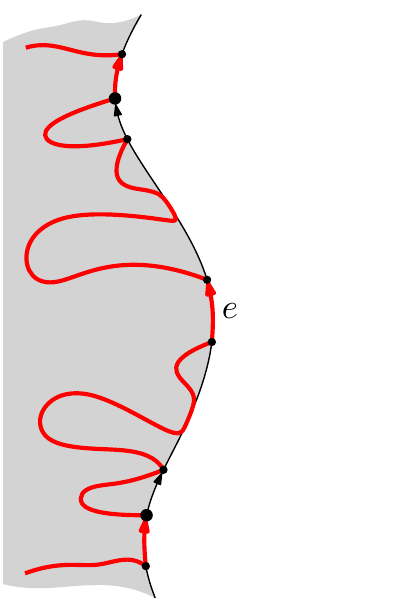}
		\subcaption{}
		\label{fi:existential-1}
	\end{subfigure}
	\begin{subfigure}{.45\textwidth}
		\centering
		\includegraphics[width=0.5\columnwidth, page=2]{existential}
		\subcaption{}
		\label{fi:existential-2}
	\end{subfigure}
	\caption{\small{Idea of the construction in the proof of~\cref{th:long-right-path}. Dummy edges are dashed.}}\label{fi:existential}
\end{figure}

\medskip The next theorem is proved with a construction similar to that of \cref{th:long-right-path}.

\begin{restatable}{theorem}{thrhombi}\label{th:rhombi}
	Let $G$ be a plane $st$-graph such that every internal face of $G$ is a rhombus. Then $G$ admits an embedding-preserving \UBE. 
\end{restatable}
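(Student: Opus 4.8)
The plan is to use \cref{th:2ube-hpc-preserving} to reduce the statement to constructing an embedding-preserving \HPC $\overline{G}$ of $G$, and to build $\overline{G}$ by an incremental left-to-right sweep analogous to the one in the proof of \cref{th:long-right-path}. Fix a topological ordering of the dual graph $G^*$ and process the internal faces of $G$ in this order, so that when a face $f$ is processed all faces to its left have already been handled. Throughout the sweep I maintain a partial plane $st$-graph $G_{\mathrm{cur}}$ (together with the dummy edges already inserted inside processed faces) and a directed Hamiltonian $st$-path $P$ of $G_{\mathrm{cur}}$; initially $G_{\mathrm{cur}}$ is the left boundary of $G$ and $P$ is that boundary path. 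For a rhombus $f$ I write its left path as $s_f \to a \to t_f$ and its right path as $s_f \to b \to t_f$, where $a$ and $b$ are the unique internal vertices of the two paths. When $f$ is processed, its left-path edges $(s_f,a)$ and $(a,t_f)$ already lie on the right boundary of $G_{\mathrm{cur}}$, while $b$ is a \emph{new} vertex introduced together with its only incident edges $(s_f,b)$ and $(b,t_f)$.

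The local step inserts $b$ into $P$ by adding a single dummy diagonal inside $f$ joining its two internal vertices $a$ and $b$, splitting $f$ into two triangles and hence preserving the embedding. I first record the key acyclicity fact: both orientations of this diagonal are safe. Adding $(b,a)$ would create a cycle only through a directed path $a \leadsto b$; since the sole in-edge of $b$ is $(s_f,b)$, this would force $a \leadsto s_f$, contradicting acyclicity because $(s_f,a)$ is present. Symmetrically, adding $(a,b)$ would force $t_f \leadsto a$ and hence a cycle with $(a,t_f)$. The rerouting then replaces whichever left-path edge of $f$ is currently on $P$ by a two-edge detour through $b$: if $(s_f,a)\in P$ I replace it by $s_f \to b \to a$ (using the existing $(s_f,b)$ and the diagonal $(b,a)$), which leaves the right-boundary edge $(s_f,b)$ on $P$; if instead $(a,t_f)\in P$ I replace it by $a \to b \to t_f$ (using $(a,b)$ and the existing $(b,t_f)$), which leaves $(b,t_f)$ on $P$. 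In either case $b$ is now covered and $P$ is again a Hamiltonian $st$-path of the enlarged graph.

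To keep the sweep going I maintain the invariant that, for every not-yet-processed internal face, at least one edge of its left path belongs to $P$; this is precisely the hypothesis needed to apply the local step to the next face, and it holds at the start because the whole left boundary is on $P$. The technical core is to show the rerouting re-establishes this invariant. After processing $f$ the new right boundary replaces $s_f \to a \to t_f$ by $s_f \to b \to t_f$, and exactly one of the edges $(s_f,b)$, $(b,t_f)$ is retained on $P$ (retaining both is impossible, since it would strand $a$). A short case analysis on how later faces meet this updated boundary then shows that the retained edge — together with all boundary edges away from $f$, which the rerouting does not disturb — always supplies a left-path edge on $P$ for each subsequently processed face: a later face having $b$ as an internal vertex of its left path contains both $(s_f,b)$ and $(b,t_f)$, so the retained one serves; a later face incident to $b$ as its source or sink, or to $s_f$ or $t_f$, is covered by the edge that the chosen diagonal orientation was designed to keep. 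When all faces are processed every vertex has been inserted exactly once, $G_{\mathrm{cur}}$ with its diagonals is the desired \HPC, and $P$ is the sought Hamiltonian $st$-path, so \cref{th:2ube-hpc-preserving} yields the embedding-preserving \UBE.

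I expect the main obstacle to be exactly this bookkeeping of the invariant. Unlike \cref{th:long-right-path}, whose right path has at least three edges and thus two internal vertices providing slack in the detour, a rhombus offers a single internal vertex per path, so one cannot in general keep both new right-boundary edges on $P$. The delicate point is therefore to verify that the single edge one is forced to retain, combined with the correct choice of diagonal orientation dictated by which left-path edge of $f$ was on $P$, suffices for \emph{every} later face; establishing this, in particular for faces sharing the newly created vertex $b$ or the boundary vertices $s_f$ and $t_f$, is what the case analysis must settle.
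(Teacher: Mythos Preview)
Your proposal is correct and follows essentially the same approach as the paper: process the faces left to right in a topological order of the dual, and inside each rhombus add a single diagonal to reroute the Hamiltonian path through the new right-path vertex, choosing the diagonal's orientation according to which of the two left-path edges currently lies on $P$. The only difference is cosmetic: the paper states the invariant as ``among any two consecutive edges of the current right boundary, at least one belongs to $P$'' rather than in terms of the left paths of not-yet-processed faces, which makes the bookkeeping you flag more direct, but the local step and the two cases for the diagonal are identical to yours.
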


\section{Testing 2UBE for Plane Graphs with Special Faces}\label{se:testing-special}

By~\cref{th:long-right-path}, if all internal faces of a plane $st$-graph $G$ are such that their left and right path contain at least two and at least three edges, respectively, $G$ admits an embedding-preserving \UBE. If these conditions do not hold, an embedding-preserving \UBE may not exist (see \cref{fi:no2ube}). We now describe an efficient testing algorithm for a plane $st$-graph $G=(V,E)$ whose internal faces are generalized triangles or rhombi (see \cref{fi:nohp_rhombi_and_triangles}). We construct a \emph{mixed} graph $G_M=(V,E \cup E_U)$, where $E_U$ is a set of undirected edges and $(u,v) \in E_U$ if $u$ and $v$ are the two vertices of a rhombus face $f$ distinct from $s_f$ and $t_f$ (red edges in \cref{fi:hp_rhombi_and_triangles}). For a rhombus face $f$, the graph obtained from $G$ by adding the directed edge $(u,v)$ inside $f$ is still a plane $st$-graph (see, e.g.~\cite{DBLP:conf/wg/BekosGDLMR18,Battista98}). Since there is only one edge of $E_U$ inside each rhombus face of $G$, this implies the following property.

\begin{property}
	\label{pr:mixed-acyclic}
	Every orientation of the edges in $E_U$ transforms $G_M$ into an acyclic digraph.
\end{property}

\begin{figure}[htb]
	\centering
	\begin{subfigure}{.3\textwidth}
		\centering
		\includegraphics[width=\columnwidth, page=1]{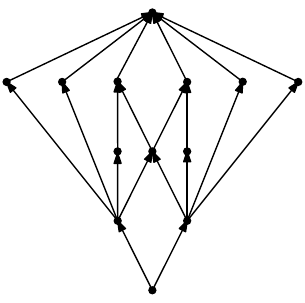}
		\subcaption{}
		\label{fi:no2ube}
	\end{subfigure}
	\hfil
	\begin{subfigure}{.3\textwidth}
		\centering
		\includegraphics[width=\columnwidth, page=3]{testing_shortfaces1}
		\subcaption{}
		\label{fi:nohp_rhombi_and_triangles}
	\end{subfigure}
	\hfil
	\begin{subfigure}{.3\textwidth}
		\centering
		\includegraphics[width=\columnwidth, page=4]{testing_shortfaces1}
		\subcaption{}
		\label{fi:hp_rhombi_and_triangles}
	\end{subfigure}
	\caption{\small{(a) A plane $st$-graph that does not admit a \UBE~\cite{NowakowskiP89}. (b) A plane $st$-graph $G$ whose faces are generalized triangles or rhombi. (b) The mixed graph $G_M=(V,E,E_U)$.}}
	\label{fi:rhombi_and_triangles}
\end{figure}

\begin{restatable}{theorem}{thtestembeddingpreserving}\label{th:test-embedding-preserving}
	Let $G$ be a plane $st$-graph such that every internal face of $G$ is either a generalized triangle or a rhombus. There is an $O(n)$-time algorithm that decides whether $G$ admits an embedding-preserving \UBE, and which computes it in the positive case. 
\end{restatable}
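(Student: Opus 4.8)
The plan is to reduce the problem to deciding the existence of an embedding-preserving \HPC via \cref{th:2ube-hpc-preserving}, and then to exploit the fact that the faces are so restricted that the space of relevant completions is locally determined. The first move is to turn every rhombus into two generalized triangles. For a rhombus face $f$ with its two vertices $u,v$ distinct from $s_f$ and $t_f$, the only augmentation inside $f$ that can help a Hamiltonian path cover both $u$ and $v$ is to insert an oriented copy of the undirected edge of $E_U$ joining them; inserting it as $(u,v)$ or as $(v,u)$ splits $f$ into two generalized triangles, and by \cref{pr:mixed-acyclic} either choice keeps the graph a plane $st$-graph. After fixing an orientation of every edge of $E_U$ we thus obtain a plane $st$-graph $G_M'$ all of whose internal faces are generalized triangles, and the question becomes whether some orientation of $E_U$ yields a $G_M'$ that admits an embedding-preserving \HPC. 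I would justify the ``only relevant augmentation'' claim by a normalization argument showing that any embedding-preserving \HPC of $G$ can be rewritten so that its dummy edges inside each rhombus amount to one oriented copy of the $E_U$ edge.

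The second step is a characterization for plane $st$-graphs whose internal faces are all generalized triangles: such a graph admits an embedding-preserving \HPC if and only if it contains no forbidden configuration. Necessity is exactly the statement recorded in \cref{se:existential}. For sufficiency I would generalize, from triangulated to generalized-triangle $st$-graphs, the incremental scheme of Mchedlidze and Symvonis~\cite{MchedlidzeS11} already used in the proof of \cref{th:long-right-path}: process the internal faces in a topological order of the dual graph, and when the current face $f$ is added, reroute the portion of the already-built Hamiltonian path lying on the transitive (single-edge) side of $f$ through the opposite side of $f$, inserting dummy edges inside $f$ to cover the vertices internal to that side. This step is always feasible unless the single transitive edge $e=(x,y)$ is demanded simultaneously by both of its incident faces, i.e.\ both are generalized triangles with source $x$ and sink $y$ sharing $e$ --- precisely a forbidden configuration (\cref{fi:forbidden-conf}).

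Combining the two steps, $G$ admits an embedding-preserving \UBE if and only if the edges of $E_U$ can be oriented so that $G_M'$ has no forbidden configuration, and the final step is to decide this in $O(n)$ time. Orienting a rhombus edge creates two new transitive edges, and a forbidden configuration can only arise when a transitive edge coincides, with the same source and sink, with a transitive edge of a face sharing a boundary edge with it. I would enumerate these conflicts: a conflict between two original generalized triangles is a constant (immediate rejection); a conflict between an original generalized triangle and one sub-triangle of an oriented rhombus forces that rhombus's orientation (a unit constraint); and a conflict between two adjacent rhombi forbids one joint orientation (a binary constraint). This is a 2-SAT instance with $O(n)$ variables and $O(n)$ local clauses, solvable in linear time; in the positive case the incremental construction of the second step produces the \HPC and hence the \UBE within the same bound. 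In many cases the binary constraints in fact degenerate to forcing chains, so a single propagation sweep suffices, but 2-SAT already guarantees the $O(n)$ running time.

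The main obstacle I anticipate is twofold. First, the sufficiency direction of the characterization: I must prove that the incremental rerouting never fails in the absence of forbidden configurations even when the long side of a generalized triangle is arbitrarily long and whichever of the two sides is the transitive one, which requires carefully choosing the invariant maintained across the dual topological sweep. Second, and more delicately, I must show that the finite list of local conflicts above captures forbidden-configuration-avoidance \emph{exactly} --- that no non-local obstruction can appear --- so that feasibility is certified by a linear-size constraint system rather than by a more expensive global search. Verifying that an oriented rhombus cannot induce a forbidden configuration beyond its incident faces, and that the normalization of an arbitrary \HPC into the canonical per-rhombus form is always possible, are the crux of both the correctness and the linear-time bound.
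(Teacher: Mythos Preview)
Your plan and the paper's proof share the same first move: the only edges an embedding-preserving \HPC of $G$ can add are the $E_U$ edges inside the rhombi, so the question reduces to whether some orientation of $E_U$ makes $\overrightarrow{G_M}$ Hamiltonian. From there the two arguments diverge sharply.

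The paper's route is much shorter. By \cref{pr:mixed-acyclic} every orientation of $E_U$ yields an acyclic digraph, and an acyclic digraph is Hamiltonian if and only if it is \emph{unilateral} (every pair of vertices is joined by a directed path in at least one direction). Testing whether the undirected edges of a mixed graph can be oriented so that the result is unilateral is a known problem with an $O(|V|+|E|)$ algorithm~\cite{unilateral10}, so the paper simply invokes that result and is done; a topological order of the resulting Hamiltonian digraph gives the \UBE.

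Your route---orient the rhombi, then certify Hamiltonicity of the resulting generalized-triangle graph via the absence of forbidden configurations, and decide the latter by 2-SAT---is viable but carries real extra work. The characterization you need (a plane $st$-graph whose internal faces are all generalized triangles is Hamiltonian iff it has no forbidden configuration) is true, and the left-to-right dual sweep proves it, but not quite as you describe: no dummy edges can be inserted usefully inside a generalized triangle (any such edge is a transitive shortcut and cannot lie on a Hamiltonian path), so the sweep must reuse only existing edges. The correct invariant is that every edge on the current right boundary lies on the Hamiltonian path \emph{unless} it is the single-edge side of its already-processed left face; under this invariant, when a face with single-edge left side is processed, that edge is guaranteed to be on the path precisely because a forbidden configuration is ruled out. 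Once this is established, your locality claim for the 2-SAT reduction is fine---forbidden configurations are by definition between adjacent faces---so the constraint system has linear size and the overall bound is $O(n)$.

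In short: both approaches are correct. The paper's buys brevity by outsourcing the combinatorics to the unilateral-orientation result; yours is more self-contained and reuses the forbidden-configuration viewpoint of \cref{se:existential}, at the cost of proving a new (though natural) characterization and setting up a 2-SAT instance. Just be sure to drop the ``inserting dummy edges'' phrasing in your Step~2, since that is where a reader would (rightly) balk.
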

\begin{proof}	
	The edges of $E_U$ are the only edges that can be used to construct an embedding-preserving HP-completion of $G$. This, together with  \cref{th:2ube-hpc-preserving}, implies that $G$ admits a \UBE if and only if the undirected edges of $G_M$ can be oriented so that the resulting digraph $\overrightarrow{G_M}$ has a directed Hamiltonian path from $s$ to $t$. By \cref{pr:mixed-acyclic}, any orientation of the undirected edges of $G_M$ gives rise to an acyclic digraph. On the other hand an acyclic digraph is Hamiltonian if and only if it is unilateral (see, e.g.~\cite[Theorem 4]{abdgkmpst-grd-18}); we recall that a digraph is \emph{unilateral} if each pair of vertices is connected by a directed path (in at least one of the two directions)~\cite{unilateral10}. Testing whether the undirected edges of $G_M$ can be oriented so that the resulting digraph $\overrightarrow{G_M}$ is unilateral, and computing such an orientation if it exists, can be done in time $O(|V|+|E|+|E_U|) = O(n)$~\cite[Theorem 4]{unilateral10}. A Hamiltonian path of $\overrightarrow{G_M}$ is given by a topological ordering of its vertices.
\end{proof}

\section{Testing Algorithms for 2UBE Parameterized by the Branchwidth}\label{se:testing}

In this section, we show that 
the \TWOUBE problem is fixed-parameter tractable with respect to the branchwidth of the input $st$-graph both in the fixed and in the variable embedding setting. Since the treewidth $tw(G)$ and the branchwidth $bw(G)$ of a graph $G$ are within a constant factor from each other (i.e., $bw(G) - 1 \leq tw(G) \leq \lfloor \frac{3}{2}bw(G) \rfloor -1$~\cite{DBLP:journals/jct/RobertsonS91}), our FPT algorithm also extends to graphs of bounded treewidth. Previously, the complexity of this problem was settled only for graphs of treewidth at most~$2$ in the variable embedding~setting\footnote{To our knowledge, no efficient algorithm was known for treewidth~$2$ in the fixed~embedding~setting.}~\cite{GiacomoDLW06}.

We use the SPQR-tree data structure~\cite{DBLP:journals/siamcomp/BattistaT96} to efficiently handle the planar embeddings of the input digraphs and sphere-cut decompositions~\cite{DBLP:journals/combinatorica/SeymourT94} to develop a dynamic-programming approach on the skeletons of the rigid components.
For the definition of the \emph{SPQR-tree} $\mathcal T$ of a biconnected graph and the related concepts of \emph{skeleton} $\skel(\mu)$ and \emph{pertinent graph} $\pert{\mu}$ of a node $\mu$ of $\mathcal T$, \emph{types} of the nodes of $\mathcal T$ (namely, \emph{S-,P-,Q-, and R-nodes}), and \emph{virtual edges} of a skeleton, see \cref{se:app-prel}. 
To ease the description, we can assume that each S-node has exactly two children~\cite{DBLP:journals/siamdm/DidimoGL09} and that the skeleton of each node $\mu$ does not contain the virtual edge representing the parent of $\mu$. 
In particular, we will exploit the following property of $\mathcal T$ when $G$ is an $st$-graph containing the edge $e=(s,t)$ and $\mathcal T$ is rooted at the Q-node of $e$. 

\begin{property}[\cite{DBLP:journals/siamcomp/BattistaT96}]\label{obs:spqr-st-graphs}
	Let $\mu \in \mathcal T$ with poles $u$ and $v$. Without loss of generality, assume that the directed paths connecting $u$ and $v$ in $G$ are oriented from $u$ to $v$. Then, $\pert{\mu}$ is a $uv$-graph.
\end{property}

For the definition of \emph{branchwidth} and \emph{sphere-cut decomposition}, and for the related concepts of \emph{middle set} $\midset(e)$ and \emph{noose} $\mathcal O_e$ of an arc $e$ of the decomposition, and \emph{length} of a noose, see \cref{se:app-prel}. We denote a sphere-cut decomposition of a plane graph $G=(V,E)$ by the triple $\langle T,\xi, \Pi=\bigcup_{a \in E(T)} \pi_a \rangle$, where $T$ is a ternary tree whose leaves are in a one-to-one correspondence with the edges of $G$, which is defined by a bijection $\xi: \mathcal L(T) \leftrightarrow E(G)$ between the leaf set $\mathcal L(T)$ of $T$ and the edge set $E$, and where $\pi_a$ is a circular order of $\midset(a)$, for each arc $a$ of $T$. 
In particular, we will exploit the property that each of the two subgraphs that lie in the interior and in the exterior of a noose is connected and that the set of nooses forms a laminar set family, that is, any two nooses are either disjoint or nested.

Without loss of generality, we assume that the input $st$-graph $G$ contains the edge $(s,t)$, which guarantees that $G$ is biconnected. In fact, in any \UBE of $G$ vertices $s$ and $t$ have to be the first and the last vertex of the spine, respectively. Thus, either $(s,t)$ is an edge of $G$ or it can be added to any of the two pages of the spine of a \UBE of $G$ to obtain a \UBE $\langle \pi,\sigma \rangle$ of $G \cup (s,t)$. Clearly, the edge $(s,t)$ will be incident to the outer face of $\Gamma(\pi,\sigma)$. 

\subparagraph{Overview.} Our approach leverages on the classification of the embeddings of each triconnected component of the biconnected graph $G$. Intuitively, such classification is based on the visibility of the spine that the embedding ``leaves'' on its outer face. We show that the planar embeddings of a triconnected component that yield a \UBE of the component can be partitioned into a finite number of equivalence classes, called \emph{embedding types}. By visiting the SPQR-tree $\mathcal{T}$ of $G$ bottom-up, we describe how to compute all the \emph{realizable embedding types} of each triconnected component, that is, those embedding types that are allowed by some embedding of the component. To this aim we will exploit the realizable embedding types of its child components. If the root of $\mathcal{T}$, which represents the whole $st$-graph $G$, admits at least one planar embedding belonging to some embedding type, then $G$ admits a \UBE.  
The most challenging part of this approach is handling the triconnected components that correspond to the P-nodes, where the problem is reduced to a maximum flow problem on a capacitated flow network with edge demands, and to the R-nodes, where a sphere-cut decomposition of bounded width is used to efficiently compute the feasible embedding types.

\subparagraph{Embedding Types.} 
\begin{figure}[!]
	\centering
	\includegraphics[width=\textwidth,page=1]{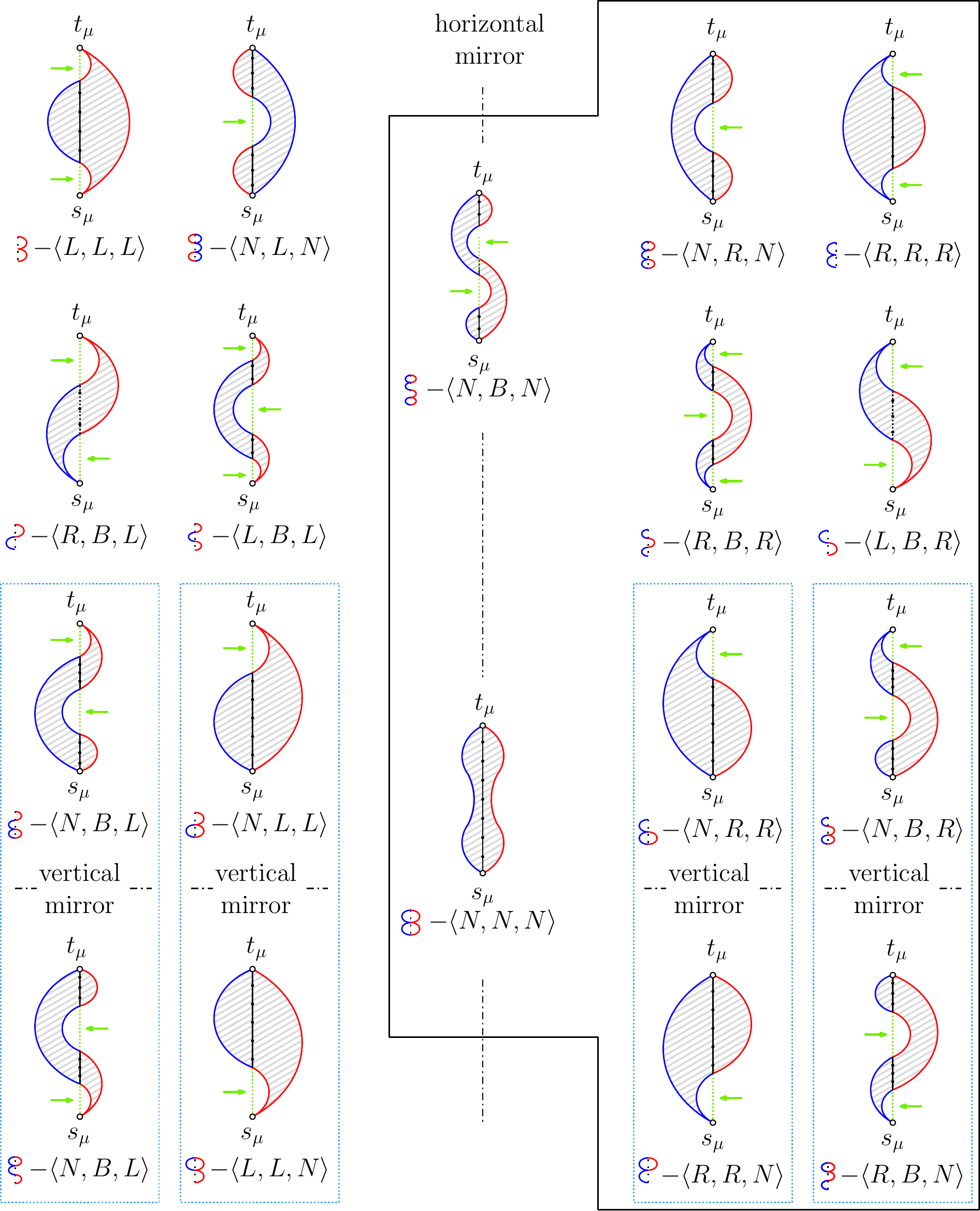}
	\caption{Illustrations of the possible embedding types of a node $\mu$ with poles $s_\mu$ and $t_\mu$; the portion of the spine that is visible from the left or from the right is green. Pairs of embedding types in the same dotted box are one the vertically-mirrored copy of the other. Embedding types on the top are the horizontally-mirrored copy of the ones on the bottom. Embedding types \protect\NBN and \protect\NNN are the horizontal and vertical mirrored copies of themselves. 
	}\label{fig:types}
\end{figure}
Given a \UBE $\langle \pi, \sigma \rangle$, the two pages will be called the \emph{left page} (the one to the left of the spine) and the \emph{right page} (the one to the right of the spine), respectively. We write $\sigma(e)=L$ (resp. $\sigma(e)=R$) if the edge $e$ is assigned to the left page (resp. right page). 
A point $p$ of the spine is \emph{visible from the left (right) page} if it is possible to shoot a horizontal ray originating from $p$ and directed leftward (rightward) without intersecting any edge in $\Gamma(\pi, \sigma)$.   
Let $\mu$ be a node of the SPQR-tree $\cal T$ of $G$ rooted at $(s,t)$. Recall that, by \cref{obs:spqr-st-graphs}, since $\cal T$ has been rooted at $(s,t)$, the pertinent graph $\pert{\mu}$ and the skeleton $\skel(\mu)$ of~$\mu$ are $s't'$-graphs, where $s'$ and $t'$ are the poles of~$\mu$~. We denote by $s_\mu$ (by $t_\mu$) the pole of $\mu$ that is the source (the sink) of $\pert{\mu}$ and of $\skel(\mu)$.
Let $\langle \pi_\mu, \sigma_\mu \rangle$ be a \UBE of $\pert{\mu}$ and let
$\mathcal{E}_\mu$ be the embedding of $\Gamma(\pi_\mu, \sigma_\mu)$. We say that $\mathcal{E}_\mu$ has \emph{embedding type} (or is \emph{of Type}) $\langle s\_{vis}, spine\_{vis},t\_{vis}\rangle$ with $s\_{vis},t\_{vis} \in \{L,R,N\}$ and $spine\_{vis} \in \{L,R,B,N\}$ where:
\begin{enumerate}
	\item $s\_{vis}$ is $L$ (resp., $R$) if in $\mathcal{E}_\mu$ there is a portion of the spine incident to $s$ and between $s$ and $t$ that is visible from the left page (resp., from the right page); otherwise, $s\_{vis}$ is $N$.
	\item $t\_{vis}$ is $L$ (resp., $R$) if in $\mathcal{E}_\mu$ there is a portion of the spine incident to $t$ and between $s$ and $t$ that is visible from the left page (resp., from the right page); otherwise, $t\_{vis}$ is $N$.
	\item $spine\_{vis}$ is $L$ (resp., $R$) if in $\mathcal{E}_\mu$ there is a portion of the spine between $s$ and $t$ that is visible from the left page (resp., from the right page); 
	$spine\_{vis}$ is $B$ if in $\mathcal{E}_\mu$ there is a portion of the spine between $s$ and $t$ that is visible from the left page, and a portion of the spine between $s$ and $t$ that is visible from the right page; otherwise, $spine\_{vis}$ is $N$.
\end{enumerate}

\noindent
We also say that a node $\mu$ and $\pert{\mu}$ \emph{admits Type $\langle x, y, z\rangle$} if $\pert{\mu}$ admits an embedding of Type $\langle x, y, z\rangle$.
We have the following lemma.

\begin{restatable}{lemma}{leconstanttypes}\label{lem:constant-types}
	Let $\mu$ be a node of $\cal T$, let $\langle \pi_\mu, \sigma_\mu \rangle$ be a \UBE of $\pert{\mu}$ and let $\mathcal{E}_\mu$ be a planar embedding of $\Gamma(\pi_\mu, \sigma_\mu)$. 
	Then $\mathcal{E}_\mu$ has exactly one embedding type, where the possibile embedding types are the~$18$ depicted in \cref{fig:types}.
\end{restatable}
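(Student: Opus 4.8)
The plan is to prove the two assertions of the lemma separately: that the triple $\langle s\_{vis}, spine\_{vis}, t\_{vis}\rangle$ is uniquely determined by $\mathcal E_\mu$ (``exactly one'' embedding type), and that this triple necessarily falls among the $18$ depicted in \cref{fig:types}.

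First I would settle well-definedness by observing that visibility of the spine is a purely combinatorial invariant of the book embedding $\langle \pi_\mu, \sigma_\mu \rangle$, and not of the particular geometric realization. Indeed, a point of the spine at height $y$ is visible from the left page if and only if no edge $e$ with $\sigma_\mu(e)=L$ and endpoints $a,b$ satisfies $\pi_\mu(a) < y < \pi_\mu(b)$: such an arc, together with the spine, encloses every intermediate spine point on the left, while if no $L$-edge spans over $y$ the leftward ray escapes to infinity; the symmetric statement holds for the right page. Consequently each of $s\_{vis}$, $spine\_{vis}$, and $t\_{vis}$ is the outcome of a condition on $\mathcal E_\mu$ that either holds or fails, so the triple $\langle s\_{vis}, spine\_{vis}, t\_{vis}\rangle$ is determined and $\mathcal E_\mu$ has exactly one embedding type.

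Next I would cut the $3\cdot 4\cdot 3 = 36$ a priori triples down to $18$ by a single monotonicity observation that is immediate from the definitions: a portion of the spine incident to $s$ and lying between $s$ and $t$ is in particular a portion of the spine between $s$ and $t$. Hence, if such a portion is visible from the left (resp. right) page, then there is a portion of the spine between $s$ and $t$ visible from the left (resp. right) page, which by the definition of $spine\_{vis}$ forces $spine\_{vis} \in \{L,B\}$ (resp. $spine\_{vis} \in \{R,B\}$). Applying this to both poles yields
\[
s\_{vis} = L \ \Rightarrow\ spine\_{vis} \in \{L,B\}, \qquad s\_{vis} = R \ \Rightarrow\ spine\_{vis} \in \{R,B\},
\]
together with the two analogous implications obtained by replacing $s\_{vis}$ with $t\_{vis}$.

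Finally I would count the triples compatible with these four implications by splitting on the value of $spine\_{vis}$. If $spine\_{vis}=N$, then neither pole may be $L$ or $R$, so $s\_{vis}=t\_{vis}=N$, giving the single type \de{N,N,N}. If $spine\_{vis}=L$, each of $s\_{vis},t\_{vis}$ ranges over $\{L,N\}$, giving $2\cdot 2 = 4$ types, and the case $spine\_{vis}=R$ is symmetric, also yielding $4$. If $spine\_{vis}=B$, the implications are vacuous and each pole ranges over $\{L,R,N\}$, giving $3\cdot 3 = 9$ types. In total $1+4+4+9=18$, and a direct inspection confirms that these coincide with the $18$ triples of \cref{fig:types}; hence the type of $\mathcal E_\mu$ is always one of them. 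I do not expect a genuine obstacle in this lemma, since the argument is essentially definitional; the one point deserving care is the well-definedness step, namely verifying that visibility is an invariant of the embedding $\mathcal E_\mu$ rather than of an arbitrary drawing, which the combinatorial characterization of visibility given above settles.
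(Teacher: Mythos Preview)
Your proposal is correct and follows essentially the same approach as the paper: you rule out the same triples via the implication that visibility near a pole forces visibility of the spine on that side, and your case-split on $spine\_{vis}$ yielding $1+4+4+9=18$ is just a reorganization of the paper's subtraction $36-5-5-8=18$ based on the equivalent (contrapositive) constraints. Your extra care about well-definedness is more detailed than the paper's one-line ``follows from the definition,'' but it is the same content.
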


Let $\langle \pi, \sigma \rangle$ be a \UBE of $G$, let $\mu$ a node of $\mathcal{T}$, and let $\langle \pi_\mu, \sigma_\mu \rangle$ be the restriction of $\langle \pi, 
\sigma \rangle$ to $\pert{\mu}$. 
Further, let $\langle \pi'_\mu, \sigma'_\mu \rangle \neq \langle \pi_\mu, \sigma_\mu \rangle$ be a \UBE of $\pert{\mu}$.

\begin{restatable}{lemma}{leraplacement}\label{le:replacement}  
	If  $\langle \pi'_\mu, \sigma'_\mu \rangle$ and $\langle \pi_\mu, \sigma_\mu \rangle$ have the same embedding type, then $G$ admits a \UBE whose restriction to $\pert{\mu}$ is $\langle \pi'_\mu, \sigma'_\mu \rangle$.
\end{restatable}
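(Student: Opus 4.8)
The plan is to prove the statement by a \emph{drawing-surgery} (exchange) argument: starting from a \UBE $\langle \pi, \sigma\rangle$ of $G$ that restricts to $\langle \pi_\mu, \sigma_\mu\rangle$ on $\pert{\mu}$, I would cut the sub-drawing of $\pert{\mu}$ out of the canonical drawing $\Gamma(\pi,\sigma)$ and glue in the drawing of $\langle \pi'_\mu, \sigma'_\mu\rangle$, then verify that the result is again a \UBE of $G$. The argument rests on the fact that, by \cref{obs:spqr-st-graphs} and the defining property of the SPQR-tree, $\pert{\mu}$ is attached to the rest of $G$ only through its poles $s_\mu$ and $t_\mu$: every internal vertex of $\pert{\mu}$ is incident exclusively to edges of $\pert{\mu}$, and every directed path between the poles is oriented from $s_\mu$ to $t_\mu$. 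Consequently, in $\Gamma(\pi,\sigma)$ the internal vertices of $\pert{\mu}$ occupy spine positions strictly between $s_\mu$ and $t_\mu$, and no edge of $G \setminus \pert{\mu}$ has an endpoint inside $\pert{\mu}$.

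First I would isolate the \emph{interface} between $\pert{\mu}$ and the rest of the drawing. The only edges of $G \setminus \pert{\mu}$ that can interact with the sub-drawing of $\pert{\mu}$ are: (i) edges incident to a pole and routed, on the left or right page, alongside $\pert{\mu}$; (ii) edges incident to \emph{external} vertices that happen to lie on the spine between $s_\mu$ and $t_\mu$ (as occurs, e.g., for sibling components of a P-node); and (iii) edges that nest entirely over the interval $[\pi(s_\mu),\pi(t_\mu)]$, which impose no constraint since their arcs enclose all of $\pert{\mu}$. I would argue that for (i) the relevant datum is only \emph{from which side} each pole can be reached, i.e.\ the parameters $s\_{vis}$ and $t\_{vis}$; and that for (ii) every such external vertex must sit on a portion of the spine that is visible (from the left, from the right, or both) within the sub-drawing of $\pert{\mu}$ — otherwise it would be enclosed by $\pert{\mu}$'s page-edges and hence separated from the rest of $G$, contradicting planarity. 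This is exactly the information recorded by $spine\_{vis}$. Hence the interaction of $G \setminus \pert{\mu}$ with $\pert{\mu}$ is completely determined by the triple $\langle s\_{vis}, spine\_{vis}, t\_{vis}\rangle$, i.e.\ by the embedding type (\cref{lem:constant-types}, \cref{fig:types}).

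Second, since $\langle \pi'_\mu, \sigma'_\mu\rangle$ and $\langle \pi_\mu, \sigma_\mu\rangle$ have the same type, the canonical drawing of $\langle \pi'_\mu, \sigma'_\mu\rangle$ exposes the same interface: the poles are reachable from the same sides, and there are visible spine portions on the same side(s). I would then perform the replacement: keep $G \setminus \pert{\mu}$ and its page assignment unchanged, reorder the internal vertices of $\pert{\mu}$ along the spine according to $\pi'_\mu$ (the internal vertex set is unchanged, so this is merely a re-shuffle within the pole-to-pole range), re-insert the interleaved external vertices into the now-available visible gaps on the matching side, and route $\pert{\mu}$'s edges according to $\sigma'_\mu$. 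It then remains to check three things: that the resulting spine order is a topological order of $G$ (which holds because all pole-to-pole paths are oriented $s_\mu \to t_\mu$ and the external vertices keep their relative order); that no two edges on the same page conflict (internal--internal conflicts are excluded by the validity of $\langle \pi'_\mu, \sigma'_\mu\rangle$, external--external conflicts are unchanged, and internal--external conflicts are excluded because external edges meet $\pert{\mu}$ only along visible spine portions and therefore nest outside $\pert{\mu}$'s arcs); and that all edges remain upward. This yields a \UBE of $G$ whose restriction to $\pert{\mu}$ is $\langle \pi'_\mu, \sigma'_\mu\rangle$.

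The main obstacle I expect is the interface characterization of the second paragraph — in particular, proving that the \emph{coarse}, three-valued type really captures all the relevant interaction, despite the fact that the exact visibility profile along the spine may differ between the two embeddings. The resolution is that the external part need not be reproduced verbatim: we are free to slide the interleaved external vertices and re-attach the pole-incident external edges into whichever visible gaps the new embedding offers, so only the \emph{existence} of left/right visibility near each pole and in the interior — exactly what the type records — is required. Making this re-insertion rigorous, and checking that it never forces an external--external or external--pole crossing, is the delicate part; the verification of upwardness and of page-wise planarity of the internal edges is then routine given the validity of $\langle \pi'_\mu, \sigma'_\mu\rangle$.
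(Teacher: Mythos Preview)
Your proposal is correct and follows essentially the same drawing-surgery approach as the paper: remove $\pert{\mu}$ from $\Gamma(\pi,\sigma)$, insert a (squeezed) copy of $\Gamma(\pi'_\mu,\sigma'_\mu)$ into the resulting face, slide the interleaved external vertex-blocks into the visible spine portions guaranteed by the common embedding type, and finally identify the duplicated poles. The paper makes the last step explicit via a continuous deformation inside the face (needed when the two copies of a pole are not spine-consecutive, governed by $s\_{vis}$ and $t\_{vis}$), which is precisely the ``re-attach the pole-incident external edges'' step you flag as delicate; otherwise your outline and the paper's proof coincide.
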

\begin{figure}[t!]
	\centering
	\begin{subfigure}[b]{.24\textwidth}
		\centering
		\includegraphics[page=1,scale=.7]{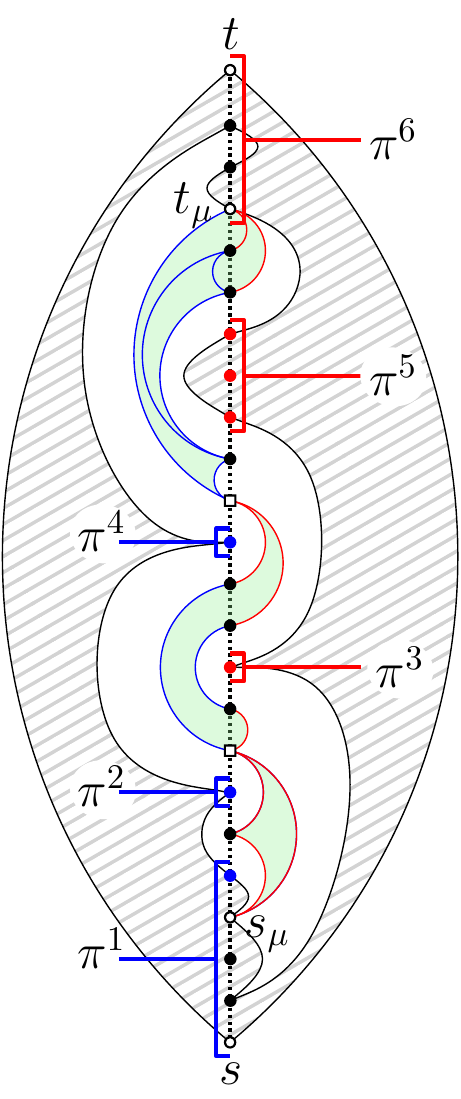}
		\subcaption{Drawing\\ $\Gamma(\pi,\sigma)$ of $G$}
		\label{fig:Canonical-Starting}
	\end{subfigure}\hfil
	\begin{subfigure}[b]{.24\textwidth}
		\centering
		\includegraphics[page=2,scale=.7]{replacement}
		\subcaption{Drawing\\ $\Gamma(\pi'_\mu,\sigma'_\mu)$ of $\pert{\mu}$}
		\label{fig:Canonical-small}
	\end{subfigure}\hfil
	\begin{subfigure}[b]{.24\textwidth}
		\centering
		\includegraphics[page=4,scale=.7]{replacement}
		\subcaption{Drawing $\Gamma^*$ of\\ $G_{\overline{\mu}} \cup \pert{\mu}$}
		\label{fig:Disconnected-embedding}
	\end{subfigure}\hfil
	\begin{subfigure}[b]{.24\textwidth}
		\centering
		\includegraphics[page=3,scale=.7]{replacement}
		\subcaption{Drawing\\ $\Gamma'$ of $G$}
		\label{fig:Final-embedding}
	\end{subfigure}
	\caption{Illustrations for \cref{le:replacement}. The \protect\UBEs of $\pert{\mu}$ are of Type \protect\LBN.}
\end{figure}
\begin{proof}[sketch]
	First, insert a possibly squeezed copy of $\Gamma(\pi'_\mu, \sigma'_\mu)$ (\cref{fig:Canonical-small}) inside $\Gamma(\pi, \sigma)$ (\cref{fig:Canonical-Starting}) in the interior of the face $f_\mu$ of the plane digraph $G_{\overline{\mu}}$ resulting from removing $\pert{\mu}$ (except its poles) from $\Gamma(\pi, \sigma)$. 
	Second, suitably move parts of the boundary of $f_\mu$ along portions of the spine incident to the inserted drawing of $\pert{\mu}$ (\cref{fig:Disconnected-embedding}). Then, continuously move the copies of the poles of $\mu$ inside $f_\mu$ towards their copies in $\Gamma(\pi, \sigma)$, without intersecting~any~edge, to obtain a drawing $\Gamma'$ of $G$~(\cref{fig:Final-embedding}).\end{proof}

Recall that, for each node $\mu$ of $\mathcal{T}$, $\pert{\mu}$ may have exponentially many embeddings, given by the permutations of the children of the P-nodes and by the flips of the R-nodes. \cref{le:replacement} is the reason why we only need to compute a single embedding for each embedding type realizable by $\pert{\mu}$, i.e., a constant number of embeddings instead of an exponential number.

We first describe an algorithm to decide if $G$ admits a \UBE and its running time. The same procedure can be easily refined to actually compute a \UBE of $G$, with no additional cost, by decorating each node $\mu \in \mathcal T$ with the embedding choices performed at $\mu$, for each of its $O(1)$ possible embedding types.

\subparagraph{Testing Algorithm.} The algorithm is based on computing, for each non-root node $\mu$ of $\mathcal{T}$, the set of embedding types realizable by $\pert{\mu}$, based on whether $\mu$ is an S-, P-, Q-, or an R-node. Since, by \cref{lem:constant-types,le:replacement}, $G$ admits a \UBE if and only if the pertinent graph of the unique child of the root Q-node admits an embedding of at least one of the $18$ possible embedding types, this approach allows us to solve the \TWOUBE problem for $G$.

Recall that the only possible embedding choices for $G$ happen at P- and R-nodes. While the treatment of Q- and S-nodes does not require any modification when considering the variable and the fixed embedding settings, for P- and R-nodes we will discuss how to compute the embedding types that are realizable by $\pert{\mu}$ in both such settings. In particular, in the fixed embedding scenario the above characterization needs to additionally satisfy the constraints imposed by the fixed embedding on the skeletons of the P- and R-nodes in $\mathcal{T}$.

Note that a leaf Q-node only admits embeddings of type \QLLL or \QRRR. Also, combining \UBEs of the two children of an S-node $\mu$ always yields a \UBE of $\pert{\mu}$, whose embedding type can be easily computed. 
In \cref{apx:s-node}, we prove the following.

\begin{restatable}{lemma}{lemsdecision}\label{lem:s-decision}
	Let $\mu$ be an S-node. The set of embedding types realizable by $\pert{\mu}$ can be computed in $O(1)$ time, both in the fixed and in the variable~embedding~setting.
\end{restatable}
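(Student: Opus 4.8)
The plan is to exploit the fact that, under our standing assumption, an S-node $\mu$ has exactly two children, say $\mu_1$ and $\mu_2$, so that $\pert{\mu}$ is the series composition of $\pert{\mu_1}$ and $\pert{\mu_2}$. Writing $c$ for the cut vertex shared by the two pertinent graphs, we may assume $s_\mu = s_{\mu_1}$, $t_{\mu_1} = c = s_{\mu_2}$, and $t_{\mu_2} = t_\mu$. Since $c$ is a cut vertex and the spine order of any \UBE of $\pert{\mu}$ is a topological order, all vertices of $\pert{\mu_1}$ precede all vertices of $\pert{\mu_2}$ along the spine; hence every \UBE of $\pert{\mu}$ restricts to \UBEs of the two children, and, as already observed, combining any \UBE of $\pert{\mu_1}$ with any \UBE of $\pert{\mu_2}$ along $c$ yields a \UBE of $\pert{\mu}$. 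Thus the realizable embeddings of $\pert{\mu}$ are exactly the pairwise combinations of realizable embeddings of its children.

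First I would show that the embedding type of a combined \UBE depends only on the embedding types of the two children. The spine of $\pert{\mu}$ between $s_\mu$ and $t_\mu$ is the concatenation at $c$ of the spine of $\pert{\mu_1}$ (in the interval below $c$) and the spine of $\pert{\mu_2}$ (in the interval above $c$). A horizontal ray shot leftward or rightward from a point below $c$ cannot be intersected by an edge of $\pert{\mu_2}$, since every such edge lies entirely above $c$; symmetrically for points above $c$. Therefore the visibility of each spine point is decided entirely within the child containing it. This yields the composition rule: the source-visibility of $\pert{\mu}$ equals that of $\mu_1$, the sink-visibility equals that of $\mu_2$, and the spine-visibility of $\pert{\mu}$ is left-visible (resp.\ right-visible) if and only if at least one child has a left-visible (resp.\ right-visible) spine portion, which determines the value among $B$, $L$, $R$, $N$. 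By \cref{lem:constant-types} the resulting type is always one of the $18$ admissible ones.

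Consequently, the set of embedding types realizable by $\pert{\mu}$ is exactly the image of this composition rule applied to all pairs $(\tau_1,\tau_2)$, where $\tau_1$ (resp.\ $\tau_2$) ranges over the types realizable by $\pert{\mu_1}$ (resp.\ $\pert{\mu_2}$). Each type set has at most $18$ elements, so there are at most $18^2$ pairs, each processed in $O(1)$; hence the whole set is computed in $O(1)$ time. That it suffices to keep a single representative per realizable child type is guaranteed by \cref{le:replacement}. Finally, an S-node carries no embedding freedom, as the series order of $\mu_1$ and $\mu_2$ around $c$ is fixed by its skeleton; thus the fixed and the variable embedding settings differ only in the recursively computed realizable type sets of the children, while the S-node step itself is identical in both, and the bound holds in either setting.

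The main obstacle I expect is the careful justification of the composition rule: establishing that visibility is genuinely local to each child, with no interaction at the junction vertex $c$, so that source- and sink-visibility are inherited unchanged from the appropriate child and the spine-visibility is precisely the merge of the two children's values. Once this locality is proved, the $O(1)$ enumeration over type pairs and the equivalence of the two embedding settings follow immediately.
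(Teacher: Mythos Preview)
Your proposal is correct and follows essentially the same approach as the paper: both exploit the two-children assumption for S-nodes, observe that any pair of \UBEs of the children combines to a \UBE of $\pert{\mu}$ (and conversely restricts), derive the same composition rule for the three components of the embedding type, and conclude $O(1)$ from the constant number of type pairs. Your write-up is in fact slightly more explicit than the paper's in justifying why visibility is local to each child and why the fixed and variable settings coincide at an S-node.
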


\subparagraph{P-nodes.} Let $\mu$ be a P-node with poles $s_\mu$ and $t_\mu$. Recall that an embedding for a P-node is obtained by choosing a permutation for its children and an embedding type for each child. Our approach to compute the realizable types of $\pert{\mu}$ consists of considering one type at a time for $\mu$. For each embedding type, we check whether the children of $\mu$, together with their realizable embedding types, can be arranged in a finite number of families of permutations (which we prove to be a constant number) so to yield an embedding of the considered type. 
In order to ease the following description, consider that the arrangements of the children for obtaining some embedding types can be easily derived from the arrangements to obtain the (horizontally) symmetric ones by (i) reversing the left-to-right sequence of the children in the construction and (ii) by taking, for each child, the horizontally-mirrored embedding type; for instance, the arrangements to construct an embedding of Type \LLN can be obtained from the ones to construct an embedding of Type \RRN, and vice versa. Moreover, two embedding types, namely Type \NBN and Type \NNN, are (horizontally) self-symmetric. As a consequence, in order to consider all the embedding types that are realizable by $\pert{\mu}$ we describe how to obtain only $10$ ``relevant'' embedding types (enclosed by a solid polygon in \cref{fig:types}): \RRR, \NRN, \RBR, \LBR, \NRR, \NBR, \RRN, \RBN, \NBN, and \NNN.


Next, we give necessary and sufficient conditions under which the pertinent graph of a P-node admits an embedding of Type \NRR. Then, we show how to test these conditions efficiently by exploiting a suitably defined flow network.
The conditions for the remaining types, given in \cref{see:apex-p-node}, can be tested with the same~algorithmic~strategy.

\begin{figure}[t]
	\begin{subfigure}[b]{.31\textwidth}
		\centering
		\includegraphics[page=2,height=55mm]{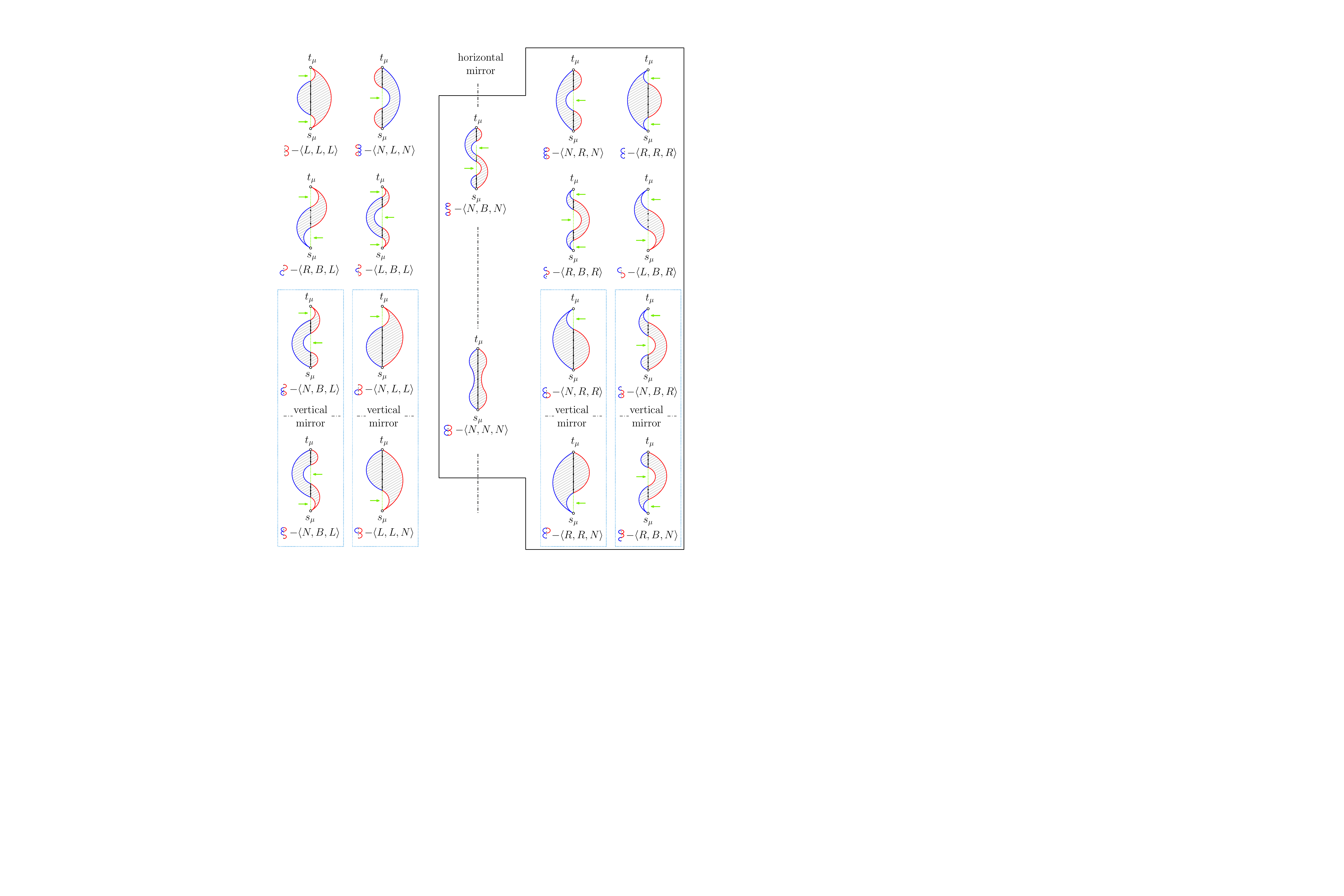}
		\subcaption{\NRR; Case~1}
		\label{fig:P-NRR-C1}
	\end{subfigure}
	\begin{subfigure}[b]{.31\textwidth}
		\centering
		\includegraphics[page=3,height=55mm]{Flow}
		\subcaption{\NRR; Case~2}
		\label{fig:P-NRR-C2}
	\end{subfigure}
	\begin{subfigure}[b]{.31\textwidth}
		\centering
		\includegraphics[page=4,height=55mm]{Flow}
		\subcaption{Network $\mathcal N$ for Case~2}
		\label{fig:P-NRR-C2-NetWork}
	\end{subfigure}
	\caption{Case 1 (a) and Case 2 (b) of \cref{le:p-NRR} for a P-nodes $\mu$ of Type \protect\NRR. The spine is colored either green, blue, or black. The green part is the portion of the spine that is visible from the right, the black parts correspond to the bottom-to-top sequences of the internal vertices of $\pert{\mu}$ inherited from the \UBEs of the children of $\mu$, the blue parts join sequences inherited from different children. (c) Capacitated flow network $\mathcal N$ with edge demands corresponding to (b).}
\end{figure}

\begin{restatable}[Type \protect\NRR]{lemma}{pnrr}\label{le:p-NRR}
	Let $\mu$ be a P-node. Type \NRR is admitted by $\mu$ in the variable embedding setting if and only if at least one of two cases occurs.
	\begin{inparaenum}[\bf ({Case}~1)]
		\item The children of $\mu$ can be partitioned into two parts: The first part consists either of a Type-\QRRR Q-node child, or of a Type-\NRR child, or both.
		The second part consists of any number, even zero, of Type-\LBR children.
		\item The children of $\mu$ can be partitioned into three parts: The first part consists either of a Type-\QRRR Q-node child, or of a non-Q-node Type-\RRR child, or both.
		The second part consists of any number, even zero, of Type-\RBR children. The third part consists of any positive number of Type-\NBR or Type-\LBR children, with at most one Type-\NBR~child.
	\end{inparaenum}
\end{restatable}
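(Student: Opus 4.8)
The plan is to prove both directions of the characterization by analysing the left‑to‑right sequence $C_1,\dots,C_k$ of children of $\mu$ together with the embedding type assigned to each child; by \cref{le:replacement} only the \emph{realizable} types of a child matter, so I may fix one representative sub‑embedding per child. First I would translate the abstract type labels of \cref{lem:constant-types} into two elementary local rules. \textbf{Pole rule:} the spine segment incident to $s_\mu$ is visible from the right (resp.\ left) page exactly when $s_\mu$ has no incident edge on the right (resp.\ left) page; hence $s\_{vis}=R$ iff all edges at $s_\mu$ lie on the left page, $s\_{vis}=L$ iff they all lie on the right page, and $s\_{vis}=N$ iff both pages carry an edge at $s_\mu$ (symmetrically at $t_\mu$ for $t\_{vis}$). \textbf{Blocking rule:} a child has $spine\_{vis}\in\{R,N\}$ iff its whole left side is covered by its own edges, in which case it is \emph{left‑covering} and blocks every leftward ray crossing its horizontal strip. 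These rules reduce the geometric type \NRR to purely combinatorial constraints on the $C_i$.

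For necessity I would read off, from a given embedding of type \NRR, the following. From $t\_{vis}=R$ and the pole rule at $t_\mu$, all edges at $t_\mu$ lie on the left page, so every child has $z=R$; consulting \cref{fig:types}, each $C_i$ is therefore of type \QRRR, \RRR, \RBR, \LBR, \NBR, or \NRR (the only types with third coordinate $R$). From $spine\_{vis}=R$ (no left visibility anywhere) and the blocking rule, the leftmost child $C_1$ must be left‑covering, hence $C_1\in\{\QRRR,\RRR,\NRR\}$; moreover $C_1$ then blocks all leftward rays, which is what annihilates the left visibility of every child to its right. From $s\_{vis}=N$ and the pole rule at $s_\mu$, source‑edges occur on both pages; since in the angular order at $s_\mu$ the left‑page edges precede the right‑page ones, the children appear in the order \emph{(first coordinate $R$)}\,$\prec$\,\emph{(at most one first coordinate $N$)}\,$\prec$\,\emph{(first coordinate $L$)}, and $s\_{vis}=N$ forces either a single straddling child ($x=N$) or both an $x=R$ child and an $x=L$ child to be present.

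The decisive step, and the one I expect to be the main obstacle, is the planarity argument that pins down the interior of the sequence. Two \emph{non‑Q} left‑covering children cannot be co‑embedded: each contains on the left page an internally disjoint $s_\mu$–$t_\mu$ path of length at least two (a non‑Q child has an internal vertex), and two such spanning paths between the common poles necessarily interleave and cross. A \QRRR child, by contrast, is the single edge $(s_\mu,t_\mu)$; as a length‑one $s_\mu t_\mu$ path it shares an endpoint with every edge of any other left path and can therefore enclose exactly one non‑Q left‑covering child without crossing. Since $G$ is simple there is at most one edge $(s_\mu,t_\mu)$, so Part~1 consists of at most one \QRRR child together with at most one non‑Q left‑covering child—this is precisely the ``or both''. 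Combining this uniqueness with the source‑ordering yields the dichotomy: if the non‑Q left‑covering child is \NRR (first coordinate $N$), then being the unique straddling child it may be followed only by first‑coordinate‑$L$ children, which (having $z=R$ and $spine\_{vis}=B$) are all \LBR, giving \textbf{Case~1}; if instead it is \RRR, the remaining first‑coordinate‑$R$ children cannot be left‑covering and are hence \RBR (Part~2), while the first‑coordinate‑$L$ children and the at‑most‑one straddling first‑coordinate‑$N$ child are \LBR and \NBR (Part~3), giving \textbf{Case~2}; the ``at most one \NBR'' is exactly the ``at most one straddling child''.

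For sufficiency I would run the construction in reverse: given a partition as in Case~1 or Case~2, draw the \QRRR edge (if present) as the outermost left arc, nest the non‑Q left‑covering child immediately inside it, and then place the remaining children left to right in the source‑order dictated by their first coordinate, assigning each child's edges to the global pages according to its type. The blocking rule and the nesting of the \QRRR arc guarantee non‑crossing on both pages, and the pole rule then certifies $s\_{vis}=N$, $t\_{vis}=R$, and $spine\_{vis}=R$, i.e.\ the combined embedding is of Type \NRR (cf.\ \cref{fig:P-NRR-C1,fig:P-NRR-C2}). Finally, since each child need only \emph{realize} the prescribed type, deciding whether such a partition exists over all type‑assignments is a feasibility question with unbounded numbers of \RBR and \LBR children but a single anchor and at most one \NBR; this is exactly what the capacitated flow network with demands of \cref{fig:P-NRR-C2-NetWork} is built to test.
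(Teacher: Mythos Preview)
Your necessity argument is more careful than the paper's (which dispatches the direction in a single sentence) and the key structural observations---every child has third coordinate $R$, the leftmost child must be left-covering, at most one non-$Q$ left-covering child survives a crossing argument, and at most one child can straddle at $s_\mu$---are all sound. One small gap: your dichotomy is phrased as ``if the non-$Q$ left-covering child is \NRR \dots; if instead it is \RRR \dots'', which tacitly assumes such a child exists. When $C_1$ is the \QRRR edge and no non-$Q$ child is left-covering you still need to land in one of the two cases; this is easy, but should be said.

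The substantive gap is in the sufficiency direction for {\bf Case~2}. Your plan is to ``nest the non-$Q$ left-covering child immediately inside [the \QRRR arc], and then place the remaining children left to right in the source-order''. Read literally as a concatenation $s_\mu,\pi_{\mathrm{RRR}},\pi_{\mathrm{RBR}_1},\dots,\pi_{\mathrm{NBR}},\pi_{\mathrm{LBR}_1},\dots,t_\mu$, this fails: the last left-page edge of the \RRR child's left outer path (some $(v_m,t_\mu)$) and the first left-page source edge of an \RBR child (some $(s_\mu,w_1)$ with $w_1$ above $v_m$) interleave, since $s_\mu<v_m<w_1<t_\mu$. The paper avoids this by \emph{splitting} each \RBR/\NBR child at its left-visible gap into a lower block $\pi'_i$ and an upper block $\pi''_i$, and then nesting the \RRR child's vertices between these blocks:
\[
s_\mu,\ \pi'_p,\pi'_{p+1},\dots,\pi'_q,\ \pi^*,\ \pi''_q,\pi''_{q-1},\dots,\pi''_p,\ \pi_{q+1},\dots,\pi_k,\ t_\mu.
\]
This interleaving is the whole point of requiring the middle children to have $spine\_{vis}=B$ (Types \RBR/\NBR): the $B$ supplies the left-visible slot into which the \RRR anchor is threaded. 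Your blocking-rule heuristic does not produce this ordering, and without it Case~2 is not yet a construction. Once you incorporate the splitting, the rest of your plan (pages inherited from the $\sigma_i$, the \QRRR edge on the left, verification of $s\_{vis}=N$, $spine\_{vis}=R$, $t\_{vis}=R$ via your pole and blocking rules) goes through.
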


Regarding the time complexity of testing the existence of a Type-\NRR embedding of $\pert{\mu}$, we show that deciding if one of {\bf ({Case}~1)} or {\bf ({Case}~2)} of \cref{le:p-NRR} applies can be reduced to a network flow problem on a network $\mathcal{N}$ with edge demands. The network for {\bf ({Case}~2)} is depicted in \cref{fig:P-NRR-C2-NetWork}. The details of this construction are given in \cref{see:apex-p-node}, where the next lemma is proven.

\begin{restatable}{lemma}{lepdecisionvariable}\label{lem:p-decision-variable}
	Let $\mu$ be a P-node with $k$ children. The set of embedding types realizable by $\pert{\mu}$ can be computed in $O(k^2)$ time in the variable embedding setting.
\end{restatable}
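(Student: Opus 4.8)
The plan is to show that the set of realizable embedding types of a P-node $\mu$ with $k$ children can be computed in $O(k^2)$ time by testing, for each of the $18$ candidate embedding types, whether that type is admitted by $\pert{\mu}$. Since there are only $O(1)$ embedding types, it suffices to bound the time for a single type. By the symmetry reduction already described in the overview (reversing the left-to-right order of children and horizontally mirroring each child's type), it suffices to handle the $10$ relevant types \RRR, \NRN, \RBR, \LBR, \NRR, \NBR, \RRN, \RBN, \NBN, and \NNN; the mirror image of a realizable type is realizable, and the mirror of a child type is realizable by that child iff the original type is, so the symmetric types cost only an extra relabeling. Thus the entire computation reduces to running $O(1)$ single-type tests, and the target bound follows once each test is shown to run in $O(k^2)$ time.

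First I would invoke \cref{le:p-NRR} as the template. For Type \NRR, that lemma gives an exact combinatorial characterization in terms of how the children (each with its already-computed set of realizable types) can be partitioned into the prescribed ordered parts. I would translate this characterization into a flow problem on the capacitated network $\mathcal{N}$ with edge demands sketched in \cref{fig:P-NRR-C2-NetWork}: each child contributes a unit of ``flow'' that must be routed to exactly one of the part-slots compatible with one of its realizable types, the lower bounds (demands) encode the ``at least one'' requirements (e.g.\ the first part must contain a \QRRR or \RRR child, the third part must be nonempty, at most one \NBR child), and upper bounds encode the ``at most'' requirements. A feasible integral flow then corresponds exactly to a valid assignment of children to parts realizing the type, and feasibility of a flow network with edge demands is checkable in polynomial time; with $O(k)$ children and $O(1)$ part-types the network has $O(k)$ nodes and $O(k)$ arcs, and a standard maximum-flow computation yields the $O(k^2)$ bound (indeed, the feasibility-with-demands reduction adds a super-source/super-sink and the resulting max-flow value is $O(k)$, so an augmenting-path method runs in $O(k^2)$). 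I would then note that each of the remaining relevant types admits an analogous characterization (these are stated and proven in \cref{see:apex-p-node}) and is tested by an analogously constructed network of the same size, hence also in $O(k^2)$ time.

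I expect the main obstacle to be establishing the correctness of the reduction to network flow, that is, proving that a feasible integral flow in $\mathcal{N}$ corresponds bijectively (at the level of realizability, not of individual embeddings) to an arrangement of the children witnessing the target type. This requires verifying two directions: given an embedding of $\pert{\mu}$ of the target type, reading off the left-to-right cyclic arrangement of children around the poles $s_\mu$ and $t_\mu$ and the type each child contributes yields a partition into the prescribed parts, hence a feasible flow; and conversely, given a feasible flow, one can realize a single embedding of each child of the type dictated by the flow (this is where \cref{le:replacement} is essential, since it guarantees that only one representative embedding per realizable child-type must be kept) and stack these embeddings in the order determined by the parts to obtain an embedding of $\pert{\mu}$ of the target type, with the visibility pattern of the spine matching the claimed \emph{s\_vis}, \emph{spine\_vis}, \emph{t\_vis} values. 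The integrality of the flow is what makes the second direction sound, as each child is assigned to exactly one part.

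Finally, I would combine the pieces: running the $O(1)$ single-type tests, each costing $O(k^2)$, and applying the symmetry reduction for the non-relevant types, computes the full set of realizable embedding types of $\pert{\mu}$ in $O(k^2)$ time, as claimed. In the fixed-embedding setting the same networks are used, with the additional constraint that the cyclic order of the children is prescribed by the fixed embedding; this only restricts (rather than enlarges) the set of feasible flows and does not affect the asymptotic bound.
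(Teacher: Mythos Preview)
Your proposal is correct and follows essentially the same approach as the paper: reduce to the ten relevant types by horizontal symmetry, invoke the per-type characterization lemmas (of which \cref{le:p-NRR} is the template), encode each characterization as a feasible-flow problem on an $O(k)$-size network with demands, and solve it via Ford--Fulkerson in $O(|E(\mathcal N)|\cdot f)=O(k^2)$ since the required flow value is $k$. One small inaccuracy: your closing remark that the fixed-embedding setting reuses the same networks is not how the paper proceeds---there the prescribed left-to-right order is handled by a direct $O(k)$ sweep (\cref{lem:p-decision-fixed}) rather than by constraining the flow---but this is orthogonal to the lemma at hand, which concerns only the variable embedding setting.
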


The fixed embedding scenario for a P-node $\mu$ can be addressed by processing the children of $\mu$ in the left-to-right order defined by the given embedding~of~$G$. The details of such an approach are given in \cref{apx:p-case-fixed}, where the following is proven.

\begin{lemma}\label{lem:p-decision-fixed}
	Let $\mu$ be a P-node with $k$ children. The set of embedding types realizable by $\pert{\mu}$ can be computed in $O(k)$ time in the fixed embedding setting.
\end{lemma}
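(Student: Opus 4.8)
The plan is to reduce the fixed-embedding case to the same combinatorial characterization already established for the variable-embedding case (\cref{le:p-NRR} and its analogues in \cref{see:apex-p-node}), exploiting the one simplification that the left-to-right order of the children of $\mu$ is no longer a degree of freedom: it is prescribed by the given embedding $\mathcal{E}$ of $G$. First I would read off from $\mathcal{E}$ the fixed order $\nu_1, \dots, \nu_k$ in which the children appear between the poles $s_\mu$ and $t_\mu$, and recall that the bottom-up traversal has already equipped each child $\nu_i$ with its set $\mathcal{R}_i$ of realizable embedding types, with $|\mathcal{R}_i| = O(1)$ since there are only $18$ types in all.

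For each of the $O(1)$ target types $T$ of $\mu$, recall that the variable-case characterization describes the admissible left-to-right arrangements as a partition of the children into a constant number of consecutive parts, each drawing its children from a prescribed constant-size set of types under bounded cardinality constraints (for instance ``at most one Type-\NBR child'' in \cref{le:p-NRR}). Equivalently, $\mu$ admits $T$ if and only if there is an ordering of the children and a choice $\tau_i \in \mathcal{R}_i$ for each such that the sequence $\tau_1 \tau_2 \cdots \tau_k$ lies in a fixed regular language $L_T$ over the alphabet of the $18$ types (the cardinality constraints being encoded by constant-range counters, and within-part ordering by the regular expression itself). In the fixed setting the ordering is no longer free but is the given $\nu_1, \dots, \nu_k$, so deciding whether $\mu$ admits $T$ reduces to testing whether some sequence in $\prod_i \mathcal{R}_i$, read in this fixed order, belongs to $L_T$.

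I would solve this membership test with set-valued positions by a single left-to-right sweep that maintains the set of reachable states of a constant-size automaton for $L_T$: processing $\nu_i$ follows, for each currently reachable state and each $\tau_i \in \mathcal{R}_i$, the corresponding transition, which costs $O(1)$ per child and hence $O(k)$ per type. The patterns for the $8$ mirror types are obtained from the $10$ explicit ones by reversing the part order and replacing each allowed type by its mirror, and are parsed in exactly the same way; running all $18$ tests yields the realizable set in $O(k)$ total time. For correctness I would argue both directions: necessity is immediate, since $\mathcal{E}$ fixes the order and hence any \UBE of $G$ restricts to an order-respecting assignment in $\prod_i \mathcal{R}_i$ whose composed type is $T$; sufficiency follows because, exactly as in the variable-case construction, the visibilities defining the type of $\mu$ — those of $s_\mu$, of $t_\mu$, and of the spine segments joining consecutive children — depend only on the ordered sequence of child types, so any accepted sequence can be realized by plugging in, for each child, one stored embedding of its chosen type (\cref{le:replacement}).

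I expect the main obstacle to be bookkeeping rather than anything conceptual. The delicate points are confirming that each of the $10$ explicit patterns (and therefore their mirrors) is genuinely expressible by a constant-size automaton with bounded counters, and — more importantly — verifying \emph{completeness} of the order-respecting search: that encoding the partition as a non-decreasing assignment of part indices, together with the empty-part and at-most-one options, never forces a locally greedy choice that rules out an otherwise feasible assignment. Establishing that the reachable-state sweep explores all such assignments, so that a type is declared realizable exactly when it truly is, is the step I would justify most carefully, cross-checking each pattern against the corresponding variable-case lemma.
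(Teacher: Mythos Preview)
Your automaton-over-type-sequences idea is sound, but the specific instantiation --- deriving the regular language $L_T$ from the variable-embedding patterns of \cref{le:p-NRR} and its analogues --- has a real gap, and it is precisely the one you flag as ``completeness''. Those lemmas characterize which \emph{multisets} of child types can be permuted to realize $T$; their sufficiency proofs exhibit \emph{one} ordering per case, and nothing in them guarantees that the resulting patterns cover \emph{every} ordered type sequence that composes to $T$. Closing that gap would mean proving, type by type, that no other orderings work --- work essentially equivalent to deriving direct composition rules from scratch, so this is not mere bookkeeping.

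The paper sidesteps the issue by taking the prefix type itself as the automaton state. It observes that the embedding type of the prefix $\mypert^i(\mu)$ (the parallel composition of the first $i$ children in the fixed left-to-right order) is determined by the type of $\mypert^{i-1}(\mu)$ together with the type chosen for $\pert{\mu_i}$, and writes down an explicit partial binary operation on the $18$-element type set: given prefix type $\langle x,y,z\rangle$ and child type $\langle a,b,c\rangle$, either reject the pair as geometrically incompatible in this order, or output the composite $\langle p,q,w\rangle$ via a handful of case rules on the three components separately. A single left-to-right sweep maintains the set of reachable prefix types (at most $18$), updating in $O(1)$ per child, and the final set is exactly the set of types realizable by $\pert{\mu}$. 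This is your reachable-state sweep with the ``right'' automaton, obtained directly from the geometry of nesting one more child on the right rather than reverse-engineered from the variable-case patterns --- so the completeness question never arises.
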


\cref{lem:s-decision,lem:p-decision-fixed} yield a counterpart, in the fixed embedding setting, of the linear-time algorithm by Di Giacomo et al.~\cite{GiacomoDLW06} to compute \UBEs of series-parallel graphs.

\begin{theorem}
	There exists an $O(n)$-time algorithm to decide whether an $n$-vertex series-parallel $st$-graph admits an embedding-preserving \UBE.
\end{theorem}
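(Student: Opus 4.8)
The plan is to specialize the general branchwidth-parameterized framework developed in \cref{se:testing} to the case of series-parallel $st$-graphs, exploiting the fact that such graphs have a particularly simple SPQR-tree structure. First I would observe that a biconnected series-parallel graph has an SPQR-tree $\mathcal T$ that contains only S-, P-, and Q-nodes (no R-nodes): this is precisely the characterization of series-parallel graphs in terms of their triconnected decomposition. Since the only computationally expensive part of the general algorithm is the treatment of R-nodes (which requires the sphere-cut decomposition machinery and incurs the $O(f(\bw)\cdot n + n^3)$ cost), removing R-nodes from consideration leaves us with only S-, P-, and Q-nodes, for which \cref{lem:s-decision} and \cref{lem:p-decision-fixed} provide efficient per-node processing in the fixed embedding setting.

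The key steps, in order, are as follows. I would root $\mathcal T$ at the Q-node of the edge $(s,t)$ (adding this edge first if it is not already present, as justified in \cref{se:testing}) and process the tree bottom-up, computing at each non-root node $\mu$ the set of embedding types realizable by $\pert{\mu}$. For a leaf Q-node the realizable types are the constant set $\{\QLLL,\QRRR\}$, computable in $O(1)$ time. For an S-node $\mu$, by \cref{lem:s-decision} the realizable types are computed in $O(1)$ time from those of its two children. For a P-node $\mu$ with $k_\mu$ children, by \cref{lem:p-decision-fixed} the realizable types in the fixed embedding setting are computed in $O(k_\mu)$ time. Finally, by \cref{lem:constant-types,le:replacement}, $G$ admits an embedding-preserving \UBE if and only if the unique child of the root Q-node admits at least one of the $18$ embedding types; this is an $O(1)$ check once the bottom-up computation is complete.

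For the running time, I would sum the per-node costs over the whole tree. The S- and Q-nodes contribute $O(1)$ each, and since $\mathcal T$ has $O(n)$ nodes this totals $O(n)$. The P-nodes contribute $\sum_{\mu \text{ P-node}} O(k_\mu)$ time, where $k_\mu$ is the number of children of $\mu$; since the children of the various P-nodes correspond to edge-disjoint parts of the decomposition, $\sum_\mu k_\mu = O(n)$, so the total contribution of all P-nodes is also $O(n)$. Adding the $O(n)$ cost of constructing the SPQR-tree itself and of augmenting $G$ with $(s,t)$, the overall running time is $O(n)$. To actually produce the embedding in the positive case, as noted in \cref{se:testing}, each node is decorated with the embedding choices realizing each of its $O(1)$ embedding types, which adds no asymptotic overhead.

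The main obstacle I anticipate is not conceptual but bookkeeping: one must verify that the fixed-embedding P-node procedure of \cref{lem:p-decision-fixed} genuinely suffices here, i.e., that the structural reason series-parallel graphs lack R-nodes lets us invoke that lemma verbatim, and that the amortization $\sum_\mu k_\mu = O(n)$ is correctly charged to the edges of $G$ rather than double-counted across the tree. Since both ingredients are already established earlier in the paper, the proof reduces to assembling them, and the statement follows as an immediate corollary of \cref{lem:s-decision,lem:p-decision-fixed} together with the $R$-node-free characterization of series-parallel graphs.
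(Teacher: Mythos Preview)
Your proposal is correct and follows precisely the paper's approach: the theorem is stated immediately after \cref{lem:s-decision,lem:p-decision-fixed} as a direct consequence of those two lemmas, exploiting that series-parallel graphs have no R-nodes in their SPQR-tree. Your write-up simply expands the bookkeeping the paper leaves implicit.
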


\begin{figure}[t!]
	\begin{subfigure}[b]{.3\textwidth}
		\centering
		\includegraphics[page=6,scale=.7]{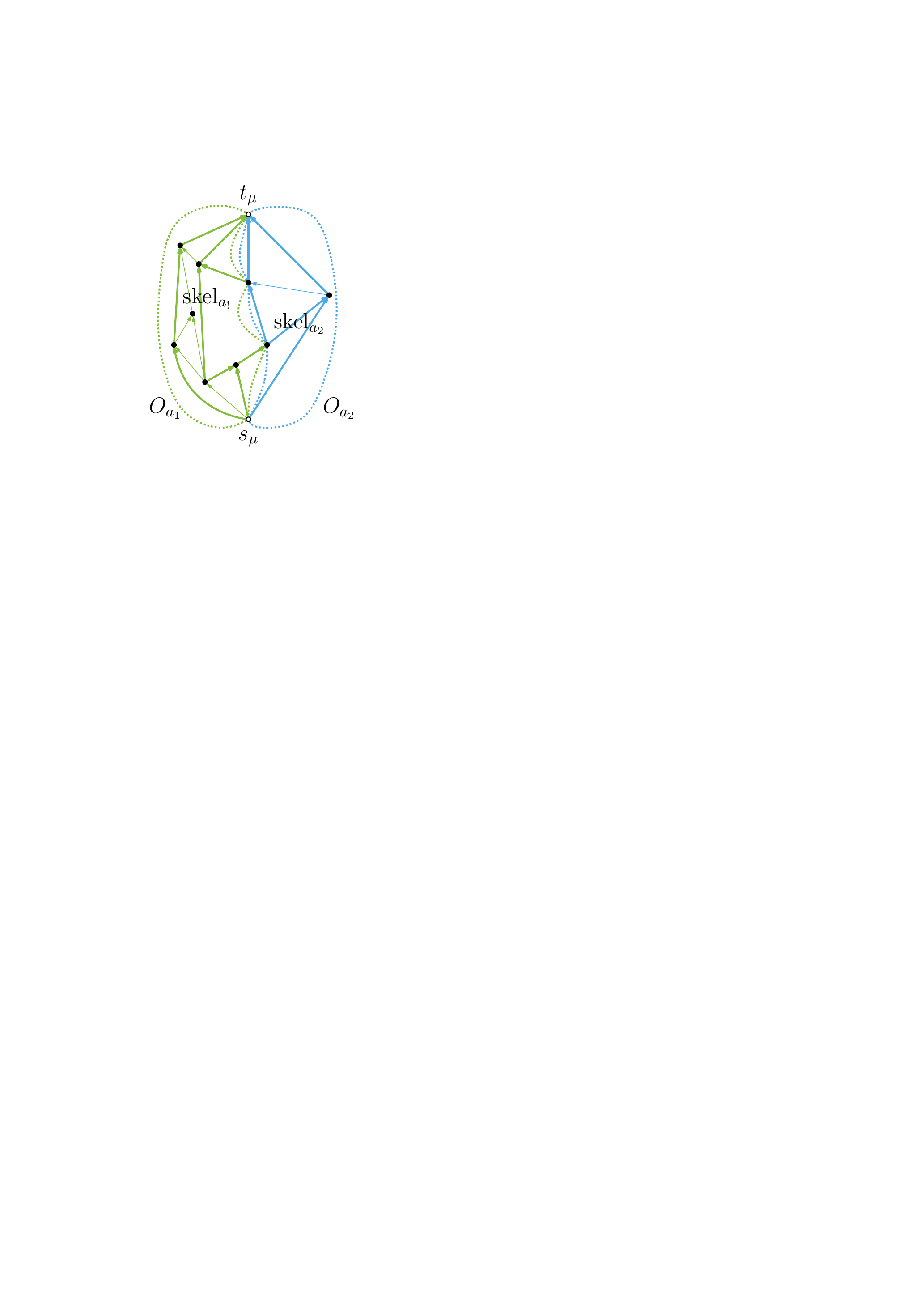}
		\subcaption{A sphere-decomposition of $\skel(\mu) \cup (s_\mu,t_\mu)$}
		\label{fig:rigid-sphere-cut}
	\end{subfigure}\hfil
	\begin{subfigure}[b]{.2\textwidth}
		\centering
		\includegraphics[page=2,scale=.7]{R-node}
		\subcaption{Outer faces of $\skel_{a_1}$ and $\skel_{a_2}$}
		\label{fig:rigid-b}
	\end{subfigure}\hfil
	\begin{subfigure}[b]{.2\textwidth}
		\centering
		\includegraphics[page=3,scale=.7]{R-node}
		\subcaption{Directed paths of $\skel_{a_1}$ and $\skel_{a_2}$}
		\label{fig:rigid-c}
	\end{subfigure}\hfil
	\begin{subfigure}[b]{.2\textwidth}
		\centering
		\includegraphics[page=4,scale=.7]{R-node}
		\subcaption{Graph $A=A_1 \cup A_2$}
		\label{fig:rigid-d}
	\end{subfigure}
	\caption{Illustrations for the R-node case. (a) A partial sphere-cut decomposition of the graph $\skel(\mu) \cup (s_\mu,t_\mu)$ rooted at the edge $(s_\mu,t_\mu)$, where $\mu$ is an R-node. The nooses are dotted curves. 
  (b) Outer faces of graphs $\skel_{a_1}$ and $\skel_{a_2}$. (c) The graphs defined by the directed paths on the outer face of $\skel_{a_1}$ and $\skel_{a_2}$. (d) The auxiliary graph $A$ for $\skel_{a}$.
	}
	\label{fig:rigid}
\end{figure}

\subparagraph{R-nodes.} Let $\mu$ be an R-node and $(T,\xi,\Pi)$ be a sphere-cut decomposition of $\skel(\mu) \cup (s_\mu,t_\mu)$ of width~$\bw$, rooted at the node $\rho$ with $\xi(\rho)=(s_\mu,t_\mu)$; refer to \cref{fig:rigid-sphere-cut}.
Each arc $a$ of $T$ is associated with a subgraph $\skel_a$ of $\skel(\mu)$ and with a subgraph $\mypert_a$ of $\pert{\mu}$, both bounded by the noose of $a$. Let $a_1$ and $a_2$ be the two arcs leading to $a$ from the bottom of $T$.
Intuitively, our strategy to compute the embedding types of $\pert{\mu}$ is to visit $T$ bottom-up maintaining a succinct description of size $O(\bw)$ of the properties of the \UBEs of $\mypert_a$. 
To this aim, we construct cycles composed of directed edges that are in one-to-one correspondence with maximal directed paths along the outer face of $\skel_{a_1}$ and $\skel_{a_2}$ (\cref{fig:rigid-b,fig:rigid-c}), which we use to define an auxiliary graph $A$ whose \UBEs concisely represent the possible \UBEs of $\mypert_a$ obtained by combining the \UBEs of $\mypert_{a_1}$ and $\mypert_{a_2}$ (\cref{fig:rigid-d}).
When we reach the arc $a^*$ incident to $\rho$  with $\skel_{a^*}=\skel(\mu)$, we use the computed properties to determine the embedding types realizable by $\pert{\mu}$. We provide full details in \cref{apx:r-node}.

\begin{restatable}{lemma}{ledecision}\label{le:r-decision}
	Let $\mu$ be an R-node whose skeleton $\skel(\mu)$ has $k$ children and branchwidth $\bw$. The set of embedding types realizable by $\pert{\mu}$ can be computed in $O(2^{O(\bw\log{\bw})}\cdot k)$ time, both in the fixed and in the variable embedding setting, provided that a sphere-cut decomposition $\langle T_\mu, \xi_\mu, \Pi_\mu \rangle$ of width $\beta$ of $\skel^+(\mu)$ is given.
\end{restatable}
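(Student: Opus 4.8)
The plan is to compute the realizable embedding types of $\pert{\mu}$ by dynamic programming over the given sphere-cut decomposition $\langle T_\mu, \xi_\mu, \Pi_\mu \rangle$ of $\skel^+(\mu) = \skel(\mu) \cup (s_\mu,t_\mu)$, rooted at the arc $\rho$ with $\xi_\mu(\rho)=(s_\mu,t_\mu)$ and visited bottom-up. Each arc $a$ of $T_\mu$ has a noose $\mathcal O_a$ that bounds a subgraph $\skel_a \subseteq \skel(\mu)$ and the corresponding subgraph $\mypert_a \subseteq \pert{\mu}$, with $|\midset(a)| \le \bw$. For every arc $a$ I would store a set of \emph{signatures}, where a signature is a succinct encoding of a \UBE of $\mypert_a$ that records: (i) the bottom-to-top order along the spine of the $O(\bw)$ boundary vertices of $\midset(a)$; (ii) the page ($L$ or $R$) assigned to each boundary edge of $\mypert_a$ lying on the outer face of $\skel_a$; and (iii) for each portion of the spine delimited by consecutive boundary vertices, whether it is visible from the left page and whether it is visible from the right page. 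Since a signature is an ordering of $O(\bw)$ elements decorated with labels from a constant-size alphabet, there are at most $2^{O(\bw\log\bw)}$ of them per arc. The key justification for replacing full \UBEs by their signatures is a localized version of \cref{le:replacement}: as the two sides of any noose induce connected subgraphs and the nooses form a laminar family, two \UBEs of $\mypert_a$ sharing a signature are interchangeable for the purpose of completing a \UBE of $\pert{\mu}$, so it suffices to keep one representative per signature.

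The base case is a leaf arc $a$, where $\xi_\mu(a)$ is a single edge: its constantly many signatures come from drawing that edge on either page. For an internal arc $a$ with children $a_1$ and $a_2$, I would combine the two signature sets as follows. For each child $a_i$, contract every maximal directed path on the outer face of $\skel_{a_i}$ into a single directed edge, obtaining a cycle $A_i$ of $O(\bw)$ directed edges (\cref{fig:rigid-b,fig:rigid-c}); gluing $A_1$ and $A_2$ along the vertices they share on $\mathcal O_a$ yields an auxiliary graph $A = A_1 \cup A_2$ with $O(\bw)$ vertices and edges (\cref{fig:rigid-d}). For each mutually compatible pair of a signature of $a_1$ and a signature of $a_2$, I would enumerate the \UBEs of $A$ that are consistent with both signatures on the two cycles; restricting each such \UBE of $A$ to the boundary $\midset(a)$ produces one signature of $a$, whose visibility labels are obtained by composing those of the two children with the newly inserted portions of the spine. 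Enumerating all \UBEs of the $O(\bw)$-vertex graph $A$ takes $2^{O(\bw\log\bw)}$ time, and the number of compatible signature pairs is also $2^{O(\bw\log\bw)}$, so each arc is processed in $2^{O(\bw\log\bw)}$ time; over the $O(k)$ arcs of $T_\mu$ this yields the stated $O(2^{O(\bw\log\bw)}\cdot k)$ running time.

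When the sweep reaches the arc $a^*$ with $\skel_{a^*} = \skel(\mu)$, the only boundary vertices are $s_\mu$ and $t_\mu$; hence each surviving signature immediately reveals the visibility of the spine incident to $s_\mu$, incident to $t_\mu$, and in between, which is precisely the information $\langle s\_{vis}, spine\_{vis}, t\_{vis}\rangle$ defining an embedding type (cf.\ \cref{lem:constant-types}). Collecting these over all surviving signatures gives the set of embedding types realizable by $\pert{\mu}$. The fixed embedding setting costs no more: the given embedding of $\skel(\mu)$ prescribes, for each arc, which side of $\mathcal O_a$ is the interior and how the children are flipped, so in the combination step one simply keeps only the signatures consistent with this prescription, whereas in the variable setting all consistent flips are retained.

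The main obstacle is proving that the $O(\bw)$-size signature carries all the information needed to extend a partial \UBE, i.e., the localized replacement property invoked above: that whether a \UBE of $\mypert_a$ extends to a \UBE of $\pert{\mu}$ depends only on the spine order, page labels, and visibility pattern along $\mathcal O_a$, and not on the interior of $\mypert_a$. Establishing this hinges on the two structural guarantees of sphere-cut decompositions --- connectedness of both sides of each noose and laminarity of the noose family --- which together prevent any interaction between the interior and the exterior of $\mathcal O_a$ other than through the $O(\bw)$ boundary vertices. The remaining, more technical but routine, task is to verify that the auxiliary graph $A$ faithfully captures every way the \UBEs of $\mypert_{a_1}$ and $\mypert_{a_2}$ can be merged into a \UBE of $\mypert_a$ along their shared boundary; once this is done, the visibility bookkeeping, the readout at $a^*$, and the complexity analysis follow directly.
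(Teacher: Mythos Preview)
Your plan matches the paper's: bottom-up dynamic programming on the sphere-cut decomposition, storing at each arc a set of $2^{O(\bw\log\bw)}$ signatures (the paper's \emph{outer-shape pairs}), merging at internal arcs by brute-forcing the \UBEs of the $O(\bw)$-size auxiliary graph $A=A_1\cup A_2$, and justifying the abstraction via a localized replacement lemma (the paper's \cref{lem:replacement-pert-a}). Two small corrections: (i)~a leaf of $T_\mu$ is a \emph{virtual} edge of $\skel(\mu)$ representing a child $\mu_\ell$ of $\mu$ in $\mathcal T$, so the base-case signatures must be read off from the already-computed embedding types of $\pert{\mu_\ell}$, not merely from ``drawing that edge on either page''; (ii)~the paper's signature records the spine order of the endpoints $X_a\supseteq\midset(a)$ of the $O(\bw)$ maximal directed paths along the outer face of $\skel_a$ (each portion between consecutive noose vertices splits into at most three such paths, cf.\ \cref{lem:noose-decomposition}) and, correspondingly, subdivides each path into three edges in $A_i$ rather than contracting it to one, so that the \UBEs of $A$ directly encode the outer-visibility triples.
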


By \cref{lem:s-decision,lem:p-decision-fixed,lem:p-decision-variable,le:r-decision} and since $\mathcal T$ has $O(|G|)$ size~\cite{DBLP:journals/siamcomp/BattistaT96,DBLP:journals/siamdm/DidimoGL09}, we get the following.

\begin{theorem}\label{th:test-branchwidth}
	There exists an $O(2^{O(\bw \log {\bw})}\cdot n + n^2 + g(n))$-time algorithm to decide if an $n$-vertex planar (plane) $st$-graph of branchwidth $\bw$  admits a (embedding-preserving) \UBE, where $g(n)$ is the computation time of a sphere-cut decomposition of an $n$-vertex plane graph.
\end{theorem}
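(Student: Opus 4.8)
The plan is to assemble the per-node lemmas into a single bottom-up traversal of the SPQR-tree $\mathcal{T}$ of $G$ and then to carry out the time accounting node by node. First I would preprocess $G$ so that it contains the edge $(s,t)$ (adding it to a page if necessary, as discussed above), which makes $G$ biconnected, and build $\mathcal{T}$ rooted at the Q-node of $(s,t)$; this takes $O(n)$ time and yields a tree with $O(n)$ nodes~\cite{DBLP:journals/siamcomp/BattistaT96,DBLP:journals/siamdm/DidimoGL09}. By \cref{th:2ube-hpc-preserving} together with \cref{lem:constant-types}, $G$ admits a (embedding-preserving) \UBE if and only if the pertinent graph of the unique child of the root Q-node admits at least one of the $18$ embedding types; hence it suffices to compute, for every non-root node $\mu$, the set of embedding types realizable by $\pert{\mu}$, and then to inspect this set at the top of $\mathcal{T}$.

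Second, I would process $\mathcal{T}$ bottom-up. The point that makes a bottom-up scheme correct is \cref{le:replacement}: since replacing a \UBE of a pertinent graph by any other \UBE of the same type preserves global feasibility, it suffices to store for each node only the set of its realizable types (a constant-size object, as there are only $18$ types) rather than its exponentially many embeddings. The realizable types of $\mu$ are then computed from those of its children using \cref{lem:s-decision} for S-nodes, \cref{lem:p-decision-variable} (variable embedding) or \cref{lem:p-decision-fixed} (fixed embedding) for P-nodes, \cref{le:r-decision} for R-nodes, and the trivial $O(1)$ rule for Q-nodes (which admit only \QLLL and \QRRR).

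Third, for \cref{le:r-decision} I must supply, for each R-node, a sphere-cut decomposition of width $\bw$ of its (augmented) skeleton. The key preparatory observation is that the branchwidth of $\skel(\mu)$ is at most the branchwidth $\bw$ of $G$: each skeleton is obtained from $G$ by contracting the pertinent graph of every child into a single virtual edge, so it is a minor of $G$, and branchwidth is minor-monotone. Hence a width-$\bw$ sphere-cut decomposition exists for every R-node skeleton and can be computed; since the total size of all skeletons in $\mathcal{T}$ is $O(n)$ and $g$ is superadditive, the overall cost of computing all these decompositions is $O(g(n))$.

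Finally I would tally the running time. Let $k_\mu$ denote the number of children of $\mu$, so that $\sum_\mu k_\mu = O(n)$. The Q- and S-nodes contribute $O(1)$ each, hence $O(n)$ in total. The R-nodes contribute $\sum_\mu 2^{O(\bw \log \bw)} \cdot k_\mu = O(2^{O(\bw \log \bw)} \cdot n)$. In the variable setting the P-nodes contribute $\sum_\mu O(k_\mu^2) = O\bigl((\sum_\mu k_\mu)^2\bigr) = O(n^2)$, which is exactly where the $n^2$ term originates; in the fixed setting they contribute only $O(n)$. Adding the $O(g(n))$ cost of the decompositions gives the claimed bound $O(2^{O(\bw \log \bw)} \cdot n + n^2 + g(n))$. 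I expect the only delicate points to be the correctness justification through \cref{le:replacement} (that propagating types rather than embeddings loses nothing) and the P-node accounting, where the quadratic term is unavoidable because a single P-node may have $\Theta(n)$ children.
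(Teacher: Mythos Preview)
Your proposal is correct and follows exactly the paper's approach: the theorem in the paper is stated immediately after the per-node lemmas with a one-line justification invoking \cref{lem:s-decision,lem:p-decision-fixed,lem:p-decision-variable,le:r-decision} and the linear size of $\mathcal T$, and you have simply spelled out the bottom-up traversal and the time accounting in detail. Your observations that the skeletons are minors of $G$ (so their branchwidth is at most $\bw$) and that the $n^2$ term comes from $\sum_\mu k_\mu^2$ for P-nodes in the variable setting are precisely the implicit steps the paper relies on.
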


Observe that $g(n)$ is $O(n^3)$ by the result in~\cite{DBLP:journals/combinatorica/SeymourT94}. Thus, we get the following.

\begin{corollary}\label{co:test-result}
	There exists an $O(2^{O(\bw \log {\bw})}\cdot n + n^3)$-time algorithm to decide whether an $n$-vertex planar (plane) $st$-graph of branchwidth $\bw$  admits a (embedding-preserving) \UBE.
\end{corollary}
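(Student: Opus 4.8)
The plan is to derive \cref{co:test-result} directly from \cref{th:test-branchwidth} by instantiating the abstract term $g(n)$ with a concrete cubic bound and then simplifying the resulting expression. First I would invoke \cref{th:test-branchwidth}, which already provides an algorithm deciding whether an $n$-vertex planar (plane) $st$-graph of branchwidth $\bw$ admits a (embedding-preserving) \UBE in time $O(2^{O(\bw \log {\bw})}\cdot n + n^2 + g(n))$, where $g(n)$ denotes the cost of computing a sphere-cut decomposition of an $n$-vertex plane graph. Thus the only remaining task is to supply a bound on $g(n)$.

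The key step is to bound $g(n)$ using the result of Seymour and Thomas~\cite{DBLP:journals/combinatorica/SeymourT94}. Their ratcatcher algorithm computes an optimal branch decomposition of an $n$-vertex planar graph in $O(n^3)$ time. The point I would verify is that this branch decomposition can be taken to be, or converted into, a \emph{sphere-cut} decomposition of the same width without asymptotic overhead: for plane graphs every branch decomposition realizing the optimal width can be assumed to have the topological property that each middle set corresponds to a noose, so that the two sides of each arc lie in the interior and the exterior of a closed curve meeting the drawing only at the vertices of the middle set. Since this conversion runs in time linear in the size of the decomposition, we obtain $g(n) = O(n^3)$.

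Finally I would substitute $g(n) = O(n^3)$ into the running time of \cref{th:test-branchwidth} and simplify. The expression becomes $O(2^{O(\bw \log {\bw})}\cdot n + n^2 + n^3)$, and since $n^2 = O(n^3)$ the quadratic term is absorbed, yielding the claimed bound $O(2^{O(\bw \log {\bw})}\cdot n + n^3)$. This completes the derivation.

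I do not expect a genuine obstacle here, as the statement is a mechanical specialization of \cref{th:test-branchwidth}. The only subtlety worth flagging is the provenance of the cubic term: one must ensure that the cited algorithm yields a decomposition of the precise type (sphere-cut, not merely branch) required by the machinery behind \cref{th:test-branchwidth}, but this is a standard fact for plane graphs and introduces no additional cost.
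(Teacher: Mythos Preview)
Your proposal is correct and matches the paper's own argument: the corollary is obtained by instantiating $g(n)=O(n^3)$ via the Seymour--Thomas result and absorbing the $n^2$ term into $n^3$. The only minor difference is that the paper states the sphere-cut decomposition bound as a fact (relying on Dorn et al.\ in the preliminaries), whereas you spell out the branch-to-sphere-cut conversion explicitly, but this is the same reasoning.
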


Since the branchwidth of a planar graph $G$ is at most $2.122\sqrt{n}$~\cite{DBLP:journals/jgt/FominT06}, \cref{co:test-result} immediately implies that the \TWOUBE problem can be solved in sub-exponential time.

\begin{corollary}\label{co:test-general}
	There exists an $O(2^{O(\sqrt{n} \log {\sqrt{n}})} + n^3)$-time algorithm to decide whether an $n$-vertex planar (plane) $st$-graph admits a (embedding-preserving) \UBE.
\end{corollary}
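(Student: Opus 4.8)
The plan is to obtain this corollary as a direct instantiation of \cref{co:test-result} at the worst-case value of the branchwidth parameter. First I would invoke the bound of Fomin and Thilikos~\cite{DBLP:journals/jgt/FominT06}, which guarantees that every $n$-vertex planar graph has branchwidth $\bw \le 2.122\sqrt{n}$. Since the input is a planar (plane) $st$-graph, this bound applies to it, so I may substitute $\bw = O(\sqrt{n})$ into the running time $O(2^{O(\bw \log \bw)}\cdot n + n^3)$ furnished by \cref{co:test-result} for deciding the existence of a (embedding-preserving) \UBE.

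The remaining work is purely asymptotic bookkeeping. In the exponential factor we have $\bw \log \bw = O(\sqrt{n}\log\sqrt{n})$, hence $2^{O(\bw \log \bw)} = 2^{O(\sqrt{n}\log\sqrt{n})}$. The linear multiplicative factor is then absorbed: writing $n = 2^{\log n}$ and noting that $\log n = o(\sqrt{n}\log\sqrt{n})$, we get $2^{O(\sqrt{n}\log\sqrt{n})}\cdot n = 2^{O(\sqrt{n}\log\sqrt{n})}$. Combining this with the additive $n^3$ term, which is inherited unchanged from \cref{co:test-result}, yields the claimed bound $O(2^{O(\sqrt{n}\log\sqrt{n})} + n^3)$.

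The only subtlety I expect is in the asymptotics rather than in any new algorithmic idea: one must check that the polynomial factor $n$ is genuinely swallowed by the sub-exponential term (so that it does not survive as a separate summand), and that the $n^3$ contribution—which originates in the cost $g(n)$ of computing a sphere-cut decomposition, observed to be $O(n^3)$ just before \cref{co:test-result}—remains the dominant polynomial term. No ingredient beyond \cref{co:test-result} and the Fomin--Thilikos branchwidth bound is needed; the statement is simply the parameterized algorithm evaluated at $\bw = O(\sqrt{n})$.
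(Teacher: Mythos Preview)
Your proposal is correct and matches the paper's own argument exactly: the paper simply invokes the Fomin--Thilikos bound $\bw(G)\le 2.122\sqrt{n}$ and says that \cref{co:test-result} then immediately yields the sub-exponential running time. Your additional asymptotic bookkeeping (absorbing the multiplicative $n$ into the $2^{O(\sqrt{n}\log\sqrt{n})}$ term) is a valid elaboration of a step the paper leaves implicit.
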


\section{Conclusion and Open Problems}
Our results provide significant advances on the complexity of the {\KUBE} problem. We showed NP-hardness for $k \geq 3$; we gave FPT- and polynomial-time algorithms for relevant families of planar $st$-graphs when $k=2$. 
We point out that our FPT-algorithm can be refined to run in $O(n^2)$ time for $st$-graphs of treewidth at most~$3$, by constructing in linear time a sphere-cut decomposition of their rigid components.
We conclude with some open problems.

\begin{itemize}
	\item The main open question is about the complexity of the \TWOUBE problem, which has been conjectured to be NP-complete in the general case~\cite{HeathP99}. 	
	\item The digraphs in our NP-completeness proof are not upward planar. Since there are upward planar digraphs that do not admit a $3$UBE~\cite{Hung89}, it would be interesting to study whether the problem remains NP-complete for three pages and upward planar digraphs.
	
	\item Finally, it is natural to investigate other families of planar digraphs for which a \UBE always exists or polynomial-time testing algorithms can be devised.
\end{itemize}

\paragraph{Acknowledgments.} This research began at the Bertinoro Workshop on Graph Drawing 2018.

\bibliographystyle{alpha}
\bibliography{../biblio}

\clearpage
\appendix

\section{Additional Material for \cref{se:preliminaries}}\label{se:app-prel}

\subparagraph{Connectivity and Planarity.} A graph $G$ is \emph{$1$-connected}, or \emph{simply-connected}, if there is a path between any two vertices. $G$ is \emph{$k$-connected}, for $k \geq 2$, if the removal of $k-1$ vertices leaves the graph $1$-connected. A $2$-connected ($3$-connected) graph is also called \emph{biconnected} (\emph{triconnected}).

A \emph{planar drawing} of $G$ is a geometric representation in the plane such that: $(i)$ each vertex $v \in V(G)$ is drawn as a distinct point $p_v$; $(ii)$ each edge $e=(u,v) \in E(G)$ is drawn as a simple curve connecting $p_u$ and $p_v$; $(iii)$ no two edges intersect in $\Gamma$ except at their common end-vertices (if they are adjacent). A graph is \emph{planar} if it admits a planar drawing. A planar drawing $\Gamma$ of $G$ divides the plane into topologically connected regions, called \emph{faces}. The \emph{external face} of $\Gamma$ is the region of unbounded size; the other faces are \emph{internal}. A \emph{planar embedding} of $G$ is an equivalence class of planar drawings that define the same set of (internal and external) faces, and it can be described by the clockwise sequence of vertices and edges on the boundary of each face plus the choice of the external face. Graph $G$ together a given planar embedding is an \emph{embedded planar graph}, or simply a \emph{plane graph}: If $\Gamma$ is a planar drawing of $G$ whose set of faces is that described by the planar embedding of $G$, we say that $\Gamma$ \emph{preserves} this embedding, or also that $\Gamma$ is an \emph{embedding-preserving drawing} of $G$. 

\subparagraph{Sphere-cut decomposition.}
A \emph{branch decomposition} $\langle T,\xi \rangle$ of a graph $G=(V,E)$ consists of an unrooted ternary tree $T$ (each node of $T$ has degree one or three) and of a bijection $\xi : \mathcal L(T) \leftrightarrow E(G)$ from the leaf set $\mathcal L(T)$  of $T$ to the edge set $E(G)$ of $G$. 
For each arc $a$ of $T$, let $T_1$ and $T_2$ be the two connected components of $T-a$, and, for $i=1,2$, let $G_i$ be the subgraph of $G$ that consists of the edges corresponding to the leaves of $T_i$, i.e., the edge set $\{\xi(\mu):\mu \in \mathcal L(T) \cap V(T_i)\}$. The \emph{middle set} $\midset(a) \subseteq V (G)$ is the intersection of the vertex sets of $G_1$ and $G_2$, i.e., $\midset(a) := V(G_1) \cap V(G_2)$.
The \emph{width} $\bw(\langle T,\xi \rangle)$ of $\langle T,\xi \rangle$ is the maximum size of the middle sets over all arcs of $T$, i.e., $\bw(\langle T,\xi \rangle) := max\{|\midset(a)|: a \in T\}$. An \emph{optimal branch decomposition} of $G$ is a branch decomposition with minimum width; this width is called the \emph{branchwidth} $\bw(G)$ of~$G$. An optimal branch decomposition of a given planar graph with $n$ vertices can be constructed in $O(n^3)$ time~\cite{DBLP:journals/combinatorica/SeymourT94}. 

Let $\Sigma$ be a sphere. 
A \emph{$\Sigma$-plane} graph $G$ is a planar graph $G$ embedded (i.e., topologically drawn) on $\Sigma$.
A \emph{noose} of a $\Sigma$-plane graph $G$ is a closed simple curve on $\Sigma$ that  
\begin{inparaenum}[(i)]
	\item \label{Cond:noodeI} intersects $G$ only at vertices and
	\item \label{Cond:noodeII} traverses each face at most once.
\end{inparaenum}
The \emph{length} of a noose $O$ is the number of vertices it intersects. Every noose $O$ bounds two closed discs $\Delta_1$, $\Delta_2$ in $\Sigma$, i.e., $\Delta_1 \cap \Delta_2= O$ and $\Delta_1 \cup \Delta_2=\Sigma$.
For a $\Sigma$-plane graph $G$, a \emph{sphere-cut decomposition} $\langle T,\xi, \Pi=\bigcup_{a \in E(T)} \pi_a \rangle$ of $G$ is a branch decomposition $\langle T,\xi \rangle$ of $G$ together with a set $\Pi$ of circular orders $\pi_a$ of $\midset(a)$, for each arc $a$ of $T$, such that there exists a noose $O_a$ whose closed discs $\Delta_1$ and $\Delta_2$ enclose the drawing of $G_1$ and of $G_2$, respectively, for each arc $a$ of $T$. 
Observe that, $O_a$ intersect $G$ exactly at $\midset(a)$ and its length is $|\midset(a)|$. Also, 
\cref{Cond:noodeII} of the definition of noose implies that graphs $G_1$ and $G_2$ are both connected and that the set of nooses forms a laminar set family, that is, any two nooses are either disjoint or nested.
A clockwise traversal of $O_a$ in the drawing of $G$ defines the cyclic ordering $\pi_a$ of $\midset(a)$. We always assume that the vertices of every middle set $\midset(a) = V (G_1) \cap V (G_2)$ are enumerated according to $\pi_a$. We will exploit the following main result of Dorn et al~\cite{DBLP:journals/algorithmica/DornPBF10}.

\begin{theorem}[\cite{DBLP:journals/algorithmica/DornPBF10},Theorem~1]
	Let $G$ be a connected $n$-vertex $\Sigma$-plane graph having branchwidth $\bw$ and no vertex of degree one. There exists a sphere-cut decomposition of $G$ having width $\bw$ which can be constructed in $O(n^3)$ time.
\end{theorem}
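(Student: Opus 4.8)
The plan is to obtain the sphere-cut decomposition from an optimal \emph{carving decomposition} of the medial graph of $G$, using the embedding on $\Sigma$ to realise each cut as a noose. The starting observation is that a branch decomposition of $G$ and a carving decomposition of the medial graph $M_G$ share the same combinatorial skeleton: both are ternary trees whose leaves are in bijection with $E(G)$. Hence the real work is to choose this tree so that each partition of $E(G)$ it induces is topologically a single closed curve on the sphere, i.e.\ a noose.

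First I would construct $M_G$: its vertices are the edges of $G$, and two of them are adjacent whenever the corresponding edges are consecutive around a common vertex (equivalently, around a common face) of $G$. Because $G$ is connected and has no vertex of degree one, $M_G$ is a connected $4$-regular $\Sigma$-plane (multi)graph with $|E(G)| = O(n)$ vertices that is built in $O(n)$ time; the degree-one hypothesis is exactly what rules out self-loops in $M_G$. Reading any carving decomposition of $M_G$ as a ternary tree $\langle T, \xi\rangle$ with leaf set $E(G)$ turns it into a branch decomposition of $G$, and the standard correspondence between the two parameters (see Seymour and Thomas~\cite{DBLP:journals/combinatorica/SeymourT94}) relates the carving width $\mathrm{cw}(M_G)$ and the branchwidth $\bw(G)$ by a factor of two, the optimum being attained by \emph{bond} carvings, in which every cut leaves both sides connected.

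Second, I would run the ratcatcher algorithm of Seymour and Thomas~\cite{DBLP:journals/combinatorica/SeymourT94} to compute, in $O(n^3)$ time, a minimum-width carving decomposition of $M_G$, and then normalise it to a bond carving without increasing its width. Here planarity is decisive: in a $4$-regular $\Sigma$-plane graph a minimal edge cut is a union of non-self-crossing closed curves, so one can repeatedly uncross these curves and discard the parts bounding a disconnected side, replacing every cut by a single closed curve of no larger cardinality. Translating back to $G$, for each arc $a$ the corresponding bond traces one closed curve $\mathcal O_a$ on $\Sigma$ that meets $G$ precisely in $\midset(a)$ and crosses each face at most once; its two bounding discs enclose the connected subgraphs $G_1$ and $G_2$. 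Taking $\pi_a$ to be the cyclic order in which $\mathcal O_a$ meets $\midset(a)$ yields a sphere-cut decomposition of width $\max_a |\midset(a)| = \tfrac{1}{2}\,\mathrm{cw}(M_G) = \bw$, all within the $O(n^3)$ time dominated by the ratcatcher. I expect the principal obstacle to be precisely this bond-normalisation step: proving that an arbitrary minimum-width carving of $M_G$ can be uncrossed into one whose every cut is a connected bond realisable as a noose, and verifying that the uncrossing neither increases the width nor exceeds the $O(n^3)$ budget.
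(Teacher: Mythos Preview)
The paper does not prove this statement at all: it is quoted verbatim as Theorem~1 of Dorn, Penninkx, Bodlaender, and Fomin~\cite{DBLP:journals/algorithmica/DornPBF10} and used as a black box in the preliminaries. So there is no ``paper's own proof'' to compare against.

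That said, your sketch is essentially the argument of the cited source. The medial-graph correspondence between branch decompositions of $G$ and carving decompositions of $M_G$, the factor-two relation $\mathrm{cw}(M_G)=2\bw(G)$, the use of the Seymour--Thomas ratcatcher to get an optimal carving in $O(n^3)$ time, and the observation that a bond carving of a $\Sigma$-embedded $M_G$ yields nooses for $G$ are exactly the ingredients Dorn et al.\ assemble. Your identification of the bond-normalisation step as the delicate point is accurate; in the original this is handled by Seymour and Thomas's result that an optimal carving of a planar graph can always be taken to be a bond carving, so you would cite that rather than re-derive the uncrossing argument. One small correction: the no-degree-one hypothesis is not about self-loops in $M_G$ (a degree-one vertex of $G$ gives a bridge in $M_G$, not a loop); rather, bridges in $M_G$ obstruct the bond property, since a bond must leave both sides connected.
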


\subparagraph{SPQR-trees of Planar $\mathbf{st}$-Graphs.} Let $G$ be a biconnected graph. An \emph{SPQR-tree} $\mathcal T$ of $G$ is a tree-like data structure that represents the decomposition of $G$ into its triconnected components and can be computed in linear time~\cite{DBLP:journals/siamcomp/BattistaT96,DBLP:conf/gd/GutwengerM00,DBLP:journals/siamcomp/HopcroftT73}. See \cref{fi:spqr-tree} for an illustration. Each node $\mu$ of $\mathcal T$ corresponds to a triconnected component of $G$ with two special vertices, called \emph{poles}; the triconnected component corresponding to a node $\mu$ is described by a multigraph $\skel(\mu)$ called the \emph{skeleton} of $\mu$. 
Let $\skel^+(\mu)=\skel(\mu) \cup (u,v)$, where $u$ and $v$ are the poles of $\mu$.ì
A node $\mu$ of $\mathcal T$ is of one the following types:
\begin{inparaenum}[(i)]
	\item \emph{R-node}, if $\skel^+(\mu)$ is triconnected; 
	\item \emph{S-node}, if $\skel^+(\mu)$ is a cycle of length at least three; 
	\item \emph{P-node}, if $\skel^+(\mu)$ is a bundle of at least three parallel edges; and 
	\item \emph{Q-nodes}, if it is a leaf of $\mathcal T$; in this case the node represents a single edge of the graph and $\skel^+(\mu)$ consists of two parallel edges.
\end{inparaenum}
A \emph{virtual edge} in $\skel(\mu)$ corresponds to a tree node $\nu$ adjacent to $\mu$ in $\mathcal T$. 
The edge of $G$ corresponding to the root $\rho$ of $\mathcal T$ is the  \emph{reference edge} of $G$, and $\mathcal T$ is the SPQR-tree of $G$ \emph{with respect to $e$}. For every node $\mu \neq \rho$ of $\mathcal T$, the subtree $T_\mu$ rooted at $\mu$ induces a subgraph $\pert{\mu}$ of $G$ called the \emph{pertinent graph} of $\mu$, which is described by $T_\mu$ in the decomposition: The edges of $\pert{\mu}$ correspond to the Q-nodes (leaves) of $T_\mu$. 

If $G$ is planar, the SPQR-tree $\mathcal T$ of $G$ rooted at a Q-node $\rho$ representing edge $e$ implicitly describes all planar embeddings of $G$ in which $e$ is incident to the outer face. All such embeddings are obtained by combining the different planar embeddings of the skeletons of P- and R-nodes: For a P-node $\mu$, the different embeddings of $\skel(\mu)$ are the different permutations of its edges. If $\mu$ is an R-node, $\skel(\mu)$ has two possible planar embeddings, obtained by flipping $\skel(\mu)$ at its poles. 
Let $\mu$ be a node of $\mathcal T$, let $\mathcal E$ be an embedding of $G$, let $\mathcal E_\mu$ be the embedding $\pert{\mu}$ in $\mathcal E$, and let $f^{O}_\mu$ be the outer face of $\mathcal E_\mu$.
The path along $f^{O}_\mu$ between $s_\mu$ and $t_\mu$ that leaves $f^{O}_\mu$ to its left (resp. to its right) when traversing the boundary of $f^{O}_\mu$ from $s_\mu$ to $t_\mu$ 
to its the \emph{left outer path} (resp. the \emph{right outer path}) of $\mathcal E_\mu$.
The \emph{left} (resp., \emph{right}) \emph{outer face} of $\mathcal E_\mu$ is the face of $\mathcal E$ that is incident to the left (resp., to the right) outer path of $\mathcal E_\mu$. In \cref{fi:spqr-tree-a}, the left outer face and the right outer face of the S-node whose poles are the nodes labeled $8$ and $13$ are yellow and green, respectively.

\begin{figure}[!t]
	\centering
	\hspace{-6mm}
	\begin{subfigure}{.32\textwidth}
		\centering
		\includegraphics[width=\columnwidth, page=2]{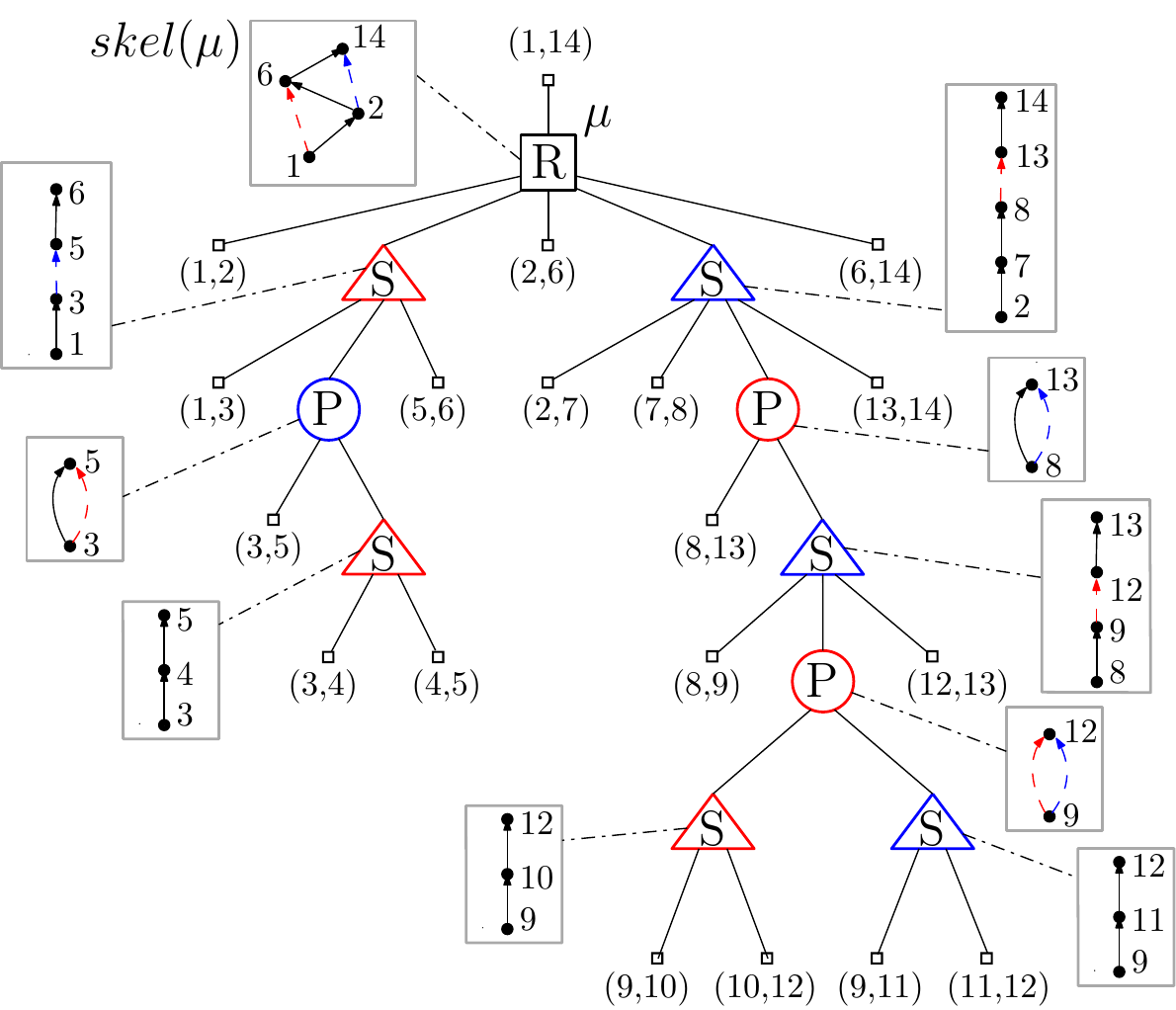}
		\subcaption{}
		\label{fi:spqr-tree-a}
	\end{subfigure}
	\begin{subfigure}{.64\textwidth}
		\centering
		\includegraphics[width=\columnwidth, page=1]{spqr-tree}
		\subcaption{}
		\label{fi:spqr-tree-b}
	\end{subfigure}
	\caption{\small{(a) A biconnected planar $st$-graph $G$. (b) SPQR-tree of $G$ rooted at the edge $(s,t)$.}}\label{fi:spqr-tree}
\end{figure}

\section{Additional Material for \cref{se:complexity}}\label{se:app-hardness}

\leshell*
\begin{proof}
	The proof is by induction on $h$.
	
	\textbf{Base case $h=0$.}
	We describe how to define a 3UBE $\gamma=\langle \pi, \sigma \rangle$ of $G_0$. The eight vertices of the directed path $s_0 \leadsto p_0$ must appear in $\pi$ in the same order they appear along the path. Consider now $t_0$. Because of the closing edge $(t'_0,t_0)$, we have $\pi(t'_0) < \pi(t_0)$. If we put $t_0$ between $t'_0$ and $p_0$, the channel edges $(p_{-1},t_0)$ and $(t_{-1},p_0)$ and the forcing edges $(s_{0},s'_0)$ and $(q_{0},q'_0)$ would mutually conflict. But then a 3UBE would not exist by \cref{pr:conflicting}. Thus, the only possibility is that $t_0$ is the last vertex in $\pi$. This uniquely defines the order $\pi$ and implies condition \textsf{S1}. As for the page assignment $\sigma$, the two forcing edges must be in different pages because they conflict. Since each of the two channel edges conflict with both forcing edges, the channel edges cannot be assigned to the pages used for the forcing edges. Thus, they must be in the same page, which is possible because the two channel edges do not conflict (this proves condition \textsf{S2}). Finally, the closing edge conflicts with the channel edge $(t_{-1},p_0)$ and thus it cannot be in the same page as the channel edges; since however it does not conflict with any other edge it can be assigned to one of the pages used for the forcing edges. This concludes the proof that a 3UBE of $G_0$ exists and that it must satisfy conditions \textsf{S1} and \textsf{S2}. Condition \textsf{S3} does not apply in this case.
	
	\textbf{Inductive case $h>0$.} By induction, $G_{h-1}$ admits a 3UBE $\gamma'=\langle \pi', \sigma' \rangle$ that satisfies \textsf{S1}--\textsf{S3}. We extend $\gamma'$ to a 3UBE $\gamma=\langle \pi, \sigma \rangle$ of $G_h$ as follows. Since $\gamma'$ satisfies \textsf{S1}, $s_{h-1}$ is the first vertex in $\pi'$ and $t_{h-1}$ is the last one. The vertices of path $s_h \leadsto s_{h-1}$ must appear in $\pi$ in the same order they appear along the path. Analogously, the vertices of $t_{h-1} \leadsto p_h$ must appear in $\pi$ in the order they have along the path. Because of the closing edge $(t'_h,t_h)$, we have $\pi(t'_h) < \pi(t_h)$. Therefore, $s_h$ must be the first vertex along $\pi$. Consider now $t_h$. If we put $t_h$ between $t'_h$ and $p_h$, the channel edges $(p_{h-1},t_h)$ and $(t_{h-1},p_h)$ and the forcing edges $(s_{h},s'_h)$ and $(q_{h},q'_h)$ would mutually conflict. But then a 3UBE would not exist by \cref{pr:conflicting}. Thus, $t_h$ must be the last vertex in $\pi$. This uniquely defines the order $\pi$ and implies condition \textsf{S1} for $G_h$. As for the page assignment $\sigma$, observe that the only exclusive edge of $G_h$ that conflicts with some edge of $G_{h-1}$ is the edge $(p_{h-1},t_h)$, which only conflicts with the channel edge $(p_{h-2},t_{h-1})$ of $G_{h-1}$. This implies that $(p_{h-1},t_h)$ must be in a page different from the one of $(p_{h-2},t_{h-1})$. The two forcing edges of $G_h$ must be in a page different from the channel edge $(p_{h-1},t_h)$ and since they conflict, they must be in different pages. The channel edge $(t_{h-1},p_h)$ conflicts with the forcing edges but not with the other channel edge $(p_{h-1},t_h)$. Thus, the channel edges must be in the same page (which proves condition \textsf{S2}). The fact that the page of $(p_{h-1},t_h)$ must be different from that of $(p_{h-2},t_{h-1})$, implies condition \textsf{S3}. Finally, the closing edge conflicts with the channel edge $(t_{h-1},p_h)$ and thus it cannot be in the same page as the channel edges; since however it does not conflict with any other edge, it can be assigned to one of the pages used for the forcing edges. This concludes the proof that a 3UBE of $G_h$ exists and that it satisfies conditions \textsf{S1}, \textsf{S2}, and \textsf{S3}.  
\end{proof}

\lefilled*
\begin{proof}
	The proof is by induction on $h$. 
	
	\textbf{Base case $h=0$.} We describe how to define a 3UBE $\gamma=\langle \pi, \sigma \rangle$ of $H_{0,s}$. The subgraph $G_0$ of $H_{0,s}$ admits a 3UBE that satisfies conditions \textsf{S1}--\textsf{S3} by \cref{le:shell}. 
	Let $v_{-1,j}$ (with $1 \leq j \leq s$) be a vertex of $\alpha_{-1}$. The edges $(p_{-1},v_{-1,j})$ and $(v_{-1,j},t_{-1})$ imply $\pi(p_{-1}) < \pi(v_{-1,j})$ and $\pi(v_{-1,j}) < \pi(t_{-1})$, which proves condition \textsf{F1} for $\alpha_{-1}$. Consider now the group $\alpha_0$; the edge $(p_{0},v_{0,j})$ implies $\pi(p_{0}) < \pi(v_{0,j})$. On the other hand, if we put $v_{0,j}$ after  $t_0$, the edge $(v_{-1,j},v_{0,j})$, the channel edge $(p_{-1},t_0)$, and the two forcing edges $(s_0,s'_0)$ and $(q_0,q'_0)$ would mutually conflict. But then a 3UBE would not exist by \cref{pr:conflicting}. Thus, each vertex of group $\alpha_0$ must be between $p_0$ and $t_0$ in $\pi$, which implies condition \textsf{F1} for the group $\alpha_{0}$.  
	As for the page assignment $\sigma$, each $(v_{-1,j},v_{0,j})$ conflicts with each forcing edge of $G_0$, and thus it must be in the page of the channel edges of $G_0$. This implies condition \textsf{F3}. The edges  $(v_{-1,j},v_{0,j})$ can be assigned to the same page only if the vertices of $\alpha_{-1}$ appear in reverse order with respect to those of $\alpha_0$ in $\pi$. Thus, condition \textsf{F2} holds and a 3UBE of $H_{0,s}$ can be defined by choosing an arbitrary order for the vertices of $\alpha_{-1}$ and the reverse order for the vertices of $\alpha_0$. 
	
	\textbf{Inductive case $h>0$.}  Consider the subgraph $H'_{h,s}$ of $H_{h,s}$ consisting of $H_{h-2,s}$ plus the exclusive vertices and edges of $G_h$. By induction and by~\cref{le:shell}, $H'_{h,s}$ admits a 3UBE $\gamma'=\langle \pi', \sigma' \rangle$ that satisfies conditions \textsf{F1}--\textsf{F3} and conditions \textsf{S1}--\textsf{S3}. We extend $\gamma'$ to a 3UBE $\gamma=\langle \pi, \sigma \rangle$ of $H_{h,s}$ as follows. By condition \textsf{F1} of $\gamma'$, each vertex $v_{h-2,j}$ is before $t_{h-2}$ in $\pi'$; on the other hand, because of the edges $(p_h,v_{h,j})$, each vertex $v_{h,j}$ must follow $p_{h}$ in $\pi$. This implies that each $v_{h-1,j}$ is between $p_{h-1}$ and $t_{h-1}$ in $\pi$. Indeed, if $v_{h-1,j}$ was before $p_{h-1}$ in $\pi$, the edge $(v_{h-1,j},v_{h,j})$, the channel edges $(p_{h-1},t_h)$ and the two forcing edges of $G_h$ would mutually conflict and therefore a 3UBE would not exist by~\cref{pr:conflicting}. On the other hand, if $v_{h-1,j}$ was after $t_{h-1}$, the edge $(v_{h-2,j},v_{h-1,j})$, the channel edges $(p_{h-2},t_{h-1})$ and the two forcing edges of $G_{h-1}$ would mutually conflict and again a 3UBE would not exist by \cref{pr:conflicting}. Thus, each vertex of group $\alpha_{h-1}$ must be between $p_{h-1}$ and $t_{h-1}$, which proves condition \textsf{F1} for the group $\alpha_{h-1}$. Consider now a vertex $v_{h,j}$. If it was after $t_h$ in $\pi$, then the edge $(v_{h-1,j},v_{h,j})$, the channel edge $(p_{h-1},t_{h})$ and the two forcing edges of $G_h$ would mutually conflict -- again a 3UBE would not exist by \cref{pr:conflicting}. Hence, each vertex of $\alpha_h$ is between $p_h$ and $t_h$, which proves condition \textsf{F1} also for $\alpha_h$.  
	
	As for the page assignment $\sigma$, each $(v_{h-2,j},v_{h-1,j})$ conflicts with each forcing edge of $G_{h-1}$ and hence it must be in the page of the channel edges of $G_{h-1}$. The same argument applies to the edges $(v_{h-1,j},v_{h,j})$ with respect to the forcing edges of $G_h$. Thus the edges $(v_{h-1,j},v_{h,j})$ must be in the page of the channel edges of $G_h$, which proves condition \textsf{F3}.  
	
	The edges  $(v_{h-2,j},v_{h-1,j})$ can be assigned to the same page only if the vertices of $\alpha_{h-2}$ appear in reverse order with respect to those of $\alpha_{h-1}$ in $\pi$. Analogously, the edges  $(v_{h-1,j},v_{h,j})$ can be assigned to the same page only if the vertices of $\alpha_{h-1}$ appear in reverse order with respect to those of $\alpha_{h}$ in $\pi$. Thus, condition \textsf{F2} holds and a 3UBE of $H_{h,s}$ can be defined by ordering the vertices of $\alpha_{h-1}$ in reverse order with respect to those of $\alpha_{h-2}$ and the vertices of $\alpha_h$ with the same order as those of $\alpha_{h-2}$.  
\end{proof}

\legadget*
\begin{proof}
	By \cref{le:filled} $H_{h,s}$ admits a 3UBE $\gamma'=\langle \pi',\sigma'\rangle$ where $t'_{i}$ and $p_{i}$ are consecutive in $\pi'$ for each $i=0,1,\dots,h$. Notice that the gadget $\Lambda_{i}$ admits a 3UBE (actually a 2UBE) $\gamma_{i}$. If we replace the edge $(t'_{i},p_{i})$ with $\gamma_{i}$, we do not create any conflict between the edges of $\Lambda_i$ and the other edges of $H_{h,s}$. This proves that $\widehat{H}_{h,s}$ has a 3UBE. 
	About condition \textsf{G1}, observe that since any vertex of the gadget $\Lambda_{i}$ belongs to a directed path from $t'_{i}$ to $p_{i}$, then the vertices of $\Lambda_{i}$ must be between $t'_{i}$ and $p_{i}$. Analogously, $x_{i}$ and $y_{i}$ both appear in a directed path from $w_{i}$ to $z_{i}$ and therefore they must be between $w_i$ and $z_i$. Suppose that $\pi(x_i) < \pi(y_i)$ (the other case is symmetric). If we exchange the order of $x_i$ and $y_i$ in $\pi'$ we introduce a conflict between $(w_i,x_i)$ and $(y_i,z_i)$, which do not conflict with any other edges. If they are in the same page in $\gamma$ it is sufficient to change the page of one of them in $\gamma'$. 
\end{proof}

\thhardness*
\begin{proof} 
	\TUBE is clearly in NP. To prove the hardness we describe a reduction from \BETW.
	From an instance $I=\langle S, R \rangle$ of \BETW we construct an instance $G_I$ of \TUBE that is an $st$-graph; we start from the \gadgeted{} $\widehat{H}_{h,s}$ with $h=2|R|$ and $s=|S|$. Let $v_1,v_2,\dots,v_s$ be the elements of $S$. They are represented in $\widehat{H}_{h,s}$ by the vertices $v_{i,1},v_{i,2},\dots,v_{i,s}$ of the groups $\alpha_i$, for $i=-1,0,1,\dots,h$. In the reduction 
	each group $\alpha_i$ with odd index is used to encode one triplet and, in a 3UBE of $G_I$, the order of the vertices in these groups (which is the same by condition \textsf{F2}) corresponds to the desired order of the elements of $S$ for the instance $I$.  
	Number the triplets of $R$ from $1$ to $|R|$ and let  $(v_a,v_b,v_c)$ be the $j$-th triplet. We use the group $\alpha_{i}$ and the gadget $\Lambda_{i}$ with $i=2j-1$ to encode the triplet $(v_a,v_b,v_c)$. More precisely, we add to $\widehat{H}_{h,s}$ the edges $(x_{i},v_{i,a})$, $(x_{i},v_{i,b})$, $(y_{i},v_{i,b})$, and $(y_{i},v_{i,c})$ (see \cref{fi:hardness-f}). These edges are called \emph{triplet edges} and are denoted as $T_i$. In any 3UBE of $G_I$ the triplet edges are forced to be in the same page and this is possible if and only if the constraints defined by the triplets in $R$ are respected. The digraph obtained by the addition of the triplet edges is not an $st$-graph because the vertices of the last group $\alpha_{h}$ are all sinks. The desired instance $G_I$ of \TUBE is the $st$-graph obtained by adding the edges $(v_{h,j},t_h)$ (for $j=1,2,\dots,s$). \Cref{fi:hardness-4} shows a 3UBE of the $st$-graph $G_I$ reduced from a positive instance $I$ of \BETW.  
	
	We now show that $I$ is a positive instance of \BETW if and only if the $st$-graph $G_I$ constructed as described above admits a 3UBE.
	Suppose first that $I$ is a positive instance of \BETW, i.e., there exists an ordering $\tau$ of $S$ that satisfies all triplets in $R$. The subgraph $\widehat{H}_{h,s}$ of $G_I$ admits a 3UBE that satisfies conditions \textsf{S1}--\textsf{S3}, \textsf{F1}--\textsf{F3}, and \textsf{G1}--\textsf{G2} by \cref{le:shell,,le:filled,,le:gadgeted}. Observe that the order of the vertices of the groups $\alpha_i$ can be arbitrarily chosen (provided that all groups with even index have the same order and the groups with odd index have the reverse order). Thus we can choose the order of the groups with odd index to be equal to $\tau$. Let $\gamma=\langle \pi, \sigma \rangle$ be the resulting 3UBE of $\widehat{H}_{h,s}$. We now show that if we add the triplet edges to $\gamma$, these edges do not conflict. Let $(v_a,v_b,v_c)$ be the triplet encoded by the triplet edges $T_i$ and suppose that $\tau(v_a) < \tau(v_b) < \tau(v_c)$ (the other case is symmetric). Since the vertices of the groups with odd index are ordered in $\pi$ as in $\tau$, we have $\pi(v_{i,a}) < \pi(v_{i,b}) < \pi(v_{i,c})$. If $\pi(x_i) < \pi(y_i)$ then the edges $T_i$ do not conflict. If otherwise $\pi(y_i) < \pi(x_i)$, by condition \textsf{G2} we can exchange the order of $x_i$ and $y_i$, thus guaranteeing again that the triplet edges $T_i$ do not conflict. On the other hand, the triplet edges $T_i$ conflict with the edges $E^{\Lambda}_i$ of $\Lambda_i$, with the channel edges $E^{ch}_i$ of $G_i$, and with all the edges $E^{\alpha}_i$ connecting group $\alpha_{i-1}$ to group $\alpha_{i}$. All the edges in $E^{\Lambda}_i \cup E^{ch}_i \cup E^{\alpha}_i$ can be assigned to only two pages. Indeed, the edges $E^{\Lambda}_i$ require two pages, while one page is enough for the edges of $E^{ch}_i \cup E^{\alpha}_i$. Also, since the edges of $E^{\Lambda}_i$ do not conflict with those in $E^{ch}_i \cup E^{\alpha}_i$, two pages suffice for all of them. Hence, the triplet edges $T_i$ can all be assigned to the third page. Since this is true for all the triplet edges, $G_I$ admits a 3UBE. 
	
	Suppose now that $G_I$ admits a 3UBE $\gamma=\langle \pi,\sigma \rangle$. By \cref{le:shell,,le:filled,,le:gadgeted} $\gamma$ satisfies conditions \textsf{S1}--\textsf{S3}, \textsf{F1}--\textsf{F3}, and \textsf{G1}--\textsf{G2} .
	By condition \textsf{F2} the order of the vertices of the groups $\alpha_i$ with odd index is the same for all groups. We claim that all triplets in $R$ are satisfied if this order is used as the order $\tau$ for the elements of $S$. 
	Let $(v_a,v_b,v_c)$ be the triplet encoded by the triplet edges $T_i$. By condition \textsf{G2}, the vertices $x_i$ and $y_i$ of the gadget $\Lambda_i$ are between $w_i$ and $z_i$ in $\pi$. Thus the triplet edges $T_i$  conflict with the edges $(u_i,z_i)$ and $(w_i,p_i)$. These two edges must be in two different pages because they conflict. It follows that the triplet edges $T_i$ must all be in the same page, i.e., the third one. Since the three edges of $T_i$ are in the same page we have either $\pi(x_i) < \pi(y_i)$ and $\pi(v_{i,a}) < \pi(v_{i,b}) < \pi(v_{i,c})$ or $\pi(y_i) < \pi(x_i)$ and $\pi(v_{i,c}) < \pi(v_{i,b}) < \pi(v_{i,a})$  (any other order would cause a crossing between the edges of $T_i$). In both cases vertex $v_{i,b}$ is between $v_{i,a}$ and $v_{i,c}$, i.e., $v_b$ is between $v_a$ and $v_c$ in $\tau$. Since this is true for all triplets, $I$ is a positive instance of \BETW.
\end{proof}

\section{Additional Material for \cref{se:existential}}\label{se:app-existential}

\thlongrightpath*
\begin{proof}
	We prove how to construct an \HPC. The idea is to construct $\overline{G}$ by adding a face of $G$ per time from left to right. Namely, the faces of $G$ are added according to a topological ordering of the dual graph of $G$.  
	When a face $f$ is added, its right path is attached to the right boundary of the current digraph. We maintain the invariant that at least one edge $e$ in the left path of $f$ belongs to the Hamiltonian path of the current digraph. The Hamiltonian path is extended by replacing $e$ with a path that traverses the vertices of the right path of $f$. To this aim, dummy edges are suitably inserted inside $f$. When all faces are added, the resulting graph is an \HPC $\overline{G}$ of $G$.
	
	More precisely, let $G^*$ be the dual graph of $G$. Let $s^*=f_0,f_1,\dots,f_N,f_{N+1}=t^*$ be a topological sorting of $G^*$. Denote by $G_0$ the left boundary of $G$ and by $G_i$, for $i=1,2,\dots,N$, the subgraph of $G$ consisting of the faces $f_1,f_2,\dots,f_i$. $G_i$ can be obtained by adding the right path  $p^i_r$ of face $f_i$ to $G_{i-1}$. We construct a sequence $\overline{G}_0, \overline{G}_1,\dots, \overline{G}_N$ of $st$-graphs such that $\overline{G}_i$ is an \HPC of $G_i$. Clearly, $\overline{G}_N$ will be an \HPC of $G$. While constructing the sequence, we maintain the following invariant: given any two consecutive edges along the right boundary of $\overline{G}_i$, at least one of them belongs to the Hamiltonian path $P_{\overline{G}_i}$ of $\overline{G}_i$. $\overline{G}_0$ coincides with $G_0$ and all its edges are in $P_{\overline{G}_0}$, so the invariant holds. Suppose then that $\overline{G}_{i-1}$, with $i>1$, satisfies the invariant. To construct $\overline{G}_i$ we must add the right path $p^i_r$ of $f_i$ plus possibly some dummy edges inside $f_i$. Let $s_f=v_0,v_1,v_2,\dots,v_{k-1},t_f=v_k$ be the right path $p^i_r$ of $f_i$ and let $s_f=u_0,u_1,u_2,\dots,u_{h-1},t_f=u_h$ be the left path $p^i_l$ of $f_i$. By hypothesis $p^i_r$ has at least three edges, and therefore $k \geq 3$; moreover, since $G$ has no transitive edge, $h\geq 2$. Notice that $p^i_l$ is a subpath of the right boundary of $\overline{G}_{i-1}$ and that the right boundary of $\overline{G}_i$ is obtained from the right boundary of $\overline{G}_{i-1}$  by replacing $p^i_l$ with $p^i_r$. Let $(u_{-1},s_f)$ be the edge along the right boundary of $\overline{G}_{i-1}$ entering $s_f$ and let $(t_f,u_{h+1})$ be the edge along the right boundary of $\overline{G}_{i-1}$ exiting $t_f$. We have different cases depending on whether $(u_{-1},s_f)$ and $(t_f,u_{h+1})$ belong to $P_{\overline{G}_{i-1}}$ or not.
	
	\begin{figure}[htb]
		\centering
		\begin{subfigure}{.45\textwidth}
			\centering
			\includegraphics[width=0.5\columnwidth, page=1]{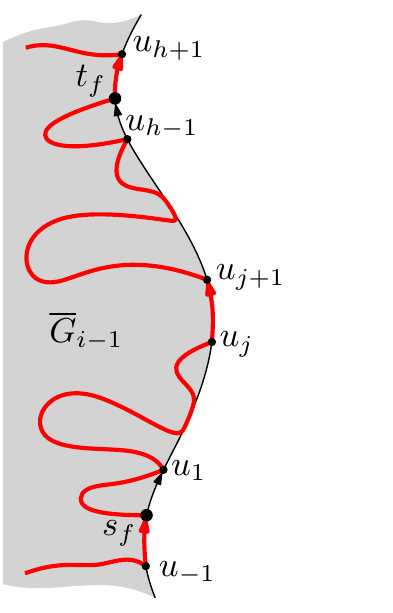}
			\subcaption{}
			\label{fi:long-right-path-1}
		\end{subfigure}
		\begin{subfigure}{.45\textwidth}
			\centering
			\includegraphics[width=0.5\columnwidth, page=2]{long-right-path}
			\subcaption{}
			\label{fi:long-right-path-2}
		\end{subfigure}
		\caption{\small{Illustration for the proof of~\cref{th:long-right-path}: Case 1. }}\label{fi:long-right-path-a}
	\end{figure}

	\textbf{Case 1: both $(u_{-1},s_f)$ and $(t_f,u_{h+1})$ belong to $P_{\overline{G}_{i-1}}$.} See \cref{fi:long-right-path-a} for an illustration. By the invariant there is an edge $(u_j,u_{j+1})$ with $0 \leq j \leq h-1$ between $s_f$ and $t_f$ that belongs to $P_{\overline{G}_{i-1}}$. We add the two dummy edges $(u_j,v_1)$ and $(v_{k-1},u_{j+1})$, thus ``extending'' $P_{\overline{G}_{i-1}}$ to a Hamiltonian path $P_{\overline{G}_{i}}$ of $\overline{G}_i$; namely, the edge $(u_j,u_{j+1})$ is bypassed by the path $u_j,v_1,v_2,\dots,v_{k-1},u_{j+1}$. The only edges of $p^i_r$ that do not belong to $P_{\overline{G}_{i}}$ are $(s_f,v_1)$ and $(v_{k-1},t_f)$. Since $(u_{-1},s_f)$ and $(t_f,u_{h+1})$ belong to $P_{\overline{G}_{i-1}}$ they also belong to $P_{\overline{G}_{i}}$ and thus the invariant is preserved. 
	
	\begin{figure}[htb]
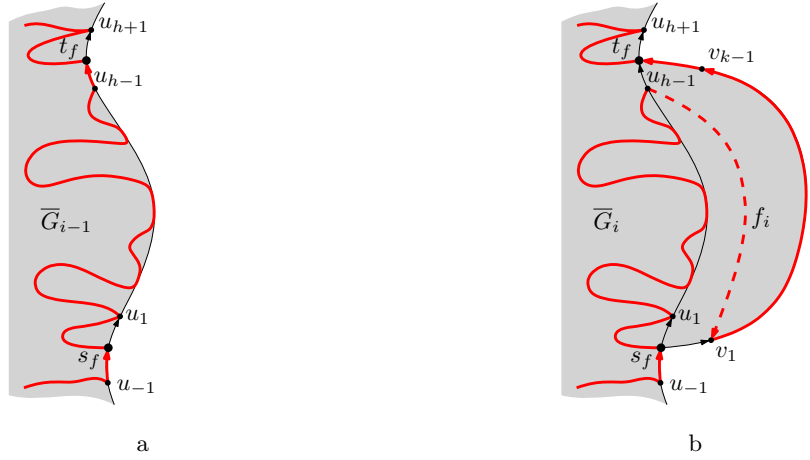

		\centering
		\begin{subfigure}{.45\textwidth}
			\centering
			\includegraphics[width=0.5\columnwidth, page=3]{long-right-path}
			\subcaption{}
			\label{fi:long-right-path-3}
		\end{subfigure}
		\centering
		\begin{subfigure}{.45\textwidth}
			\centering
			\includegraphics[width=0.5\columnwidth, page=4]{long-right-path}
			\subcaption{}
			\label{fi:long-right-path-4}
		\end{subfigure}
		\caption{\small{Illustration for the proof of~\cref{th:long-right-path}: Case 2. }}\label{fi:long-right-path-b}
	\end{figure}
	
	\textbf{Case 2: $(u_{-1},s_f)$ belongs to $P_{\overline{G}_{i-1}}$, while $(t_f,u_{h+1})$ does not.} See \cref{fi:long-right-path-b} for an illustration. By the invariant the edge $(u_{h-1},t_f)$ belongs to $P_{\overline{G}_{i-1}}$. We add the dummy edge $(u_{h-1},v_1)$. This``extends'' $P_{\overline{G}_{i-1}}$ to $P_{\overline{G}_{i}}$ of $\overline{G}_i$ bypassing the edge $(u_{h-1},t_f)$ with the path $u_{h-1},v_1,v_2,\dots,v_{k-1},t_f$. The only edge of $p^i_r$ that does not belong to $P_{\overline{G}_{i}}$ is $(s_f,v_1)$. Since $(u_{-1},s_f)$ belong to $P_{\overline{G}_{i-1}}$ it also belongs to $P_{\overline{G}_{i}}$ and thus the invariant is preserved.

	\textbf{Case 3: $(u_{-1},s_f)$ does not belongs to $P_{\overline{G}_{i-1}}$, while $(t_f,u_{h+1})$ does.} This case is symmetric to the previous one.
	
	\begin{figure}[htb]
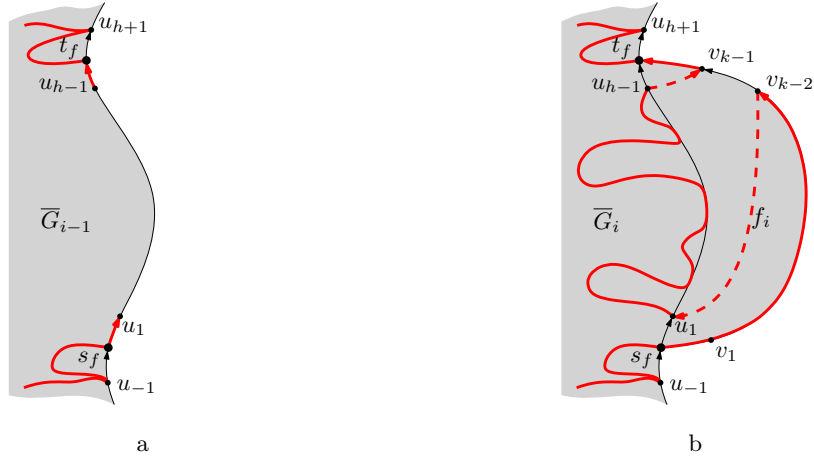

		\centering
		\begin{subfigure}{.45\textwidth}
			\centering
			\includegraphics[width=0.5\columnwidth, page=5]{long-right-path}
			\subcaption{}
			\label{fi:long-right-path-5}
		\end{subfigure}
		\centering
		\begin{subfigure}{.45\textwidth}
			\centering
			\includegraphics[width=0.5\columnwidth, page=6]{long-right-path}
			\subcaption{}
			\label{fi:long-right-path-6}
		\end{subfigure}
		\caption{\small{Illustration for the proof of~\cref{th:long-right-path}: Case 4.}}\label{fi:long-right-path-c}
	\end{figure} 
	
	\textbf{Case 4: neither $(u_{-1},s_f)$ nor $(t_f,u_{h+1})$ belong to $P_{\overline{G}_{i-1}}$.} See \cref{fi:long-right-path-c} for an illustration. By the invariant the edges $(s_f,u_{1})$ $(u_{h-1},t_f)$ belong to $P_{\overline{G}_{i-1}}$. We add the two dummy edges $(v_{k-2},u_1)$ and $(u_{h-1},v_{k-1})$. In this case we ``extend'' $P_{\overline{G}_{i-1}}$ to $P_{\overline{G}_{i}}$ bypassing $(s_f,u_{1})$ with the path $s_f,v_1,\dots,v_{k-2}$ and bypassing the edge $(u_{h-1},t_f)$ with the path $u_{h-1}, v_{k-1}, t_f$. The only edge of $p^i_r$ that does not belong to $P_{\overline{G}_{i}}$ is $(v_{k-2},v_{k-1})$; moreover, the two edges $(s_f,v_1)$ and $(v_{k-1},t_f)$ belong to $P_{\overline{G}_{i}}$. Thus the invariant is preserved.
\end{proof}

\thrhombi*
\begin{proof}
	The statement can be proved using the same technique as in the proof of~\cref{th:long-right-path}. If all faces are rhombi, when we construct $\overline{G}_i$ from $\overline{G}_{i-1}$, we have that $p^i_l$ is a path $s_f,u_1,t_f$ and $p^i_r$ is a path $s_f,v_1,t_f$. One between $(s_f,u_1)$ and $(u_1,t_f)$ belongs to $P_{\overline{G}_{i-1}}$. If $(s_f,u_1)$ belongs to $P_{\overline{G}_{i-1}}$ we add the dummy edge $(v_1,u_1)$. In this case we bypass the edge $(s_f,u_1)$ with the path $s_f,v_1,u_1$. If $(u_1,t_f)$ belongs to $P_{\overline{G}_{i-1}}$ we add the dummy edge $(u_1,v_1)$. In this case we bypass the edge $(u_1,t_f)$ with the path $u_1,v_1,t_f$. In both cases it is easy to see that the invariant is maintained.  
\end{proof}

\section{Additional Material for \cref{se:testing}}\label{se:app-testing}

\leconstanttypes*

\begin{proof}
The first statement follows from the definition of embedding type. To see that the number of embedding types allowed by $\pert{\mu}$ is at most $18$ it suffices to consider the following facts.
First, the number of different embedding types allowed by $\pert{\mu}$ is at most $36$. Further, some combinations are ``impossible'', in the sense that not all combinations of values for $s\_{vis}$, $spine\_{vis}$, and $t\_{vis}$ appear in a \UBE. In particular, we have that following values for $spine\_{vis}$ are forbidden: 
\begin{inparaenum}[(a)]
\item\label{pr:impossibile-l} $spine\_{vis}$ cannot be $L$, if either $s\_{vis}=R$ or $t\_{vis}=R$;
\item\label{pr:impossibile-r} $spine\_{vis}$ cannot be $R$, if either $s\_{vis}=L$ or $t\_{vis}=L$;
\item\label{pr:impossibile-n} $spine\_{vis}$ cannot be $N$, if either $s\_{vis}$ or $t\_{vis}$ is different from $N$.
\end{inparaenum}
\cref{pr:impossibile-l} rules out $5$ combinations; 
\cref{pr:impossibile-r} and \cref{pr:impossibile-n} rule out $5$ and $8$ more combinations, respectively.
This leaves us with $18$ embedding types; see \cref{fig:types}.
\end{proof}

\leraplacement*

\begin{proof}
We show how to construct a \UBE $\langle \pi', \sigma' \rangle$ of $G$ whose restriction to $\pert{\mu}$~is~$\langle \pi'_\mu, \sigma'_\mu \rangle$. 
For the ease of description we actually show how to construct an upward planar drawing $\Gamma'$ of $G$ in which each vertex $v$ lies along the spine in the same bottom-to-top order determined by $\pi'$ and each edge is drawn on the page assigned by $\sigma'$. Clearly, such a drawing implies the existence of $\langle \pi', \sigma' \rangle$.

Consider the canonical drawing $\Gamma(\pi,\sigma)$ of $G$; refer to \cref{fig:Canonical-Starting}.
Consider the drawing $\Gamma_{\overline{\mu}}$ of $G_{\overline{\mu}}$ obtained by restricting $\Gamma(\pi,\sigma)$ to the edges of $G$ not in $\pert{\mu}$ and to their endpoints. 
Denote by $\pi_{\overline{\mu}}$ the bottom-to-top order of the vertices of $G_{\overline{\mu}}$ in $\Gamma_{\overline{\mu}}$. 
Let $\pi_1, \pi_2, \dots, \pi_k$ be the maximal subsequences of $\pi_{\overline{\mu}}$ between $s_\mu$ and $t_\mu$ and composed of consecutive vertices in $\pi_{\overline{\mu}}$ that are also consecutive in $\pi$ (refer to \cref{fig:Canonical-Starting}). Observe that sequences $\pi_i$, $i=1, \dots, k$, may be formed by a single vertex or by multiple vertices. Also, the first sequence $\pi_1$ includes $s_\mu$ and the last sequence $\pi_k$ includes $t_\mu$. Further, unless $\pert{\mu}$ is an edge, for which the statement is trivial, we have that $k \geq 3$.   

Drawing $\Gamma_{\overline{\mu}}$ contains a face $f_\mu$ that is incident to $s_\mu$, $t_\mu$, and all the starting and ending vertices of the sequences $\pi_i$, with $i = 2, \dots, k-1$. In particular, some starting and ending vertices of the sequences $\pi_i$ are encountered when traversing $f_\mu$ clockwise from $s_\mu$ to $t_\mu$ (\emph{left vertices of $f_\mu$}) and some of them are encountered when traversing $f_\mu$ counter-clockwise from $s_\mu$ to $t_\mu$ (\emph{right vertices of $f_\mu$}).

We show how to insert into face $f_\mu$ the drawing $\Gamma(\pi'_\mu,\sigma'_\mu)$ of $\pert{\mu}$, producing the promised drawing $\Gamma'$ of the \UBE $\langle \pi', \sigma' \rangle$ of~$G$.

First, consider the drawing $\Gamma(\pi'_\mu,\sigma'_\mu)$ of $\pert{\mu}$ (see \cref{fig:Canonical-small}) and insert it, possibly after squeezing it, into $f_\mu$ in such a way that its spine lays entirely on the line of the spine of $\Gamma_{\overline{\mu}}$ and in such a way that the vertices of $\mu$ do not fall in between the vertices of any maximal sequence $\pi_i$, $i=1, \dots, k$. This is always possible since $k \geq 3$ and, therefore, there exists a portion of the spine of $\Gamma_{\overline{\mu}}$ which is in the interior of $f_\mu$. 
Observe that this implies that $s_\mu$ and $t_\mu$ have now a double representation, since the drawing of the source $s'_\mu$ and sink $t'_\mu$ of $\Gamma(\pi'_\mu,\sigma'_\mu)$ do not coincide with the drawing of $s_\mu$ and $t_\mu$ in $\Gamma_{\overline{\mu}}$. Denote by $\Gamma^*$ the resulting drawing.

Suppose $\langle \pi_\mu, \sigma_\mu \rangle$ (and, hence, also $\langle \pi'_\mu, \sigma'_\mu \rangle$) has Type $\langle x, y, z \rangle$. Observe that if $y \in\{N, R\}$ then there are no left vertices of $f_\mu$. Otherwise, if $y \in \{L,B\}$, then it is possibile to identify two vertices $v_{\ell,b}$ and $v_{\ell,t}$ of $\Gamma(\pi'_\mu,\sigma'_\mu)$ such that a portion of the spine is visible from the left between $v_{\ell,b}$ and $v_{\ell,t}$.  
Move between $v_{\ell,b}$ and $v_{\ell,t}$ on the spine in $\Gamma^*$ all the vertices of the sequences $\pi_i$ whose starting and ending vertices are left vertices of $f_\mu$, preserving their relative order. Analogously, observe that if $y \in\{N, L\}$, then there are no right vertices of $f_\mu$. Otherwise, if $y\in\{R, B\}$, then it is possibile to identify two vertices $v_{r,b}$ and $v_{r,t}$ of $\Gamma(\pi'_\mu,\sigma'_\mu)$ such that a portion of the spine is visible from the left between $v_{r,b}$ and $v_{r,t}$. 
Move between $v_{r,b}$ and $v_{r,t}$ on the spine in $\Gamma^*$ all the vertices of the sequences $\pi_i$ whose starting and ending vertices are right vertices of $f_\mu$, preserving their relative order. Refer to \cref{fig:Disconnected-embedding}.

Observe that in $\Gamma^*$ the edges of $G$ lie in the pages prescribed by $\sigma'$ by construction. Also, in $\Gamma^*$ the bottom-to-top order of the vertices is the same as $\pi'$ with the exception of the two duplicates vertices $s'_\mu$ and $t'_\mu$. Thus, by identifying $s_\mu$ with $s'_\mu$ and $t_\mu$ with $t'_\mu$, we obtain the promised upward drawing $\Gamma'$ of $G$ in which each vertex $v$ lies along the spine in the same bottom-to-top order determined by $\pi'$ and each edge is drawn as a $y$-monotone curve on the page assigned by $\sigma'$. To complete the proof we argue about the planarity of $\Gamma'$. Clearly, identifying $s_\mu$ with $s'_\mu$ ($t_\mu$ with $t'_\mu$) does not introduce crossings when $s_\mu$ with $s'_\mu$ ($t_\mu$ with $t'_\mu$) are consecutive along the spine in $\Gamma^*$ (see $t'_\mu$ and $t_\mu$ in \cref{fig:Disconnected-embedding}). In the case in which $s_\mu$ and $s'_\mu$ ($t_\mu$ and $t'_\mu$) are not consecutive along the spine in $\Gamma^*$, necessarily $x \in \{L,R\}$ ($z \in \{L,R\}$). Suppose $x = L$ ($z=L$), the case when $x = R$ ($z=R$) being analogous. There cannot exist in $\Gamma_{\overline{\mu}}$ an edge $e=(a,b)$ such that $\pi(a) < \pi(s_\mu) < \pi(b)$ and such that $\sigma(e)=R$. Therefore, we can continuously move $s'_\mu$, together with its incident edges, toward $s_\mu$, remaining inside the region bounded by $f_\mu$ without intersecting any edge of $G_{\overline{\mu}}$ (see \cref{fig:Final-embedding}). This concludes the proof. 
\end{proof}

\subsection{S-nodes}\label{apx:s-node}

\lemsdecision*

\begin{proof}
 Let $\mu$ be an S-node with poles $s_{\mu}$ and $t_{\mu}$. Let $\mu'$ and $\mu''$ be the two children of $\mu$ with poles $s_{\mu'}$, $t_{\mu'}$ and $s_{\mu''}$, $t_{\mu''}$, respectively, where $s_{\mu'}=s_{\mu}$, $t_{\mu'}=s_{\mu''}$, and $t_{\mu''}=t_{\mu}$. Clearly, combining each pair of \UBEs of the two children of $\mu$ always yields a \UBE of $\pert{\mu}$. 
Let $\langle x',y',x' \rangle$ and $\langle x'',y'',z'' \rangle$ be the embedding types of \UBEs of $\pert{\mu'}$ and $\pert{\mu''}$, respectively. The embedding type $\langle x, y, z \rangle$ of the \UBE of $\pert{\mu}$ resulting from combining such \UBEs can be computed as follows. We have $x = x'$ and $z = z'$. As for $y$, we have that: \begin{inparaenum}[(i)] \item $y = N$ iff $y'=y''=N$; \item $y=L$ if either $y'=L$ and $y'' \in \{L,N\}$ or $y''=L$ and $y' \in \{L,N\}$; \item $y=R$ if either $y'=R$ and $y'' \in \{R,N\}$ or $y''=R$ and $y' \in \{R,N\}$; \item $y=B$ if at least one of $y'$ and $y''$ is $B$ or one of them is $L$ and the other is $R$.\end{inparaenum}~Since the number of embedding types realizable by the two children of $\mu$ is constant, the statement follows.
\end{proof}

\subsection{P-nodes: Fixed Embedding}\label{apx:p-case-fixed}

Let $\mu$ be a P-node with poles $s_\mu$ and $t_\mu$ and let $\mu_1,\mu_2,\dots,\mu_k$ be the children of $\mu$ in the left-to-right order defined by the given embedding~of~$G$.

Let $\skel^i(\mu)$ and $\mypert^i(\mu)$ be the subgraphs of $\skel(\mu)$ and of $\pert{\mu}$, respectively, determined by children $\mu_1,\mu_2,\dots,\mu_i$ of $\mu$. 
Clearly, the embedding types that are realizable by $\mypert^1(\mu)$ are those that are realizable by $\pert{\mu_1}$.
For $i=2,\dots,k$, we can compute the set $X_i$ of embedding types that are realizable by $\mypert^i(\mu)$ as follows.  We consider each embedding type $\langle x,y,z \rangle \in X_{i-1}$ and each embedding type $\langle a,b,c\rangle$ that is realizable by $\pert{\mu_i}$, and we either determine that no embedding type can be obtained by composing the pair $\langle x,y,z \rangle$ and $\langle a,b,c\rangle$ or compute an embedding type $\langle p,q,w\rangle$ for $\mypert^i(\mu)$, which we add to $X_i$, as follows.
\begin{description}
\item[Rejection:] If $x \in \{N,L\}$ and $a \in \{N,R\}$, then we reject the pair.
Similarly, if $z \in \{N,L\}$ and $c \in \{N,R\}$, then we reject the pair.
If $y=L$, then we reject the instance, as $\pert{\mu_i}$ contains at least an internal vertex that has to be placed on the spine to the left of $\mypert^{i-1}(\mu_i)$ and between $s_\mu$ and $t_\mu$.
\item[Embedding Types for $\mypert^i{\mu}$:]
If $x=a$, then $p=x$, and if $x = N$ or $a=N$, then $p=N$. 
Similarly, if $z=c$, then $w=z$, and if $z = N$ or $c=N$, then $w=N$. 
If $y=L$ and $b \in \{R,B\}$, then $q=B$; also, if $y=L$ and $b = L$, then $q=L$. 
Similarly, if $y=R$ and $b \in \{L,B\}$, then $q=B$; also, if $y=R$ and $b = R$, then $q=R$.
Finally, if $y=B$ and $b  \in \{R,B\}$, then $q=B$; if $y=B$ and $b=L$, then $q=L$. 
\end{description}

Since the set of rules above defines the set $X_k$ of embedding types realizable by $\pert{\mu}$ and the above computations can easily be performed in time linear in the number of children of $\mu$, we obtain \cref{lem:p-decision-fixed}.

\subsection{P-nodes: Variable Embedding}\label{see:apex-p-node}

In this section, we give necessary and sufficient conditions under which a P-node $\mu$ admits one of the relevant embedding types (enclosed by a solid polygon in \cref{fig:types}) in the variable embedding setting.
We start with the conditions for Type \NRR and discuss the time complexity of testing such conditions.

\pnrr*

\begin{proof}
The necessity can be argued by considering that all children must be of Type $\langle \cdot ,\cdot, R \rangle$ and at most one child can be of Type $\langle N ,\cdot, \cdot \rangle$.

For the sufficiency, we consider the two cases separately. 
In {\bf ({Case}~1)} we order the children of $\mu$ so that from left to right we have the possible Q-node child, if any, followed by the unique Type-\NRR child, followed by the Type-\LBR children in any order; see \cref{fig:P-NRR-C1}.
In {\bf ({Case}~2)} we order the children of $\mu$ so that from left to right we have the possible Type-\QRRR Q-node child, if any, followed by the possible non-Q-node Type-\RRR child, followed by the Type-\RBR children in any order, followed by the possible Type-\NBR, if any, followed by the Type-\LBR children in any order, if any; see \cref{fig:P-NRR-C2}. 

We show that the ordering $\mu_1,\mu_2,\dots,\mu_k$ of the children of $\mu$ described in the two cases above determines a \UBE of $\pert{\mu}$. 

Denote with $\langle \pi_i,\sigma_i \rangle$ the \UBE of $\pert{\mu_i}$, where $\pi_i$ is the bottom-to-top ordering of the internal vertices of $\pert{\mu_i}$ in the \UBE and where $\sigma_i:E(\pert{\mu_i}) \rightarrow \{R,L\}$ is the assignment of the edges of $\pert{\mu_i}$ to the two pages of the \UBE. 
In {\bf ({Case}~1)} we construct a \UBE $\langle \pi,\sigma \rangle$ of $\pert{\mu}$ as follows: see \cref{fig:P-NRR-C1} for an example. The bottom-to-top ordering $\pi$ of the vertices of $\pert{\mu}$ is $s_\mu, \pi_1, \pi_2,\dots,\pi_k, t_\mu$. The assignment  $\sigma$ of the edges of $\pert{\mu}$ to the two pages is the one determined by the $\sigma_i$`s, that is, $\sigma(e)=\sigma_i(e)$ if $e \in \pert{\mu_i}$. Further, if edge $(s_\mu, t_\mu)$ exists, then $\sigma \big((s_\mu, t_\mu)\big)=L$. Clearly, $\langle \pi,\sigma \rangle$ is of a \UBE of $\pert{\mu}$ of Type \NRR.
In {\bf ({Case}~2)} we construct a \UBE $\langle \pi,\sigma \rangle$ of $\pert{\mu}$ as follows: see \cref{fig:P-NRR-C2} for an example.
Observe that the children of $\mu$ of Type~\RBR or Type~\NBR determine a consecutive subsequence $\mu_p, \dots, \mu_q$ of the left-to-right sequence of the children of $\mu$, with $2 \leq p \leq q \leq k$. For each child $\mu_i$, $i=p,\dots, q$, by the definition of Type~\RBR and Type~\NBR, there exist two consecutive vertices $v'$ and $v''$ in $\pi_i$ such that portion of the spine between them is visible from the left. Therefore, we can split $\pi_i$ into two subsequences $\pi'_i$ and $\pi''_i$ where $v' \in \pi'_i$ and $v'' \in \pi''_i$.  
We construct $\langle \pi,\sigma \rangle$ as follows. Let $\mu^*$ be the possible Type-\RRR child of $\mu$.
The bottom-to-top ordering $\sigma$ of the vertices of $\pert{\mu}$ in the \UBE is $s_\mu, \pi'_p, \pi'_{p+1}, \dots, \pi'_q, \pi^*, $ $\pi''_q, \pi''_{q-1}, \dots, \pi''_p, \pi_{q+1}, \dots \pi_k, t_\mu$, where $\pi^* = \emptyset$ if $\mu^*$ does not exist. The assignment $\sigma$ of the edges of $\pert{\mu}$ to the pages of the \UBE  is the one determined by the $\sigma_i$`s. Further, if edge $(s_\mu, t_\mu)$ exists, then $\sigma \big((s_\mu, t_\mu)\big)=L$. Clearly, $\langle \pi,\sigma \rangle$ is of a \UBE of $\pert{\mu}$ of Type \NRR.
\end{proof}

Regarding the computational complexity of deciding if one of {\bf ({Case}~1)} or {\bf ({Case}~2)} of \cref{le:p-NRR} applies, we show that such a task can be reduced to a network flow problem on a network $\mathcal{N}$ with edge demands.

We describe the construction of $\mathcal{N}$ for {\bf ({Case}~2)} (refer to \cref{fig:P-NRR-C2,fig:P-NRR-C2-NetWork}); the construction for {\bf ({Case}~1)} being similar.  
Network $\mathcal{N}$ is a capacitated flow network where each arc $e$ has a label $[l,u]$, where $l$ is a lower bound and $u$ is an upper bound on the flow traversing $e$ in a feasible flow. In particular, $\mathcal{N}$ contains a source $s$; a sink $t$; a node $\mu_i$ for each child $\mu_i$ of $\mu$; a node $t \in \{\QRRR, \RRR, \RBR, \NBR, \LBR\}$, representing each embedding type used in constructing the sequence of {\bf ({Case}~2)}; and two special nodes $\nu_1$ and $\nu_2$. Network $\mathcal{N}$ contains the following arcs: $(s,\mu_i)$ with label $[1,1]$, for $i = 1, \dots, k$; an arc $(\mu_i,t)$ with label $[0,1]$, for each type $t$ for which $\pert{\mu_i}$ admits a Type-$t$ \UBE; an arc $(\QRRR,\nu_1)$ with label $[0,1]$; an arc $(\RRR,\nu_1)$ with label $[0,1]$; an arc $(\nu_1,t)$ with label $[1,2]$; an arc $(\RBR,t)$ with label $[0,\infty]$; an arc $(\RBR,\nu_2)$ with label $[0,1]$; an arc $(\LBR,\nu_2)$ with label $[0,\infty]$; and an arc $(\nu_2,t)$ with label $[1,\infty]$. Observe that $\mathcal{N}$ has size linear in the number $k$ of children of $\mu$ due to the fact that the outdegree of $\mu_i$ nodes is bounded by the number of embedding types, $s$ has outdegree $k$, and the remaining nodes have outdegree at most~$1$.

Clearly, a sequence corresponding to {\bf ({Case}~2)} exists if and only if the network $\mathcal{N}$ admits a feasible flow, which has value $k$. Testing the existence of a feasible flow can be reduced in linear time to a maxflow problem in a suitable capacitated network $\mathcal{N}'$~\cite{DBLP:books/daglib/0015106}. We obtain a solution by applying the standard max-flow algorithm by Ford–Fulkerson, which runs in $O(|E(\mathcal{N})|\times f)$, where $f$ is the value of the maximum flow. 

\medskip
Next, we give the conditions for the remaining relevant embedding types \RRR, \NRN, \RBR, \LBR, \NBR, \RRN, \RBN, \NBN, and \NNN, whose testing can be carried out with the same algorithmic strategy as for Type \NRR.
Altogether, we obtain \cref{lem:p-decision-variable}.

\begin{lemma}[Type \protect\LBR]\label{le:p-LBR}
Let $\mu$ be a P-node. Type \LBR is admitted by $\mu$ in the variable embedding setting
if and only if all of its children admit Type \LBR. 
\end{lemma}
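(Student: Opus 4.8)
The plan is to prove both implications from a single elementary observation about visibility at the poles, and then reuse the type enumeration of \cref{lem:constant-types} to pin down the middle coordinate for free. The observation is this: in any \UBE of an $s_\mu t_\mu$-graph, $s\_{vis}=L$ holds if and only if no edge incident to $s_\mu$ is drawn on the left page, and symmetrically $t\_{vis}=R$ holds if and only if no edge incident to $t_\mu$ is drawn on the right page. Indeed, $s_\mu$ is the bottom vertex of the spine, so the spine segment immediately above it can be blocked from the left only by a left-page edge emanating from $s_\mu$ itself; the argument at $t_\mu$ is symmetric.

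For necessity, I would fix a \UBE $\langle \pi_\mu,\sigma_\mu \rangle$ of $\pert{\mu}$ of Type \LBR. Since $s_\mu$ is a pole common to all children, the edges incident to $s_\mu$ in $\pert{\mu}$ are exactly the union, over the children $\mu_i$, of the edges incident to $s_\mu$ in each $\pert{\mu_i}$. Because $s\_{vis}=L$ for $\mu$, the observation gives that none of these edges is on the left page, hence no child places a left-page edge at $s_\mu$. Restricting $\langle \pi_\mu,\sigma_\mu \rangle$ to $\pert{\mu_i}$ is a \UBE of $\pert{\mu_i}$ in which $s_\mu$ has at least one incident edge (as $\pert{\mu_i}$ is a connected $s_\mu t_\mu$-graph) and none of them on the left page; thus $s\_{vis}=L$ for $\mu_i$. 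The symmetric argument at $t_\mu$ yields $t\_{vis}=R$ for every child. Finally, among the $18$ realizable types of \cref{lem:constant-types} the only one with $s\_{vis}=L$ and $t\_{vis}=R$ is \LBR (the values $L$, $R$, and $N$ for $spine\_{vis}$ are each ruled out), so each child is of Type \LBR.

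For sufficiency, I would exhibit an explicit parallel arrangement in the spirit of Case~1 of \cref{le:p-NRR}. Choosing a Type-\LBR \UBE $\langle \pi_i,\sigma_i \rangle$ of each child $\mu_i$, I set the bottom-to-top order of $\pert{\mu}$ to the block concatenation $s_\mu,\pi_1,\pi_2,\dots,\pi_k,t_\mu$ and keep each child's internal page assignment. Each child has only right-page edges at $s_\mu$ and only left-page edges at $t_\mu$, so the edges reaching up from $s_\mu$ are all on the right page and the edges reaching up to $t_\mu$ are all on the left page; these long arcs are pairwise nested in block order and each encloses exactly the blocks it spans, while the internal edges of a single block form a valid \UBE of the corresponding child. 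One then checks that no two edges conflict, so the drawing is a planar \UBE realizing a genuine parallel composition. Since no left-page edge is incident to $s_\mu$ and no right-page edge is incident to $t_\mu$, the observation gives $s\_{vis}=L$ and $t\_{vis}=R$, and by the forcing argument above the resulting type is exactly \LBR.

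The step I expect to be the main obstacle is the conflict-freeness check in the sufficiency construction: verifying that the spanning arcs from $s_\mu$ and to $t_\mu$ never cross the internal edges of the intermediate blocks, and that the block order induces a consistent rotation of the children around both poles, so that the arrangement is truly a P-node embedding rather than a series-like stacking. This is exactly where the two structural hallmarks of Type \LBR are indispensable, since the absence of a left-page edge at $s_\mu$ and of a right-page edge at $t_\mu$ are what confine every child's edges to the appropriate page and prevent the enclosing arcs from tangling.
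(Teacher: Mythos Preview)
Your proof is correct and follows essentially the same approach as the paper: the paper's necessity argument is the terse ``it is immediate'' that restricting a Type-\LBR embedding to a child preserves left-visibility at $s_\mu$ and right-visibility at $t_\mu$, which you make explicit via the edge-at-pole observation, and the paper's sufficiency is exactly your block concatenation $s_\mu,\pi_1,\dots,\pi_k,t_\mu$ with inherited page assignments (it is spelled out for Case~1 of \cref{le:p-NRR} and merely pointed at for \LBR). Your concern about conflict-freeness is well-placed but the check is straightforward, and the paper simply asserts it.
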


\begin{proof}
For the necessity, consider any \UBE  $\mathcal E$ of $\pert{\mu}$ of Type \LBR. Let $\mu^*$ be any child of $\mu$ and let $\mathcal{E}^*$ be the \UBE of $\pert{\mu^*}$ obtained by restricting $\mathcal E$ to $\pert{\mu^*}$. It is immediate that
there exist a portion of the spine incident to $s_\mu$ and between $s_\mu$ and $t_\mu$ that is visible from the left page and a portion of the spine incident to $t_\mu$ and between $s_\mu$ and $t_\mu$ that is visible from the right page. Thus, $\mu$ admits Type \LBR.
For the sufficiency, we order the children of $\mu$ arbitrarily; see \cref{fig:P-LBR} for an example. Clearly, the resulting embedding of $\pert{\mu}$ is of Type \LBR.
\begin{figure}[h]
\centering
\begin{subfigure}[b]{.24\textwidth}
  \centering
  \includegraphics[page=1]{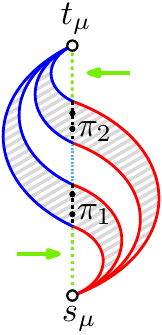}
  \subcaption{\LBR}
  \label{fig:P-LBR}
\end{subfigure}
\begin{subfigure}[b]{.24\textwidth}
  \centering
  \includegraphics[page=2]{P-node}
  \subcaption{\RBR}
  \label{fig:P-RBR}
\end{subfigure}
\begin{subfigure}[b]{.24\textwidth}
  \centering
  \includegraphics[page=3]{P-node}
  \subcaption{\RRR; Case~1}
  \label{fig:P-RRR-C1}
\end{subfigure}
\begin{subfigure}[b]{.24\textwidth}
  \centering
  \includegraphics[page=4]{P-node}
  \subcaption{\RRR; Case~2}
  \label{fig:P-RRR-C2}
\end{subfigure}
\caption{Constructions for P-nodes of Type \protect\LBR, \protect\RBR, and \protect\RRR.}
\end{figure}
\end{proof}

\begin{lemma}[Type \protect\RBR]\label{le:p-RBR}
Let $\mu$ be a P-node. Type \RBR is admitted by $\mu$ in the variable embedding setting if and only if all of its children admit Type \RBR. 
\end{lemma}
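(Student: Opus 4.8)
The plan is to prove both implications, in the spirit of the proof of \cref{le:p-LBR} for Type \LBR, but with one essential difference: since in Type \RBR both poles have the same visibility side, the naive left-to-right concatenation of the children used for Type \LBR no longer preserves a left-visible portion of the spine (the left pages of the children would seal each other's gaps), so the sufficiency needs the split-and-nest idea already exploited in the construction for \cref{le:p-NRR}. Throughout I would use two elementary facts about a \UBE of an $s_\mu t_\mu$-graph: (i) near each pole at most one side of the spine is visible, since the edge joining the pole to its adjacent vertex already blocks one side; and (ii) deleting edges can only increase visibility.

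For the necessity, I would assume $\pert{\mu}$ admits a \UBE $\mathcal E$ of Type \RBR and let $\mu^*$ be any child with induced \UBE $\mathcal E^*$. Since $\mathcal E$ is right-visible on the portion of spine incident to $s_\mu$, no edge leaving $s_\mu$ can be drawn on the right page; as every child has at least one out-edge at $s_\mu$, each child routes all its edges at $s_\mu$ on the left page, so $\mathcal E^*$ is left-blocked and right-visible at $s_\mu$, giving $s\_{vis}=R$ for $\mu^*$ by fact~(i). The symmetric argument at $t_\mu$ gives $t\_{vis}=R$. By the forbidden combinations established in the proof of \cref{lem:constant-types}, a child with $s\_{vis}=t\_{vis}=R$ must have $spine\_{vis}\in\{R,B\}$, i.e.\ it is of Type \RRR or of Type \RBR. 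It then remains to exclude Type \RRR: a Type-\RRR child is left-blocked at \emph{every} height between $s_\mu$ and $t_\mu$ in $\mathcal E^*$, hence also in $\mathcal E$ by fact~(ii); but then a leftward ray from any spine point of $\mathcal E$ meets this child's left-page edges, so no point is visible from the left, contradicting $spine\_{vis}=B$ for $\mathcal E$. Therefore all children are of Type \RBR.

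For the sufficiency, suppose every child $\mu_1,\dots,\mu_k$ is of Type \RBR. Each $\mu_i$ has, by $spine\_{vis}=B$, two consecutive internal vertices between which the spine is visible from the left; splitting the bottom-to-top order $\pi_i$ of its internal vertices there yields $\pi_i=\pi'_i\pi''_i$. I would then place the children concentrically, using the order $s_\mu,\pi'_1,\pi'_2,\dots,\pi'_k,\pi''_k,\dots,\pi''_2,\pi''_1,t_\mu$, keeping each child's page assignment. This makes the left gaps nested, so no left-page edge spans the innermost gap (that of $\mu_k$); that gap is therefore visible from the left, while the portion incident to $s_\mu$ remains visible from the right because all edges at $s_\mu$ lie on the left page, so that $spine\_{vis}=B$. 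The same pole analysis as above gives $s\_{vis}=t\_{vis}=R$, and the nesting keeps the arcs non-crossing on both pages, so the resulting drawing is a \UBE of $\pert{\mu}$ of Type \RBR.

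The main obstacle is the sufficiency: one must argue that the concentric arrangement actually preserves a left-visible portion of the spine (the innermost gap) while simultaneously keeping both pages planar. This is exactly where Type \RBR departs from the trivial arbitrary-order construction that suffices for Type \LBR, and where the split-at-the-left-gap device and the nesting of the gaps are needed.
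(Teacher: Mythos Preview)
Your argument is correct and, in substance, matches what the paper intends. The paper's own proof is a single sentence deferring everything to \cref{le:p-LBR} and to the figure \cref{fig:P-RBR}; your write-up spells out the details, and in doing so you correctly identify that \RBR is not literally the same as \LBR in two places. For the sufficiency, the simple concatenation $s_\mu,\pi_1,\dots,\pi_k,t_\mu$ that works for \LBR fails for \RBR (left-page pole edges of different children would conflict), and the split-and-nest ordering you give is exactly the construction the paper uses when it handles Type-\RBR children explicitly in Case~2 of \cref{le:p-NRR}; this is also what \cref{fig:P-RBR} depicts. For the necessity, you add the step of excluding a Type-\RRR child, which is genuinely needed here: with $s\_{vis}=t\_{vis}=R$ the constraints of \cref{lem:constant-types} leave both \RRR and \RBR open for a child, whereas in \LBR the pole values $s\_{vis}=L$, $t\_{vis}=R$ already force $spine\_{vis}=B$. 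Your exclusion argument (a Type-\RRR child left-blocks every height, hence also blocks the parent) is sound. So the approach is the same as the paper's; you have simply made explicit what the paper leaves to the picture.
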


\begin{proof}
Both the necessity and sufficiency proofs follow the same line as the proofs for \cref{le:p-LBR}; see \cref{fig:P-RBR} for an example.
\end{proof}

\begin{lemma}[Type \protect\RRR]\label{le:p-RRR}
Let $\mu$ be a P-node. Type \RRR is admitted by $\mu$ in the variable embedding setting if and only if
one of these two cases occurs:
\begin{inparaenum}[\bf ({Case}~1)]
\item There exists a Q-node $\omega$ corresponding to edge $(s_\mu , t_\mu)$ between the poles of $\mu$, all the other children of $\mu$ except for at most one node $\xi$ admit Type \RBR, and $\xi$, if it exists, admits Type \RRR.
\item Edge $(s_\mu , t_\mu)$ between the poles of $\mu$ does not exist, there exists a child $\xi$ of $\mu$ admitting Type \RRR, and all the other children of $\mu$, if any, admit Type \RBR.
\end{inparaenum}
\end{lemma}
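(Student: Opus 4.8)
The plan is to prove the two directions separately, following the pattern used for Type \NRR in \cref{le:p-NRR} and for Types \LBR and \RBR in \cref{le:p-LBR,le:p-RBR}: a short necessity argument based on which children are exposed to the two pages, and an explicit order-and-page construction for sufficiency. \textbf{Necessity.} I would start from an arbitrary \UBE $\mathcal{E}$ of $\pert{\mu}$ of Type \RRR and read off the constraints on the children. In a P-node the children appear in a left-to-right sequence sharing the poles, and the only child exposed to the left page is the outermost-left one, while the only child exposed to the right page is the outermost-right one. Since $\mathcal{E}$ is of Type \RRR, no point of the spine between $s_\mu$ and $t_\mu$ is visible from the left, so the left page must be saturated over the entire height from $s_\mu$ to $t_\mu$. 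The only full-height elements able to do this are the single arc of the edge $(s_\mu,t_\mu)$ drawn on the left page, or a child whose own left page is fully saturated; the latter is exactly a child of Type \RRR. This dichotomy yields \textbf{(Case~1)} (the edge is present and acts as the wall) and \textbf{(Case~2)} (no edge, and one \RRR child acts as the wall). Because $\mathcal{E}$ is right-visible near both poles, every remaining child must be right-visible near $s_\mu$ and near $t_\mu$, hence of Type \RRR or Type \RBR; and, being interior, it must additionally offer a left-visible gap so that it can be interleaved with its neighbours without crossings, which singles out Type \RBR. As the wall is unique, at most one child is of Type \RRR, giving the stated $\xi$.

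\textbf{Sufficiency.} For each case I would give an explicit left-to-right order of the children together with the page assignment, and verify that the induced embedding is of Type \RRR; the weaving of the spine sequences is the one used in \cref{le:p-NRR}, where each sequence is split at a visible gap and the pieces are nested. Denote by $\langle \pi_i,\sigma_i\rangle$ the \UBE realizing the prescribed type of child $\mu_i$. In \textbf{(Case~1)} I place $\omega$ as the outermost-left element with $\sigma\big((s_\mu,t_\mu)\big)=L$, so that its single left-page arc saturates the left over the full height from $s_\mu$ to $t_\mu$; the optional Type-\RRR child $\xi$ and the Type-\RBR children are then added to its right, interleaving their sequences so that at every junction the right page stays exposed near $s_\mu$, near $t_\mu$, and in the middle; see \cref{fig:P-RRR-C1}. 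In \textbf{(Case~2)} the edge is absent, so I use $\xi$ itself as the outermost-left child: being of Type \RRR its left page is saturated over the full height and therefore walls the left for every child placed to its right, and the Type-\RBR children are interleaved to its right exactly as before; see \cref{fig:P-RRR-C2}. In both constructions the page assignment inherits the $\sigma_i$'s (with $(s_\mu,t_\mu)$ on the left when present), no point is left-visible, and right-visibility is preserved near both poles and in the middle, so the embedding is of Type \RRR.

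\textbf{Main obstacle.} The delicate part is the rigidity asserted in the necessity direction: that at most one child may be of Type \RRR and that every other non-wall child must be exactly \RBR (rather than, e.g., \NRR or a second \RRR). The crux is to argue that an \RRR child, having no left-visible gap, cannot serve as a nestable interior block and can only occupy the unique full-height ``wall'' slot bordering the saturated left page. This is precisely the feature captured by a capacity-one constraint in the flow network of \cref{see:apex-p-node}, so that, as for Type \NRR, testing the resulting condition reduces to a feasible-flow computation; the remaining routine step is to check that the interleaving in the sufficiency constructions introduces no crossings on either page and yields Type \RRR and not Type \RBR.
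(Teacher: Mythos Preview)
Your proposal is correct and follows essentially the same approach as the paper's proof. Both directions match: for necessity, the paper argues directly that the restriction of a Type-\RRR embedding to each child is of Type $\langle R,X,R\rangle$ with $X\in\{R,B\}$, that two non-edge Type-\RRR children are incompatible, and that without any Type-\RRR child the parent would be Type \RBR --- which is exactly your ``wall/saturation'' argument recast in restriction-type language; for sufficiency, both you and the paper give the identical left-to-right order ($\omega$ then $\xi$ then the \RBR children in {\bf Case~1}; $\xi$ then the \RBR children in {\bf Case~2}), with the interleaving you spell out via \cref{le:p-NRR} being what the paper leaves to \cref{fig:P-RRR-C1,fig:P-RRR-C2}.
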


\begin{proof}
It is easy to see that these conditions are necessary: in fact, for any child $\mu_i$ of $\mu$, the restriction of a Type-\RRR embedding of $\pert{\mu}$ to $\pert{\mu_i}$ is clearly of Type $\langle R, X, R\rangle$, with $X \in \{R,B\}$. Also, an embedding of Type \RRR for $\pert{\mu}$ is incompatible with more than one embedding of Type \RRR for its children unless one of them is a single edge. If no child has embedding of Type \RRR, then $\mu$ would have Type \RBR.
For the sufficiency, we consider the two cases separately. 
Let $\mu_1, \dots, \mu_k$ be the non-Q-node children of $\mu$.
In {\bf ({Case}~1)} we order the children of $\mu$ so that from left to right we have node $\omega$, followed by $\xi$, if any, followed by the remaining children of $\mu$; see \cref{fig:P-RRR-C1}.  
In {\bf ({Case}~2)} we order the children of $\mu$ so that from left to right we have node $\xi$ followed by the remaining children of $\mu$; see \cref{fig:P-RRR-C2}. 
\end{proof}

\begin{lemma}[Type \protect\NBR]\label{le:p-NBR}
Let $\mu$ be a P-node. Type \NBR is admitted by $\mu$ in the variable embedding setting if and only if
at least one of these two cases occurs:
\begin{inparaenum}[\bf ({Case}~1)]
\item All children admit either Type \LBR or Type \RBR, and there exists at least a Type-\LBR child and a Type-\RBR child.
\item There exists a Type-\NBR child and all other children admit either Type \LBR or Type \RBR. 
\end{inparaenum}
\end{lemma}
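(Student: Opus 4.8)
The plan is to prove both directions in the style of the preceding P-node lemmas (\cref{le:p-LBR,le:p-RBR,le:p-RRR,le:p-NRR}): necessity by restricting a Type-\NBR embedding of $\pert{\mu}$ to each child, and sufficiency by exhibiting an explicit left-to-right arrangement of the children together with their \UBEs and then verifying the three coordinates of the resulting type. The single structural fact that drives everything is that every child of a P-node shares the poles $s_\mu$ and $t_\mu$ with $\mu$, so the drawing of each child spans the entire portion of the spine between $s_\mu$ and $t_\mu$; consequently a child whose own embedding blocks one page along its whole extent blocks that page of $\pert{\mu}$ in its entirety.

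For necessity I would start from an embedding $\mathcal{E}$ of $\pert{\mu}$ of Type \NBR and let $\langle a_i,b_i,c_i\rangle$ be the type of the restriction of $\mathcal{E}$ to each child $\mu_i$. First, since $\pert{\mu}$ is right-visible but not left-visible right up to $t_\mu$, no child may block the right side near $t_\mu$ (its sink edge must go to $t_\mu$ on the left page, and nothing may cover its sink on the right); this forces the sink coordinate $c_i=R$ for every child. By the validity constraints of \cref{lem:constant-types}, once $c_i=R$ the spine coordinate satisfies $b_i\in\{R,B\}$, and the structural fact above rules out $b_i=R$: such a child would block the whole left page of $\pert{\mu}$, contradicting that the spine coordinate of $\mu$ is $B$. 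Hence $b_i=B$, so every child is of Type \LBR, \RBR, or \NBR. Finally I would translate the requirement that the source coordinate of $\mu$ is $N$ (spine blocked on both sides next to $s_\mu$) into the counting condition: blocking the left at $s_\mu$ needs a source-left-blocked child, i.e.\ a \RBR or a \NBR child, while blocking the right at $s_\mu$ needs a source-right-blocked child, i.e.\ a \LBR or a \NBR child. Thus either some child is of Type \NBR (Case 2), or no child is, in which case one needs at least one \RBR child and at least one \LBR child (Case 1).

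For sufficiency I would give the arrangement explicitly. In Case 1, place the \RBR children in the lower part of the spine (just above $s_\mu$) and the \LBR children in the upper part (just below $t_\mu$): the source-left edge of a \RBR child together with the spanning source-right edge of a \LBR child block both sides at $s_\mu$, yielding source coordinate $N$; the left-visible gaps of the \RBR children and the right-visible gaps of the \LBR children survive because the spanning source and sink edges lie near the poles, yielding spine coordinate $B$; and all children being sink-right-visible yields sink coordinate $R$. In Case 2 the \NBR child alone blocks both sides at $s_\mu$, and the remaining \LBR and \RBR children are arranged as above to supply the two visible gaps. As in \cref{le:p-NRR}, the split-and-fold technique realizes these arrangements uniformly and reduces the existence test to a feasible-flow computation on a capacitated network linear in the number of children, feeding \cref{lem:p-decision-variable}.

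The hard part will be the bookkeeping in the sufficiency direction: with many children each contributes a source edge and a sink edge spanning a long portion of the spine, and I must certify that these spanning edges nest consistently (so the embedding stays planar) and, crucially, that none of them covers a left-visible gap of a \RBR child or a right-visible gap of a \LBR child, since losing either gap would collapse the spine coordinate from $B$ to $R$ or $L$. The cleanest route is to formalize the whole-page-blocking observation first and then argue that, in the chosen nesting, the only edges reaching the immediate neighborhoods of $s_\mu$ and $t_\mu$ are exactly the ones performing the required source-blocking and sink-blocking, leaving every internal visible gap exposed.
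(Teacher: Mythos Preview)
Your approach matches the paper's: necessity by restricting a Type-\NBR embedding to each child, sufficiency by an explicit left-to-right arrangement (the paper puts the \RBR children to the left of the \LBR children, with the optional \NBR child in between). Your necessity argument is in fact more detailed than the paper's terse ``easy to see''.

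There is, however, a genuine gap in your sufficiency construction for Case~1. Simply stacking the \RBR children in the lower part of the spine and the \LBR children above them does \emph{not} yield a Type-\NBR embedding when there are two or more \RBR children. Your justification that ``the spanning source and sink edges lie near the poles'' is the point that fails: an \RBR child has both its source edges and its sink edges on the \emph{left} page, and these edges do not stay near the poles---the source edge runs from $s_\mu$ all the way up to that child's first internal vertex, and the sink edge runs from its last internal vertex all the way up to $t_\mu$. With two stacked \RBR children $r_1$ below $r_2$, the left source edge of $r_2$ covers the entire block of $r_1$ on the left, and the left sink edge of $r_1$ covers the entire block of $r_2$ on the left; together they kill every left-visible gap, so the resulting embedding has $spine\_{vis}=R$, i.e., it is of Type~\NRR rather than \NBR. (The asymmetry with \LBR children, which \emph{do} stack correctly, is that an \LBR child has its source edges on the right page and its sink edges on the left page, so the two spanning families sit on opposite pages and cannot mutually cover gaps on a single page.)

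The fix is precisely the split-and-fold you mention only in passing: nest the \RBR children at their left-visible gaps (as in Case~2 of \cref{le:p-NRR}), so that the innermost \RBR child's left gap is exposed, and then place the \LBR children above this nested block. This supplies one surviving left gap and the topmost \LBR child's surviving right gap, hence $spine\_{vis}=B$. With this correction your plan goes through and is equivalent to the paper's construction; the flow-network reduction for testing the conditions is unaffected.
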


\begin{proof}
It is easy to see that these conditions are necessary. In fact, for any child $\mu_i$ of $\mu$, the restriction of a Type-\NBR embedding of $\pert{\mu}$ to $\pert{\mu_i}$ is clearly of Type $\langle X, Y, R\rangle$, with $X \in \{R,L,N\}$ and $Y = B$ since an embedding of $\mu$ with $pocket_{side}=B$ implies an embedding of all its children with $pocket_{side}=B$. 

For the sufficiency, in both cases we order the children from left to right in such a way that Type-\RBR children precede all the Type-\LBR children; see \cref{fig:P-NBR-C1,fig:P-NBR-C2}. In {\bf ({Case}~2)} the \NBR child is placed between the last Type-\RBR child e the first Type-\LBR child; see \cref{fig:P-NBR-C2}.  
\begin{figure}[h]
\centering
\begin{subfigure}[b]{.24\textwidth}
  \centering
  \includegraphics[page=5]{P-node}
  \subcaption{\NBR; Case~1}
  \label{fig:P-NBR-C1}
\end{subfigure}
\begin{subfigure}[b]{.24\textwidth}
  \centering
  \includegraphics[page=6]{P-node}
  \subcaption{\NBR; Case~2}
  \label{fig:P-NBR-C2}
\end{subfigure}
\caption{Constructions for P-nodes of Type \protect\NBR.}
\end{figure}
\end{proof}

\begin{lemma}[Type \protect\RBN]\label{le:p-RBN}
Let $\mu$ be a P-node. Type \RBN is admitted by $\mu$ in the variable embedding setting if and only if
at least one of two cases occurs, which can be obtained from {\bf ({Case}~1)} and {\bf ({Case}~2)} discussed for Type \NBR (\cref{le:p-NBR}) by considering the types of its children after a vertical flip, i.e., we replace each Type-$\langle X,Y,Z \rangle$ child with a Type-$\langle Z,Y,X \rangle$ child.
\end{lemma}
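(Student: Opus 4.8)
The plan is to prove this lemma by a symmetry argument that reduces it directly to \cref{le:p-NBR} via the vertical-flip operation implicitly used throughout the treatment of P-nodes. Concretely, I would introduce the operation $\phi$ that reflects a canonical drawing across a horizontal line and simultaneously reverses every edge orientation. Since $\pert{\mu}$ is an $s_\mu t_\mu$-graph, applying $\phi$ yields a drawing in which the former sink $t_\mu$ becomes the source and $s_\mu$ becomes the sink, and in which every edge is again oriented upward. Because reversing orientations does not alter the underlying undirected graph, the SPQR-tree structure and the node types are preserved; hence $\phi(\pert{\mu})$ is again the pertinent graph of a P-node, whose children are exactly the graphs $\phi(\pert{\mu_i})$.

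The first key step is to verify that $\phi$ maps \UBEs to \UBEs. A reflection across a horizontal line preserves the non-crossing property of the arcs on each page and sends the left page to the left page and the right page to the right page; reversing the edge orientations then turns the bottom-to-top spine order into a valid topological order of the reversed graph. Thus $\phi$ is an involution establishing a bijection between the \UBEs of $\pert{\mu}$ and those of $\phi(\pert{\mu})$. The second key step is to track how $\phi$ acts on embedding types: since the reflection swaps the roles of source and sink while leaving left and right unchanged, a \UBE of Type $\langle X, Y, Z \rangle$ is sent to a \UBE of Type $\langle Z, Y, X \rangle$. In particular, $\phi$ exchanges Type \RBN with Type \NBR, exchanges Type \RBL with Type \LBR, and sends Type \RBR to itself. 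Applied childwise, the same correspondence shows that a child $\mu_i$ admits Type $\langle X, Y, Z \rangle$ if and only if $\phi(\pert{\mu_i})$ admits Type $\langle Z, Y, X \rangle$.

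With these facts in hand, the conclusion follows by composing bijections: $\pert{\mu}$ admits Type \RBN if and only if $\phi(\pert{\mu})$ admits Type \NBR, which by \cref{le:p-NBR} holds if and only if one of its two cases is satisfied by the children $\phi(\pert{\mu_i})$ with the type labels \LBR, \RBR, and \NBR. Translating each of these labels back through $\phi$ replaces \LBR by \RBL, keeps \RBR, and replaces \NBR by \RBN, which yields exactly the two cases asserted in the statement. For the sufficiency direction I would also note that an explicit left-to-right arrangement of the children can be obtained by reflecting the arrangement constructed in the proof of \cref{le:p-NBR}, i.e., by reversing the left-to-right order of the children and replacing each child's chosen type by its vertical flip.

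The step I expect to require the most care is the verification that $\phi$ genuinely preserves the 2UBE structure together with the P-node decomposition. In particular, one must confirm that the page assignment is literally preserved (no $L \leftrightarrow R$ swap occurs under a horizontal-axis reflection, which is precisely why only the first and third type-coordinates are exchanged and not the second) and that reversing edge orientations leaves invariant which triconnected component is a P-node, so that \cref{le:p-NBR} is indeed applicable to $\phi(\pert{\mu})$. Once this is established, the remainder of the argument is a purely formal translation of the type labels.
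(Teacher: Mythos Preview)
Your proposal is correct and follows essentially the same approach as the paper, which simply states that the necessity and sufficiency proofs are symmetric to those of \cref{le:p-NBR}. You have made the vertical-flip symmetry explicit and carefully tracked its effect on embedding types and on the SPQR-tree structure, which is exactly the content behind the paper's one-line appeal to symmetry.
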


\begin{proof}
The necessity and sufficiency proofs are symmetric to those presented in \cref{le:p-NBR}.
\end{proof}

\begin{lemma}[Type \protect\NBN]\label{le:p-NBN}
Let $\mu$ be a P-node. Type \NBN is admitted by $\mu$ in the variable embedding setting if and only if
at least one of six cases occur. Since Type \NBN is self-symmetric, we have that, for each case, we can obtain a symmetric one by reversing the left-to-right sequence of the children in the construction and by taking, for each child, the horizontally-mirrored embedding Type. Thus, for each pair of symmetric cases, we only describe one.
\begin{inparaenum}[\bf ({Case}~1)]
\item There exists a child that admits Type \NBN and the other children admit either Type \RBR or Type \LBL.
\item It is possible to partition the children of $\mu$ into three parts as follows. The first part consists of any positive number of Type-\RBR or Type-\NBR children, with at most one Type-\NBR child. 
The second part consists of any number, even zero, of Type-\LBR children. The third part consists of any positive number of Type-\LBN or Type-\LBL children, with at most one Type-\LBN child.
\item This case is obtained from {\bf Case~2} by considering the types of its children after a vertical flip.
\end{inparaenum}
\end{lemma}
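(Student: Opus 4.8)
The plan is to prove the characterization as an ``if and only if'', following the template of \cref{le:p-NRR,le:p-NBR,le:p-RBN}: I would obtain the necessity by restricting a Type-\NBN embedding of $\pert{\mu}$ to the children and reading off the constraints imposed at the two poles, and I would obtain the sufficiency by exhibiting, for each listed case, an explicit left-to-right order together with an interleaving of the children's spine sequences, invoking \cref{le:replacement} to realize each child with its required type. Since Type \NBN is (horizontally) self-symmetric, I would argue only three of the six cases in detail, the remaining three being their mirror images (reverse the order of the children and replace each Type $\langle X,Y,Z\rangle$ by its horizontal mirror), exactly as the statement anticipates.

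For the necessity, fix a \UBE of $\pert{\mu}$ of Type \NBN and let $\mu_1,\dots,\mu_k$ be the children in the induced left-to-right order, with restricted types $\langle x_i,y_i,z_i\rangle$. I would first show that $y_i=B$ for every $i$: an internal child must expose the spine on both sides in order to form a pocket with each of its two neighbours, and the leftmost (rightmost) child must simultaneously expose a point to the outer left (right) page---needed because $spine\_{vis}=B$ can only be witnessed by $\mu_1$ on the left and by $\mu_k$ on the right---and also expose a point toward its unique neighbour; together with the forbidden combinations of \cref{lem:constant-types} (a middle value $L$ excludes $s\_{vis}=R$ and $t\_{vis}=R$, and symmetrically for $R$) this rules out every middle value except $B$. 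Next I would analyse the rotation at the poles: at $s_\mu$ the page carrying the lowest edge of a child is determined by its first coordinate ($x_i=R$ forces the left page, $x_i=L$ the right page, $x_i=N$ both pages), and planarity of the rotation at $s_\mu$ forces all left-page children to precede all right-page children, with an $x_i=N$ child at the unique boundary; hence $x_1,\dots,x_k$ has the form $R^*N^?L^*$ with at most one $N$, and the identical argument at $t_\mu$ gives $z_1,\dots,z_k=R^*N^?L^*$. Finally $s\_{vis}=N$ forces $x_1\neq L$ and $x_k\neq R$, while $t\_{vis}=N$ forces $z_1\neq L$ and $z_k\neq R$. It then remains a finite combinatorial check that an ordering of children with $y_i=B$ whose $x$- and $z$-sequences are both of the form $R^*N^?L^*$ (with these endpoint conditions) exists precisely when the children fall into one of the six listed cases.

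For the sufficiency, I would take the order prescribed in each case and build the spine order by the same nesting used in \cref{le:p-NRR} (Case~2): process the children from the outside inward, splitting each $\langle\cdot,B,\cdot\rangle$ child at a point where the spine is visible from the side facing its inner neighbour and inserting that neighbour's sequence there, so that the $x$-pattern $R^*N^?L^*$ closes the spine near $s$ on both sides and the $z$-pattern closes it near $t$, while the left gap of $\mu_1$ and the right gap of $\mu_k$ remain exposed to yield $spine\_{vis}=B$. \cref{le:replacement} guarantees that each child may be drawn with its required realizable type independently of the others, so the construction produces a genuine \UBE of $\pert{\mu}$, which by inspection is of Type \NBN.

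The step I expect to be the main obstacle is the necessity argument that all middle values equal $B$ and that the two pole-rotations force the simultaneous $R^*N^?L^*$ structure of the $x$- and $z$-sequences: this is where the reasoning of \cref{le:p-NBR} (closure at $s$) and of \cref{le:p-RBN} (closure at $t$) must be run together, and it is precisely the requirement that both closures be realized by a single ordering that produces the richer case distinction (and, after accounting for self-symmetry, the six cases). By contrast, the sufficiency is routine once \cref{le:replacement} and the nesting construction of \cref{le:p-NRR} are in hand.
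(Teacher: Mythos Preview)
Your proposal is correct and follows essentially the same route as the paper. For sufficiency, the paper simply lists, for each of the three representative cases, the explicit left-to-right order of the children (e.g., for Case~2: the Type-\RBR children, then the optional Type-\NBR child, then the Type-\LBR children, then the optional Type-\LBN child, then the Type-\LBL children) and refers to accompanying figures; your nesting construction via splitting each $\langle\cdot,B,\cdot\rangle$ child at its visible gap is exactly how those pictures are realized. For necessity, the paper is extremely terse --- it only records the three constraints ``at most one child can be of Type~$\langle N,\cdot,\cdot\rangle$, at most one of Type~$\langle\cdot,\cdot,N\rangle$, and every child must be of Type~$\langle\cdot,B,\cdot\rangle$'' --- whereas you go further and extract the $R^*N^?L^*$ pattern for both the $x$- and $z$-sequences from the rotation at the two poles. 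That extra structure is genuinely needed (the three constraints alone do not exclude, e.g., a pair \{\RBL,\LBR\}, which fits none of the six cases), so your elaboration fills in what the paper leaves implicit. The remaining ``finite combinatorial check'' you defer is indeed finite over the nine $\langle\cdot,B,\cdot\rangle$ types and is straightforward once the two $R^*N^?L^*$ patterns and the endpoint conditions are in hand.
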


\begin{proof}
For the sufficiency, we consider the three cases separately. 
In {\bf ({Case}~1)} we order the children of $\mu$ so that from left to right we have the Type-\RBR children, if any, in any order, followed by the unique Type-\NBN child, followed by the Type-\LBL children, if any, in any order; see \cref{fig:P-NBN-C1}.
In {\bf ({Case}~2)} we order the children of $\mu$ so that from left to right we have the possible Type-\RBR children, if any, in any order, followed by the possible Type-\NBR child, if any, followed by the possible Type-\LBR, if any, in any order, followed by the possible Type-\LBN child, if any, followed by the possible Type-\LBL children, if any, in any order (see \cref{fig:P-NBN-C2}). 
In {\bf ({Case}~3)} we proceed as in {\bf ({Case}~2)} by considering the types after a vertical flip.
Clearly, these orderings allow for a \UBE of $\pert{\mu}$.

The necessity can be argued by considering that at most one child can be of Type $\langle N ,\cdot, \cdot \rangle$, at most one child can be of Type $\langle \cdot ,\cdot, N \rangle$, and any child must be of Type $\langle \cdot ,B, \cdot \rangle$.
\begin{figure}[t!]
\centering
\begin{subfigure}[b]{.24\textwidth}
  \centering
  \includegraphics[page=9]{P-node}
  \subcaption{\NBN; Case~1}
  \label{fig:P-NBN-C1}
\end{subfigure}\hfil
\begin{subfigure}[b]{.24\textwidth}
  \centering
  \includegraphics[page=10]{P-node}
  \subcaption{\NBN; Case~2}
  \label{fig:P-NBN-C2}
\end{subfigure}
\caption{Constructions for P-nodes of Type \protect\NBN.}
\end{figure}
\end{proof}

\begin{lemma}[Type \protect\RRN]\label{le:p-RRN}
Let $\mu$ be a P-node. Type \RRN is admitted by $\mu$ in the variable embedding setting if and only if
at least one of two cases occurs, which can be obtained from {\bf ({Case}~1)} and {\bf ({Case}~2)} discussed for Type \NRR (\cref{le:p-NRR}) by considering the types of its children after a vertical flip.
\end{lemma}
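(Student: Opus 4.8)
The plan is to derive this lemma from \cref{le:p-NRR} by a symmetry argument, exactly as Type \RBN was derived from Type \NBR in \cref{le:p-RBN}. The governing observation is that Type \RRN $=\langle R,R,N\rangle$ and Type \NRR $=\langle N,R,R\rangle$ are related by a vertical flip, i.e.\ a reflection of the canonical drawing across a horizontal axis (combined with reversing the orientation of every edge, so that the drawing is again upward). Such a flip exchanges the source $s_\mu$ with the sink $t_\mu$ while leaving the assignment of each edge to the left or the right page unchanged; hence it turns a \UBE of Type $\langle X,Y,Z\rangle$ into a \UBE of Type $\langle Z,Y,X\rangle$, and in particular it interchanges Types \NRR and \RRN.

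First I would make this flip precise as a bijection $\phi$ that sends any \UBE of $\pert{\mu}$ to a \UBE of the reverse digraph $\overleftarrow{\pert{\mu}}$, reversing the bottom-to-top vertex order and preserving the left/right page of every edge. Since $\mu$ is a P-node, $\overleftarrow{\pert{\mu}}$ is again the pertinent graph of a P-node (reversing a parallel composition yields a parallel composition with the poles swapped), and $\phi$ acts on it by flipping each child individually while keeping the left-to-right order of the children unchanged (it is the horizontal flip, not the vertical one, that would reverse this order). Thus a child $\mu_i$ admits Type $\langle X,Y,Z\rangle$ if and only if its flipped copy admits Type $\langle Z,Y,X\rangle$, and $\pert{\mu}$ admits Type \RRN if and only if $\overleftarrow{\pert{\mu}}$ admits Type \NRR.

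Next I would apply \cref{le:p-NRR} to $\overleftarrow{\pert{\mu}}$: it admits Type \NRR if and only if its children fall into the partition of {\bf ({Case}~1)} or of {\bf ({Case}~2)} of that lemma. Pulling these partitions back through $\phi$ replaces every child type $\langle X,Y,Z\rangle$ by $\langle Z,Y,X\rangle$, so that \QRRR stays \QRRR and \RRR stays \RRR (both being fixed points of the flip), \RBR stays \RBR, while \NRR becomes \RRN, \LBR becomes \RBL, and \NBR becomes \RBN. This yields exactly the two cases asserted for Type \RRN; for instance, the third part of {\bf ({Case}~2)} becomes any positive number of Type-\RBN or Type-\RBL children, with at most one Type-\RBN child. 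Both the necessity and the sufficiency transfer verbatim, since $\phi$ is a planarity- and page-preserving bijection on \UBEs that respects embedding types up to the componentwise flip.

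The only point demanding care is confirming that the relevant symmetry is the vertical flip---which swaps the first and third components of a type and preserves the left-to-right order of the P-node's children---rather than the horizontal flip, which swaps $L\leftrightarrow R$ and reverses that order. I would verify this directly from the geometry of the canonical drawing: reflecting across a horizontal axis keeps each child in its horizontal slot and merely turns its internal drawing upside down, so no reordering of the parts in the constructions of \cref{le:p-NRR} is needed beyond the componentwise flip of the types. This verification is routine, and the reduction to \cref{le:p-NRR} is the whole content of the proof.
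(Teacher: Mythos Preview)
Your proposal is correct and follows the same approach as the paper's own proof, which is simply the one-line remark that ``the necessity and sufficiency proofs are symmetric to those presented in \cref{le:p-NRR}.'' You have spelled out in detail the vertical-flip bijection that the paper leaves implicit (exactly as in the RBN/NBR reduction of \cref{le:p-RBN}), and your verification that this flip swaps the first and third components of each type while preserving the left-to-right order of the children is the right way to justify that the two cases of \cref{le:p-NRR} transfer verbatim.
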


\begin{proof}
The necessity and sufficiency proofs are symmetric to those presented in \cref{le:p-NRR}.
\end{proof}

\begin{lemma}[Type \protect\NRN]\label{le:p-NRN}
Let $\mu$ be a P-node. Type \NRN is admitted by $\mu$ in the variable embedding setting if and only if
at least one of these six cases occurs:
\begin{inparaenum}[\bf ({Case}~1)]
\item There exists a child that admits Type \NRN and the other non-Q-node children admit Type \LBL.
\item It is possible to partition the children of $\mu$ into three parts as follows. The first part consists of at least one child of Type \RRR of which at most one is not a Q-node.
The second part consists of any number, even zero, of Type-\RBR children. The third part consists of any positive number of Type-\NBN or Type-\LBL children, with at most one Type-\NBN child.
\item It is possible to partition the children of $\mu$ into three parts as follows. The first part consists of at least one and at most two children of distinct types, namely, a Type-\QRRR Q-node and a Type-\RRN child.
The second part consists of any number, even zero, of Type-\RBL children. The third part consists of any positive number of Type-\NBL or Type-\LBL children, with at most one Type-\NBL child.
\item This case is obtained from the previous one by considering the types of the children of $\mu$ after a vertical flip.
\item It is possible to partition the children of $\mu$ into three parts as follows. The first part consists of at least one child of Type \RRR of which at most one is not a Q-node.
The second part consists of any number, even zero, of children of Type \RBN or Type \RBL, with at most  one Type-\RBN child.
The third part consists of any positive number of Type-\NBL or Type-\LBL children, with at most one Type-\NBL child.
\item This case is obtained from the previous one by considering the types of the children of $\mu$ after a vertical flip.
\end{inparaenum}
\end{lemma}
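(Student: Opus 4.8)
The plan is to prove \cref{le:p-NRN} with the same two-part scheme used for the other relevant P-node types, in particular \cref{le:p-NRR} and \cref{le:p-NBN}. First I would establish \emph{necessity} by restricting a hypothetical Type-\NRN \UBE of $\pert{\mu}$ to each child and reading off the admissible child types; then I would establish \emph{sufficiency} by exhibiting, for each case, a left-to-right arrangement of the children (using the pocket-splitting technique of \cref{le:p-NRR}, {\bf ({Case}~2)}) whose composition is a Type-\NRN \UBE. Since Type \NRN is invariant under a vertical flip (swapping $s\_{vis}$ and $t\_{vis}$ maps $\langle N,R,N\rangle$ to itself), Case~4 and Case~6 need no separate treatment: they are the vertical-flip images of Case~3 and Case~5, obtained by replacing each Type-$\langle X,Y,Z\rangle$ child with a Type-$\langle Z,Y,X\rangle$ child and reversing the left-to-right order of the construction. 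Hence only Cases~1, 2, 3, and~5 require a genuine argument.

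\textbf{Necessity.} Let $\mathcal{E}$ be a Type-\NRN \UBE of $\pert{\mu}$ and let $\mathcal{E}_i$ be its restriction to $\pert{\mu_i}$. As in \cref{le:p-NBN}, the global requirement $s\_{vis}=N$ (resp.\ $t\_{vis}=N$) allows at most one child with $s\_{vis}=N$ (resp.\ $t\_{vis}=N$), while $spine\_{vis}=R$ forbids any uncovered left-visible portion of the spine, so every child must realize a type with $spine\_{vis}\in\{R,B\}$. It then remains to show that children admitting such types compose into a Type-\NRN embedding only if they can be partitioned as in one of the six listed cases; I would do this by tracking \emph{where} the unique right-visible central pocket is produced (by a single \NRN child, by a stack of \RRR and \RBR children, by a \QRRR/\RRN pair, or by an \RRR/\RBN combination) and \emph{how} the two regions of the spine incident to $s_\mu$ and $t_\mu$ are sealed on both pages by the left-blocking fillers \LBL, \NBL, \NBN, and \LBN. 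This enumeration yields exactly Cases~1--6.

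\textbf{Sufficiency.} For each case I would specify the left-to-right order of the parts and, whenever a child has a both-visible middle ($spine\_{vis}=B$), the point at which its spine block is split so that the subsequent parts are nested inside its right pocket, exactly as in \cref{le:p-NRR}. In {\bf ({Case}~1)} the \NRN child supplies the central right-visible pocket and the surrounding \LBL children block both end regions, giving Type \NRN. In {\bf ({Case}~2)} the \RRR prefix and the \RBR children build the right-visible pocket while the \NBN/\LBL suffix seals the top, the non-Q \RRR child and the \LBL fillers jointly blocking the bottom. {\bf ({Case}~3)} and {\bf ({Case}~5)} are analogous, using the \QRRR/\RRN (respectively \RRR/\RBN) children to seal right-visibility at one end and the \RBL/\NBL/\LBL (respectively \RBN/\NBL/\LBL) fillers at the other. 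In every case one checks directly that the composed drawing has no left-visible spine, a right-visible central pocket, and blocked spine at both $s_\mu$ and $t_\mu$, that is, Type \NRN.

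\textbf{Main obstacle.} The delicate point, and the reason a six-way split is needed here rather than the two cases of \cref{le:p-NRR}, is realizing $s\_{vis}=N$ and $t\_{vis}=N$ simultaneously while keeping the middle strictly right-visible (type $R$, not $B$): every left-visible middle pocket of a right-heavy child must be covered by a child lying further left, and the two end regions must be blocked on both pages without ever exposing a left-visible pocket. Cases~3 and~5, which mix a Q-node or an \RRN/\RBN source of right-visibility with \RBL/\NBL/\LBL fillers, are the most intricate, since there the right-visible pocket must be threaded between two asymmetrically capped ends; verifying that such an arrangement both exists and is forced is the crux of the proof.
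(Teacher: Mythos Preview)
Your proposal is correct and follows essentially the same approach as the paper's own proof: both argue sufficiency by exhibiting, for each case, an explicit left-to-right ordering of the children (with the nesting/splitting technique of \cref{le:p-NRR}, {\bf Case~2}, and with Cases~4 and~6 handled by the vertical-flip symmetry of the type $\langle N,R,N\rangle$), and both argue necessity via the constraints that at most one child is of Type $\langle N,\cdot,\cdot\rangle$, at most one of Type $\langle \cdot,\cdot,N\rangle$, every child is of Type $\langle \cdot,R,\cdot\rangle$ or $\langle \cdot,B,\cdot\rangle$, and not all children are of Type $\langle \cdot,B,\cdot\rangle$. The paper's proof is in fact terser than your plan (it offloads the sufficiency verification to figures and gives only the one-line constraint list for necessity), so your discussion of where the right-visible pocket originates and how the end regions are sealed is a useful elaboration rather than a deviation.
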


\begin{proof}
For the sufficiency, we consider the six cases separately. 
In {\bf ({Case}~1)} we order the children of $\mu$ so that from left to right we have the Type-\RRR  Q-node child, if any, followed by the unique Type-\NRN child, followed by the Type-\LBL children, if any, in any order; see \cref{fig:P-NRN-C1}.
In {\bf ({Case}~2)} we order the children of $\mu$ so that from left to right we have the Type-\RRR  Q-node child, if any, followed by the Type-\RRR non-Q-node child, if any, followed by the Type \RBR children, if any, in any order, followed by the unique Type-\NBN child, if any, followed by the Type-\LBL children, if any, in any order; see \cref{fig:P-NRN-C2}.
In {\bf ({Case}~3)} we order the children of $\mu$ so that from left to right we have the Type-\RRR  Q-node child, if any, followed by the Type-\RRN child, if any, followed by the Type \RBL children, if any, in any order, followed by the unique Type-\NBL child, if any, followed by the Type-\LBL children, if any, in any order; see \cref{fig:P-NRN-C3}.
In {\bf ({Case}~4)} we proceed as in {\bf ({Case}~3)} by considering the types after a vertical flip.
In {\bf ({Case}~5)} we order the children of $\mu$ so that from left to right we have the Type-\QRRR Q-node child, if any, followed by the Type-\RRR non-Q-node child, if any, followed by the Type-\RBN child, if any, followed by the Type \RBL children, if any, in any order, followed by the unique Type-\NBL child, if any, followed by the Type-\LBL children, if any, in any order; see \cref{fig:P-NRN-C5}.
In {\bf ({Case}~6)} we proceed as in {\bf ({Case}~5)} by considering the types after a vertical flip.

The necessity can be argued by considering that at most one child can be of Type $\langle N ,\cdot, \cdot \rangle$, at most one child can be of Type $\langle \cdot ,\cdot, N \rangle$, and that any child must be of Type $\langle \cdot ,R, \cdot \rangle$ or of Type $\langle \cdot ,B, \cdot \rangle$. However, not all the children of $\mu$ may be of Type $\langle \cdot ,B, \cdot \rangle$ as otherwise the resulting embeddings would be of Type $\langle \cdot ,B, \cdot \rangle$.
\begin{figure}[h]
\centering
\begin{subfigure}[b]{.24\textwidth}
  \centering
  \includegraphics[page=11]{P-node}
  \subcaption{\NRN; Case~1}
  \label{fig:P-NRN-C1}
\end{subfigure}
\begin{subfigure}[b]{.24\textwidth}
  \centering
  \includegraphics[page=12]{P-node}
  \subcaption{\NRN; Case~2}
  \label{fig:P-NRN-C2}
\end{subfigure}
\begin{subfigure}[b]{.24\textwidth}
  \centering
  \includegraphics[page=13]{P-node}
  \subcaption{\NRN; Case~3}
  \label{fig:P-NRN-C3}
\end{subfigure}
\begin{subfigure}[b]{.24\textwidth}
  \centering
  \includegraphics[page=14]{P-node}
  \subcaption{\NRN; Case~5}
  \label{fig:P-NRN-C5}
\end{subfigure}
\caption{Constructions for P-nodes of Type \protect\NRN.}
\end{figure}
\end{proof}

\begin{lemma}[Type \protect\NNN]\label{le:p-NNN}
Let $\mu$ be a P-node. Type \NNN is admitted by $\mu$ in the variable embedding setting if and only if
at least one of $26$ cases occurs. Since Type \NNN is self-symmetric, for each pair of symmetric cases, we only describe one. Further, we assume that Case~$i$ applies only if Case~$1$, Case~$2$, \dots, Case~$i-1$ do not apply.
\begin{inparaenum}[\bf ({Case}~1)]
\item There exist exactly two children, one of which is a Type-\QRRR Q-node and the other is a Type-\NNN node.
\item There exists exactly one Type-\NRN child, at least one and at most two Type-\RRR children of which at most one is a non-Q-node child, and all the remaining children are of Type \LBL.
\item It is possible to partition the children of $\mu$ into five parts as follows. The first part consists of at least one and at most two Type-\RRR children of which at most one is a non-Q-node child. The second part consists of any number of Type-\RBR children. The third part consists of exactly one Type-\NBN child. The fourth part consists of any number of Type-\LBL children. The fifth part consists of at least one and at most two Type-\LLL children of which at most one is a non-Q-node child.
\item There exists exactly one Type-\RRN child, any number of Type-\RBL and Type-\LBL children, at most one Type-\NBL child, and at least one and at most two Type-\LLL children of which at most one is a non-Q-node child. 
\item This case is obtained from the previous one by considering the types of the children of $\mu$ after a vertical flip.
\item There exists exactly one Type-\RRN child, any number of Type-\RBL children, at most one Type-\NLL child, and at least one child that is either a Type-\NLL child or it is a Q-node Type-\LLL child.
\item This case is obtained from the previous one by considering the types of the children of $\mu$ after a vertical flip.
\item There exists exactly one Type-\RBN child, at most one Type-\NBL child, any number of Type-\RBR, Type-\RBL, and Type-\LBL children. Further, there exist at least one and at most two non-Q-node children of distinct types between Type \RRR and Type \LLL, where if only one of such children exists then $\mu$ has also a Q-node child.
\item This case is obtained from the previous one by considering the types of the children of $\mu$ after a vertical flip.
\item There exists exactly one Type-\RBN child, any number of Type-\RBR and Type-\RBL children. Further, there exist at least one and at most two non-Q-node children of distinct types between Type \RRR and Type \NLL, where if only one of such children exists then $\mu$ has also a Q-node child.
\item This case is obtained from the previous one by considering the types of the children of $\mu$ after a vertical flip.
\item There exists exactly one non-Q-node Type-\RRR child, any number of Type-\RBR and Type-\LBR children, and at least one and at most two Type-\LLL children of which at most one is a non-Q-node child.
\item This case is obtained from the previous one by considering the types of the children of $\mu$ after a vertical flip.
\end{inparaenum}
\end{lemma}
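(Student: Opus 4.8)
The plan is to follow the necessity/sufficiency template used for the other relevant P-node types (\cref{le:p-NRR,le:p-NBN,le:p-NRN}), exploiting that Type \NNN is self-symmetric to treat one representative of each mirror pair and the stated priority ``Case~$i$ applies only if Cases~$1,\dots,i-1$ do not'' to keep the cases disjoint.

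The guiding observation is that a \UBE of $\pert{\mu}$ has $spine\_{vis}=N$ exactly when, on \emph{each} of the two pages, the arcs inherited from the children cover the whole spine strictly between $s_\mu$ and $t_\mu$ with no gap -- a gap on the left (right) page at some interior height is precisely a spine portion visible from the left (right). Hence the page assignment $\sigma$ must leave no interior height uncovered on either page, while the chosen left-to-right order must keep the arcs non-crossing on each page. This two-page covering requirement drives the necessity: the child facing the outer-left must be left-blocked throughout, i.e.\ must carry no $L$ in any slot (Type \RRR, \NRR, \RRN, \NRN, or \NNN), and symmetrically the child facing the outer-right carries no $R$; these two outer children seal the two exposed sides. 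Every child of Type $\langle\cdot,B,\cdot\rangle$ leaves an interior gap on both pages and must therefore be nested strictly inside children that cover that gap, and the children with an $N$ in the $s\_{vis}$ or $t\_{vis}$ slot (Types \NBN, \NBL, \NLL, \RRN, \NRN, and their mirrors) are exactly those able to patch the residual near-$s$ or near-$t$ gaps.

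For the sufficiency I would process the described cases one at a time and take the left-to-right order dictated by the partition into parts -- the Type-$\langle R,\cdot,\cdot\rangle$ children first, the Type-$\langle L,\cdot,\cdot\rangle$ children last, and any $N$-ended pivot inserted at the interface between the right-facing and left-facing children. Inheriting each child's own page assignment, one then checks that the two outer children cover their respective pages while every interior arc nests without crossing, so that both pages are gap-free and the drawing is planar; the symmetric cases follow by a horizontal flip. Granting the covering characterization above, each such check is mechanical, so the sufficiency direction is long but routine.

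The main obstacle is proving that the $26$ cases are \emph{exhaustive}. The difficulty is not the covering condition itself -- easily met by the two outer children -- but its interaction with planarity: once the outer children are fixed, all remaining children must nest inside them on both pages \emph{simultaneously}, and this is feasible only for particular multisets and orders of interior types. I would classify an arbitrary valid \UBE by its two outer children, by whether the edge $(s_\mu,t_\mu)$ is present (as a \QRRR or \QLLL Q-node, which permits one extra non-Q-node Type-\RRR or Type-\LLL child), and by its at most one $s\_{vis}=N$ pivot and at most one $t\_{vis}=N$ pivot; I would then show that the non-crossing constraint forces the interior children into the grouped order of one of the cases. The delicate part is the bookkeeping that makes this classification simultaneously complete and, under the stated priority ordering, mutually exclusive across all $18$ interior types; in particular the counting bounds such as ``at most one non-Q-node Type-\RRR child'' are exactly the limits that single-page non-crossing nesting imposes. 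This disjoint-exhaustive split is what makes \NNN the most involved P-node characterization; once it is settled, the realizable types of $\mu$ are tested by the same flow-network reduction as in \cref{le:p-NRR}, supplying the final piece of \cref{lem:p-decision-variable}.
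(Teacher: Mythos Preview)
Your sufficiency strategy---process the cases one by one, place the Type-$\langle R,\cdot,\cdot\rangle$ children leftmost, the Type-$\langle L,\cdot,\cdot\rangle$ children rightmost, and any $N$-ended pivot at the interface---is exactly what the paper does. For each of the thirteen described cases (and their mirrors) the paper simply spells out the left-to-right order of children and points to a figure; it does not verify planarity or the absence of spine visibility in text, relying on the pictures instead. So on sufficiency you are aligned with the paper, and arguably more explicit about \emph{why} the orderings work (your two-page covering viewpoint).

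Where you diverge is on necessity. You correctly flag exhaustiveness of the $26$ cases as the main obstacle and sketch a classification by the two outer children, the possible Q-node edge $(s_\mu,t_\mu)$, and the at-most-one $s\_{vis}=N$ and $t\_{vis}=N$ pivots. The paper, however, does \emph{not} carry out this argument: its proof of \cref{le:p-NNN} contains only the sufficiency constructions and no necessity discussion at all (contrast with \cref{le:p-NRR,le:p-NBN,le:p-NRN}, where at least a short necessity paragraph appears). So your proposal is more ambitious than the paper's own proof on this point. Your covering characterization of $spine\_{vis}=N$ and the ensuing classification are plausible and would, if completed, genuinely strengthen the argument; but be aware that the paper itself leaves that direction unwritten, so you are not missing a trick---you are supplying one the authors omitted.
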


\begin{proof}
\begin{figure}[t]
\centering
\begin{subfigure}[b]{.24\textwidth}
  \centering
  \includegraphics[page=15]{P-node}
  \subcaption{\NNN; Case~1}
  \label{fig:P-NNN-C1}
\end{subfigure}
\begin{subfigure}[b]{.24\textwidth}
  \centering
  \includegraphics[page=16]{P-node}
  \subcaption{\NNN; Case~2}
  \label{fig:P-NNN-C2}
\end{subfigure}
\begin{subfigure}[b]{.24\textwidth}
  \centering
  \includegraphics[page=17]{P-node}
  \subcaption{\NNN; Case~3}
  \label{fig:P-NNN-C3-a}
\end{subfigure}
\begin{subfigure}[b]{.24\textwidth}
  \centering
  \includegraphics[page=18]{P-node}
  \subcaption{\NNN; Case~3}
  \label{fig:P-NNN-C3-b}
\end{subfigure}
\caption{Constructions for P-nodes of Type \protect\NNN. Subfigures (c) and (d) show the constructions for Case~3 when there exists no Type-\protect\RRR non-Q-node child and when there exists such a child, respectively.}
\end{figure}
We consider the three cases separately. 
In {\bf ({Case}~1)} we order the children of $\mu$ so that from left to right we have the Type-\QRRR Q-node child followed by the Type-\NNN child; see \cref{fig:P-NNN-C1}.
In {\bf ({Case}~2)} we order the children of $\mu$ so that from left to right we have the Type-\NRN child, followed by the Type-\LBL children, if any, in any order, followed by the Type-\LLL non-Q-node child, if any, followed by the Type-\QLLL Q-node child, if any; see \cref{fig:P-NNN-C2}.
In {\bf ({Case}~3)} we proceed as follows. If there exists no Type-\RRR non-Q-node child, we
order the children of $\mu$ so that from left to right we have the Type-\QRRR Q-node child (which exists by the conditions of the case), followed by the Type-\RBR children, if any, in any order, 
followed by the Type-\NBN child, followed by the Type-\LBL children, if any, in any order, followed by the Type-\LLL non-Q-node child (which exists by the conditions of the case); see \cref{fig:P-NNN-C3-a}. Otherwise, we
order the children of $\mu$ so that from left to right we have the Type-\RRR non-Q-node child, followed by the Type-\RBR children, if any, in any order,  followed by the Type-\NBN child, followed by the Type-\LBL children, if any, in any order, followed by the Type-\LLL non-Q-node child, if any, followed by the Type-\QLLL Q-node child, if any; see \cref{fig:P-NNN-C3-b}.
In {\bf ({Case}~4)} we order the children of $\mu$ so that from left to right we have the Type-\RRN child, followed by the Type-\RBL children, if any, in any order, followed by the Type-\NBL child, if any, followed by the Type-\LBL children, if any, in any order, followed by the Type-\LLL non-Q-node child, if any, followed by the Type-\QLLL Q-node child, if any; see \cref{fig:P-NNN-C4}.
\begin{figure}[t]
\centering
\begin{subfigure}[b]{.24\textwidth}
  \centering
  \includegraphics[page=19]{P-node}
  \subcaption{\NNN; Case~4}
  \label{fig:P-NNN-C4}
\end{subfigure}
\begin{subfigure}[b]{.24\textwidth}
  \centering
  \includegraphics[page=20]{P-node}
  \subcaption{\NNN; Case~6}
  \label{fig:P-NNN-C6}
\end{subfigure}
\begin{subfigure}[b]{.24\textwidth}
  \centering
  \includegraphics[page=21]{P-node}
  \subcaption{\NNN; Case~8}
  \label{fig:P-NNN-C8-a}
\end{subfigure}
\begin{subfigure}[b]{.24\textwidth}
  \centering
  \includegraphics[page=22]{P-node}
  \subcaption{\NNN; Case~8}
  \label{fig:P-NNN-C8-b}
\end{subfigure}
\caption{Constructions for P-nodes of Type \protect\NNN. Subfigures (c) and (d) show the constructions for Case~8 when there exists no Type-\protect\RRR non-Q-node child and when there exists such a child, respectively.}
\end{figure}
In {\bf ({Case}~5)} we proceed as in {\bf ({Case}~4)} by considering the types after a vertical flip.
In {\bf ({Case}~6)} we order the children of $\mu$ so that from left to right we have the Type-\RRN child, followed by any number of Type-\RBL children, followed by the Type-\NLL child, if any, followed by the Type-\QLLL Q-node child, if any; see \cref{fig:P-NNN-C6}.
In {\bf ({Case}~7)} we proceed as in {\bf ({Case}~6)} by considering the types after a vertical flip.
In {\bf ({Case}~8)} we proceed as follows. If there exists no Type-\RRR non-Q-node child, we order the children of $\mu$ so that from left to right we have the Type-\QRRR Q-node child (which exists by the conditions of the case), followed by the Type-\RBR children, if any, in any order, followed by the Type-\RBN child, followed by the Type-\RBL children, if any, in any order, followed by the Type-\NBL child, if any, followed by the Type-\LBL children, if any, in any order, followed by the Type-\LLL non-Q-node child (which exists by the conditions of the case); see \cref{fig:P-NNN-C8-a}. Otherwise, we order the children of $\mu$ so that from left to right we have the Type-\RRR non-Q-node child, followed by the Type-\RBR children, if any, in any order, followed by the Type-\RBN child, followed by the Type-\RBL children, if any, in any order, followed by the Type-\NBL child, if any, followed by the Type-\LBL children, if any, in any order, followed by the Type-\LLL non-Q-node child, if any, followed by the Type-\QLLL Q-node child, if any; see \cref{fig:P-NNN-C8-b}.
In {\bf ({Case}~9)} we proceed as in {\bf ({Case}~8)} by considering the types after a vertical flip.
\begin{figure}[b]
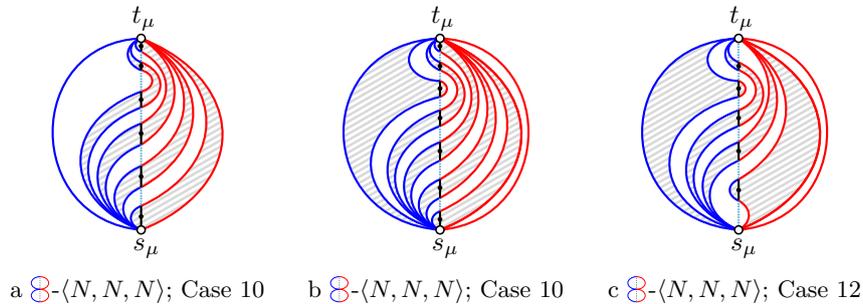

\centering
\begin{subfigure}[b]{.24\textwidth}
  \centering
  \includegraphics[page=23]{P-node}
  \subcaption{\NNN; Case~10}
  \label{fig:P-NNN-C10-a}
\end{subfigure}
\begin{subfigure}[b]{.24\textwidth}
  \centering
  \includegraphics[page=24]{P-node}
  \subcaption{\NNN; Case~10}
  \label{fig:P-NNN-C10-b}
\end{subfigure}
\begin{subfigure}[b]{.24\textwidth}
  \centering
  \includegraphics[page=25]{P-node}
  \subcaption{\NNN; Case~12}
  \label{fig:P-NNN-C12}
\end{subfigure}
\caption{Constructions for P-nodes of Type \protect\NNN. Subfigures (a) and (b) show the constructions for Case~10 when there exists no Type-\protect\RRR non-Q-node child and when there exists such a child, respectively.}
\end{figure}
In {\bf ({Case}~10)} we proceed as follows. If there exists no Type-\RRR non-Q-node child, we order the children of $\mu$ so that from left to right we have the Type-\QRRR Q-node child (which exists by the conditions of the case), followed by the Type-\RBR children, if any, in any order, followed by the Type-\RBN child, followed by the Type-\RBL children, if any, in any order, followed by the Type-\NLL child (which exists by the conditions of the case); see \cref{fig:P-NNN-C10-a}. Otherwise, we order the children of $\mu$ so that from left to right we have the Type-\RRR non-Q-node child, followed by the Type-\RBR children, if any, in any order, followed by the Type-\RBN child, followed by the Type-\RBL children, if any, in any order, followed by the Type-\NLL child, if any, followed by the Type-\QLLL Q-node child, if any; see \cref{fig:P-NNN-C10-b}.
In {\bf ({Case}~11)} we proceed as in {\bf ({Case}~10)} by considering the types after a vertical flip.
In {\bf ({Case}~12)} we order the children of $\mu$ so that from left to right we have the non-Q-node Type-\RRR child, followed by the Type-\RBR children, if any, in any order, followed by the Type-\LBR children, if any, in any order, followed by the Type-\LLL non-Q-node child, if any, followed by the Type-\QLLL Q-node child, if any; see \cref{fig:P-NNN-C12}.
In {\bf ({Case}~13)} we proceed as in {\bf ({Case}~12)} by considering the types after a vertical flip.
\end{proof}

\subsection{R-nodes}\label{apx:r-node}

Let $\mu$ be an R-node with poles $s_{\mu}$ and $t_{\mu}$.
We consider the graph $\skel^+(\mu) = \skel(\mu) \cup (s_\mu,t_\mu)$. Note that $\skel^+(\mu)$ is a triconnected planar graph. Let $\mathcal{E}_\mu$ be an embedding of $\skel^+(\mu)$ with $(s_\mu,t_\mu)$ on the outer face,
let $\bw$ be the branchwidth of $G$, and let $\langle T, \xi, \Pi \rangle$ be a sphere-cut decomposition of $\skel^+(\mu)$ with embedding $\mathcal{E}_\mu$ of width smaller than or equal to\footnote{The skeletons of the nodes of $\mathcal T$ are minors of $G$ and thus their branchwidth is bounded by the one of $G$.} $\bw$. We root $T$ to the leaf $\rho$ of $T$ such that $\xi(\rho) = (s_\mu,t_\mu)$. Each arc $a=(p_a,c_a)$ of $T$, connecting the parent node $p_a$ with the child node $c_a$ in $T$, is associated with the subgraph $\skel_a$ of $\skel^+(\mu)$ lying in the interior of the noose $O_a$. We denote by $\mypert_a$ the subgraph of $G$ obtained by replacing each virtual edges $e$ of $\skel_a$ with the pertinent graph of the node corresponding to $e$ in the SPQR-tree $\mathcal{T}$ of $G$; refer to \cref{fig:rigid}.

Intuitively, our strategy to compute the embedding types of $\pert{\mu}$ is to visit $T$ bottom-up maintaining a succinct description of size $O(\bw)$ of the properties of the noose $O_a$ of the current $\skel_a$ in a \UBE of $\mypert_a$. When we reach the arc $a^*$ that connects $\rho$ to the node of $T$ whose graph $\skel_{a^*}$ coincides with $\skel(\mu)$, we use the computed properties to determine which embedding types are realizable by $\pert{\mu}$; refer to \cref{fig:rigid-sphere-cut}.

The fixed and the variable embedding setting are treated analogously. In the fixed embedding setting, we only consider the embedding of $\skel^+(\mu)$ in which the embedding of $\skel(\mu)$ is inherited by the (fixed) embedding of $G$. In the variable embedding setting, instead, we consider each of the two embeddings of $\skel^+(\mu)$ obtained by flipping the embedding of $\skel(\mu)$ at its poles.

Let $a$ be an arc of $T$ and let $p_0, p_1, \dots, p_k$ be the sequence of maximal (upward or downward) directed paths traversed in a clockwise visit of the outer face of $\skel_a$. The next lemma show that $k \in O(\beta)$.

\begin{lemma}\label{lem:noose-decomposition} 
For each arc $a \in T$, by clockwise visiting the outer face of $\skel_a$, we traverse at most $O(\bw)$ maximal (upward or downward) directed paths.
\end{lemma}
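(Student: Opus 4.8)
The plan is to bound the number of \emph{switches}, i.e., the points where a clockwise traversal of the boundary of the outer face of $\skel_a$ changes between following an edge upward and following it downward. Since the maximal directed paths $p_0,\dots,p_k$ are exactly the maximal runs between consecutive switches, the number of such paths equals the number of switches along this closed walk, so it suffices to show that there are $O(\bw)$ of them.

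First I would use the width bound on the decomposition. The length of the noose $O_a$ equals $|\midset(a)|$, which is at most the width of the sphere-cut decomposition and hence at most $\bw$. Writing $w_1,w_2,\dots,w_m$ for the vertices of $\midset(a)$ in the circular order induced by $O_a$, with $m=|\midset(a)|\le\bw$, these $m$ vertices partition $O_a$ into $m$ arcs; let $A_i$ be the arc between $w_i$ and $w_{i+1}$ (indices modulo $m$). Because a noose meets the graph only at vertices and traverses each face at most once, each $A_i$ lies in the interior of a single face $f_i$ of $\skel^+(\mu)$, and the faces $f_1,\dots,f_m$ are pairwise distinct.

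Next I would describe the boundary of the outer face of $\skel_a$ explicitly. Since $\skel_a$ is exactly the part of $\skel^+(\mu)$ enclosed by $O_a$, that boundary is the concatenation, for $i=1,\dots,m$, of the sub-arc $\beta_i$ of the boundary of $f_i$ lying on the $\skel_a$-side of $A_i$ and running from $w_i$ to $w_{i+1}$. The crucial observation is that $\skel^+(\mu)$ is a plane $st$-graph, so (as recalled in the Preliminaries) the boundary of every face $f_i$ consists of exactly two directed paths from its source $s_{f_i}$ to its sink $t_{f_i}$; traversing the whole boundary cycle of $f_i$ one therefore meets exactly two switches, at $s_{f_i}$ and at $t_{f_i}$. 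Consequently each contiguous sub-arc $\beta_i$ contains at most two switches in its interior.

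Finally I would sum the contributions: each of the $m$ arcs $\beta_i$ contributes at most two interior switches, and each of the $m$ junction vertices $w_i$ contributes at most one further switch, so the total number of switches along the closed boundary walk is at most $3m\le 3\bw=O(\bw)$, giving $k=O(\bw)$. The main obstacle I anticipate is the geometric bookkeeping of the middle two steps: rigorously arguing that the outer boundary of $\skel_a$ decomposes into the face-boundary arcs $\beta_i$ (using connectedness of the interior subgraph and the laminarity/face-traversal properties of nooses), and that the only interior direction changes of each $\beta_i$ are inherited from the two extrema of the corresponding face of the $st$-graph $\skel^+(\mu)$.
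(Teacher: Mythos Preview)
Your proposal is correct and follows essentially the same approach as the paper: partition the outer boundary of $\skel_a$ by the at most $\bw$ vertices of $\midset(a)$, observe that each resulting boundary arc lies on a face of the plane $st$-graph $\skel^+(\mu)$ and therefore changes direction a bounded number of times, and multiply. The paper phrases the per-segment bound as ``each $p_{u,v}$ is composed of at most three maximal directed paths'' rather than counting interior switches plus a junction switch, but the arithmetic and the underlying idea are the same.
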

\begin{proof}
Let $p_{u,v}$ be the path along the outer face of $\skel_a$ connecting two vertices of $\midset(a)$ that are clockwise consecutive in the circular ordering $\pi_a \in \Pi$ of $\midset(a)$. Since, by \cref{obs:spqr-st-graphs}, $\pert{\mu}$ is an $s_\mu t_\mu$-graph, and thus its faces have a single source and a single sink, we have that $p_{u,v}$ is composed of at most three maximal (upward or downward) directed paths. The statement then follows from the fact that $|\midset(a)| \leq \bw$. 
\end{proof}

Let $u$ and $v$ be two clockwise consecutive vertices along the outer face of  $\skel_a$ that are connected by an (upward or downward) directed path $p_i$. Given a \UBE $\langle \pi_a, \sigma_a\rangle$ of $\mypert_a$, we associate with $p_i$ an \emph{outer-visibility triple $t_i = \langle u\_{vis}, spine\_{vis}, v\_{vis} \rangle$ of $p_i$ in $\langle \pi_a, \sigma_a\rangle$}, for $i = 1, \dots, k$, which encodes the information about the visibility of the spine between $u$ and $v$ from the outer face of $\Gamma(\pi_a, \sigma_a)$. In particular, we have:
\begin{itemize}
\item $\mathbf{u\_{vis}=true}$ if and only there is a portion of the spine incident to $u$ and between $u$ and $v$ that is visible from the outer face of $\Gamma(\pi_a, \sigma_a)$.
\item $\mathbf{v\_{vis}=true}$ if and only if there is a portion of the spine incident to $v$ and between $u$ and $v$ that is visible from the outer face of $\Gamma(\pi_a, \sigma_a)$.
\item $\mathbf{spine\_{vis}=true}$ if and only if there is a portion of the spine between $u$ and $v$ that is visible from the outer face of $\Gamma(\pi_a, \sigma_a)$.
\end{itemize}
Observe that when $u\_{vis} = true$ or $v\_{vis} = true$, then necessarily $spine\_{vis} = true$. Hence, there are five possible types for the outer-visibility triples $\langle u\_{vis}, spine\_{vis}, v\_{vis} \rangle$.

We denote by $X_a$ the set of all the endpoints of the paths $p_1,\dots,p_k$.

Let $\langle \pi_\mu, \sigma_\mu\rangle$ be a \UBE of $\pert{\mu}$, let $a$ be an arc of $T$, and let $\langle \pi_a, \sigma_a\rangle$ be the restriction of 
$\langle \pi_\mu, \sigma_\mu\rangle$ to $\mypert_a$. Further, let $\langle \pi'_a, \sigma'_a\rangle \neq \langle \pi_a, \sigma_a\rangle$ be a \UBE of $\mypert_a$. The next lemma can be proved with the same strategy as in the proof of \cref{le:replacement}.

\begin{lemma}\label{lem:replacement-pert-a}
Graph $\pert{\mu}$ admits a \UBE $\langle \pi'_\mu, \sigma'_\mu\rangle$ whose restriction to $\mypert_a$ is $\langle \pi'_a, \sigma'_a\rangle$ if the following two conditions hold: 
\begin{inparaenum}
\item the bottom-to-top order of the vertices in $X_a$ is the same in $\pi_a$ as in $\pi'_a$, and 
\item for each $i=1,\dots,k$, the outer-visibility triple of $p_i$ in $\Gamma(\pi'_a, \sigma'_a)$ is the same as the outer-visibility triple of $p_i$ in $\Gamma(\pi_a, \sigma_a)$.
\end{inparaenum}
\end{lemma}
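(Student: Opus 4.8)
The plan is to mimic the surgery argument from the proof of \cref{le:replacement}, with two substitutions: the single pair of poles $s_\mu,t_\mu$ is replaced by the whole interface set $X_a$, and the single embedding type is replaced by the collection of outer-visibility triples $t_1,\dots,t_k$. Recall that a sphere-cut decomposition guarantees that the drawing of $\mypert_a$ lies in one of the two discs $\Delta_1,\Delta_2$ bounded by the noose $O_a$, that the remainder of $\pert{\mu}$ lies in the other disc, and that the two subgraphs share vertices only on $O_a$; moreover each side is connected and the nooses form a laminar family. Hence, in the canonical drawing $\Gamma(\pi_\mu,\sigma_\mu)$, the subdrawing of $\mypert_a$ occupies a region bounded by a portion of the spine and by arcs, and it is attached to the exterior graph $G_{\overline a}$, obtained from $\pert{\mu}$ by deleting every vertex and edge of $\mypert_a$ except those in $X_a$, precisely along the vertices of $X_a$, which all lie on the outer boundary of the subdrawing. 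This is exactly the role played by the face $f_\mu$ and the duplicated poles in \cref{le:replacement}.

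First I would delete from $\Gamma(\pi_\mu,\sigma_\mu)$ everything drawn strictly inside $\Delta_1$, that is, all edges of $\mypert_a$ and all its vertices not in $X_a$; this opens a face $f_a$ of $\Gamma_{\overline a}$ incident to the vertices of $X_a$ and to the portions of the spine between consecutive ones. Next I would insert a possibly squeezed copy of the canonical drawing $\Gamma(\pi'_a,\sigma'_a)$ inside $f_a$, with its spine on the spine line and slotted into the gaps among the maximal subsequences of $\pi_{\overline a}$ lying inside $f_a$. Condition~(i), that the bottom-to-top order of $X_a$ agrees in $\pi_a$ and $\pi'_a$, guarantees that the duplicated copies of the interface vertices appear along the spine in the same relative order as their originals in $\Gamma_{\overline a}$, so the insertion is consistent with the exterior order and yields a valid, disconnected upward drawing $\Gamma^*$ whose pages agree with $\sigma'$. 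Then, proceeding one outer directed path $p_i$ at a time, I would use condition~(ii) to relocate, as in \cref{le:replacement}, the vertices of the exterior subsequences of $\pi_{\overline a}$ onto the portions of the new spine that the triple $t_i$ certifies to be visible from the outer face, and finally continuously move each duplicated vertex of $X_a$ towards its original copy in $\Gamma_{\overline a}$ without crossing any edge, identifying the two.

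The main obstacle is the planarity of this last merging step, now carried out for all $O(\bw)$ vertices of $X_a$ rather than for two poles only. In \cref{le:replacement} the embedding type certified, for each pole, that no exterior edge on the ``wrong'' page separates the two copies to be identified; here the analogous guarantee must hold simultaneously at every vertex of $X_a$ and along every path $p_i$. This is where condition~(ii) is essential: since the visibility of the spine near each endpoint of $p_i$, and between the endpoints of $p_i$, from the outer face of $\Gamma(\pi'_a,\sigma'_a)$ equals that in $\Gamma(\pi_a,\sigma_a)$, no exterior edge incident to the portion of $f_a$ associated with $p_i$ can block the continuous motion, for precisely the reason invoked at the poles in \cref{le:replacement}. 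The facts that $\mypert_a$ and $G_{\overline a}$ meet only along $X_a$, that the two sides are connected, and that the nooses are laminar are what localize the argument to one path $p_i$ at a time and make the five-valued triples a sufficient interface description; verifying that these $O(\bw)$ local motions do not interfere with one another is the one genuinely new bookkeeping point relative to \cref{le:replacement}.
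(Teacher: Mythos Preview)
Your proposal is correct and follows exactly the route the paper indicates: the paper does not spell out a proof but simply states that the lemma ``can be proved with the same strategy as in the proof of \cref{le:replacement},'' and your write-up is precisely that generalization, replacing the pole pair by the interface set $X_a$ and the single embedding type by the sequence of outer-visibility triples.
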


When visiting $T$ bottom-up, we compute and store in each arc $a$ of $T$ those pairs $(\pi_a, \langle t_1, t_2, \dots, t_k \rangle)$, called \emph{outer-shape pairs}, where $\pi_a$ is a bottom-to-top order of the vertices in $X_a$ and $t_i$ is an outer-visibility triple for path $p_i$, for $i=1,\dots,k$, such that there exists a \UBE $\langle \pi, \sigma \rangle$ of $\mypert_a$ satisfying the following properties: 
\begin{inparaenum}
\item
ordering $\pi_a$ is the restriction of $\pi$ to $X_a$ and 
\item $t_i$ is the outer-visibility triple determined by $\langle \pi, \sigma \rangle$ when clockwise traversing the outer face of $\Gamma(\pi, \sigma)$ between the endpoints of $p_i$.
\end{inparaenum}
By \cref{lem:replacement-pert-a}, this information is enough to succinctly describe all the relevant properties of the~\UBEs~of~$\mypert_a$.
%
Moreover, since $\skel_{a^*}=\skel(\mu)$ is bounded by two directed paths $p_1$ and $p_2$ whose end-vertices are~$s_\mu$~and~$t_\mu$, when we reach the arc $a^*$, all the embedding types that are realizable by $\pert{\mu}$ can be computed by inspecting the outer-visibility triples $t_1$ and $t_2$, over all outer-shape pairs $(\pi_a, \langle t_1, t_2, \dots, t_k \rangle)$ stored in $a^*$.

\smallskip
We show how to compute all outer-shape pairs $(\pi_a, \langle t_1, t_2, \dots, t_k \rangle)$, for~each~arc~$a \in T$.

Suppose $a$ leads to a leaf $\ell$ of $T$ such that $\xi(\ell) = (u,v)$. Assume that $\mypert_a$ is oriented from $u$ to $v$ and that the clockwise boundary of $\skel_a$ is composed by the upward path $p_1 = u,v$ and the downward path $p_2 = v,u$, the other cases being analogous. 
Then, all the outer-shape pairs $(\pi_a, \langle t_1, t_2 \rangle)$  for $a$ are such that $\pi_a = u,v$, since $\pert{\mu_\ell}$ is an $uv$-graph. We compute the pairs of outer-visibility triples $t_1=\langle \alpha, \beta, \gamma \rangle$ and $t_2=\langle \alpha', \beta', \gamma' \rangle$ for $p_1$ and $p_2$, respectively, as follows.
For each embedding type $\langle s\_{vis},spine\_{vis},t\_{vis} \rangle$ of the node $\mu_\ell$ of $\mathcal T$ represented by edge $\ell$ we define $t_1$ and $t_2$ as follows. If $s\_{vis} = L$, we set $\alpha = true$ and $\alpha' = false$. If $s\_{vis} = R$ we set $\alpha = false$ and $\alpha' = true$. If $s\_{vis} = N$ we set both $\alpha = \alpha' = false$. Analogously, we set $\beta$ and $\beta'$ according to the value of $t\_{vis}$. We set $\gamma = true$ if $spine\_{vis} = B$ or $spine\_{vis} = L$. Otherwise, we set $\gamma = false$. Finally, we set $\gamma' = true$ if $spine\_{vis} = B$ or $spine\_{vis} = R$. Otherwise, we set $\gamma' = false$.   

Suppose now that $a$ leads to a non-leaf node $\lambda$ of $T$. Then, $\lambda$ has two children $\lambda_1$ and $\lambda_2$, reached by two arcs $a_1$ and $a_2$ of $T$, for which we have already computed all the outer-shape pairs.
We are going to use the following observation, which follows from the planarity of $\skel(\mu)$ and from the fact that the nooses form a laminar set. 

\begin{observation}\label{obs:properties-skel-rigid}
Let $\langle \pi_\mu, \sigma_\mu \rangle$ be a \UBE of $\skel(\mu)$ and let $\langle \pi_{a_1}, \sigma_{a_1} \rangle$ and $\langle \pi_{a_2}, \sigma_{a_2} \rangle$ be the \UBEs of $\skel_{a_1}$ and $\skel_{a_2}$, respectively, determined by $\langle \pi_\mu, \sigma_\mu \rangle$. The following two properties hold:
\begin{inparaenum}
  \item the edges of $\skel_{a_1}$ lie in the outer face of the drawing obtained by restricting $\Gamma(\pi_\mu,\sigma_\mu)$ to $\skel_{a_2}$, and vice versa, and
  \item the edges of $\skel_{a_1}$ and $\skel_{a_2}$ do not interleave around a vertex in $\Gamma(\pi_\mu,\sigma_\mu)$. 
\end{inparaenum}
\end{observation}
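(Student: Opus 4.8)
The plan is to prove both properties by a purely topological argument on the sphere $\Sigma$, transferring it to the book drawing $\Gamma(\pi_\mu,\sigma_\mu)$ via the fact that, in the R-node setting, $\Gamma(\pi_\mu,\sigma_\mu)$ realizes the embedding $\mathcal{E}_\mu$ for which the sphere-cut decomposition was built; hence the face incidences and the rotation system of $\Gamma(\pi_\mu,\sigma_\mu)$ coincide with those used to define the nooses. The first and crucial step is to make explicit the disc structure at the internal node $\lambda$. Since $a_1$ and $a_2$ are the two arcs descending from $\lambda$, the graphs $\skel_{a_1}$ and $\skel_{a_2}$ are edge-disjoint (they arise from the leaves of the two distinct subtrees of $T$ below $\lambda$ and $\xi$ is a bijection) and satisfy $\skel_a=\skel_{a_1}\cup\skel_{a_2}$. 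Let $\Delta$ be the closed disc bounded by $O_a$ that encloses $\skel_a$, and let $\Delta_1,\Delta_2$ be the closed discs bounded by $O_{a_1},O_{a_2}$ enclosing $\skel_{a_1},\skel_{a_2}$, respectively; both are contained in $\Delta$. By laminarity of the noose family, $O_{a_1}$ and $O_{a_2}$ are either disjoint or nested; since nesting would force one of $\skel_{a_1},\skel_{a_2}$ to be enclosed by the other and hence, together with $\skel_a=\skel_{a_1}\cup\skel_{a_2}$ and edge-disjointness, yield a contradiction, I would conclude that $\Delta_1$ and $\Delta_2$ have disjoint interiors and meet only along a common boundary arc subdividing $\Delta$, whose vertices are exactly those shared by $\skel_{a_1}$ and $\skel_{a_2}$.

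Given this picture, Property~(i) is immediate. The subgraph $\skel_{a_2}$ is connected---a defining property of sphere-cut decompositions---and is drawn entirely inside the closed disc $\Delta_2$, with the vertices of $\midset(a_2)$ lying on $O_{a_2}$. Consequently, the outer (unbounded) face of the restriction of $\Gamma(\pi_\mu,\sigma_\mu)$ to $\skel_{a_2}$ contains the whole complement of the interior of $\Delta_2$. Since $\skel_{a_1}$ lies inside $\Delta_1$, whose interior is disjoint from that of $\Delta_2$, every edge of $\skel_{a_1}$ lies in this outer face; exchanging the roles of $a_1$ and $a_2$ gives the symmetric statement.

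For Property~(ii), I would consider a vertex $v$ incident to edges of both $\skel_{a_1}$ and $\skel_{a_2}$. By the disc structure, such a $v$ lies on the common boundary arc of $\Delta_1$ and $\Delta_2$, which locally splits a small neighborhood of $v$ into two complementary angular sectors, one inside $\Delta_1$ and one inside $\Delta_2$. Every edge of $\skel_{a_1}$ incident to $v$ leaves $v$ into $\Delta_1$ and every edge of $\skel_{a_2}$ incident to $v$ leaves $v$ into $\Delta_2$; hence in the clockwise rotation at $v$ inherited from $\mathcal{E}_\mu$ the edges of $\skel_{a_1}$ appear as one contiguous block and those of $\skel_{a_2}$ as another, so the two edge sets do not interleave.

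I expect the main obstacle to be the first step: rigorously deducing, from the abstract laminarity of the noose family together with the condition that each noose traverses every face at most once, that $\Delta_1$ and $\Delta_2$ are interior-disjoint and share a \emph{single} boundary arc partitioning $\Delta$, rather than meeting in several arcs or degenerating. Once this concrete disjoint-disc picture is in place, Properties~(i) and~(ii) follow directly, as sketched above.
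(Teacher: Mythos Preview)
The paper does not actually give a proof of this observation; it simply states, in the sentence introducing it, that it ``follows from the planarity of $\skel(\mu)$ and from the fact that the nooses form a laminar set.'' Your proposal is precisely a fleshed-out version of that one-line justification, and it is correct.

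One remark: the obstacle you flag at the end---that $\Delta_1$ and $\Delta_2$ meet in a \emph{single} boundary arc---is not actually needed for either property, so you need not worry about it. For Property~(i) it suffices that the interiors of $\Delta_1$ and $\Delta_2$ are disjoint and that the edge $(s_\mu,t_\mu)$ lies outside both (it does, since the arc $a$ lies strictly below the root leaf $\rho$ with $\xi(\rho)=(s_\mu,t_\mu)$); this pins down which face of the restricted drawing is the outer one and places all edges of $\skel_{a_1}$ there. For Property~(ii), any vertex $v$ incident to edges of both subgraphs must lie on the simple closed curve $O_{a_2}$ (an edge of $\skel_{a_1}$ at $v$ leaves $\Delta_2$, forcing $v\in\midset(a_2)$), and since $O_{a_2}$ is simple it passes through $v$ exactly once, splitting the local rotation into an ``inside $\Delta_2$'' arc and an ``outside $\Delta_2$'' arc; the edges of $\skel_{a_2}$ occupy the former and those of $\skel_{a_1}$ the latter, so no interleaving occurs. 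With these small clarifications your argument goes through without the single-arc hypothesis.
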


For $i=1,2$, let $A_i=(V_i,E_i)$ be the \emph{auxiliary (multi)-graph} of $\mypert_{a_i}$ defined as follows. Initialize $V_i=X_{a_i}$ and add to $E_i$ a directed edge $(u,v)$ for each path $p$ directed from $u$ to $v$ traversed when clockwise visiting the outer face of $\skel_{a_i}$; see \cref{fig:rigid-c}. Then, replace each directed edge $(u,v)$ with a directed path $(u,x',x'',v)$, unless $(u,v)$ is an edge in $G$. Clearly, graph $A_i$ is connected.
Note that, there exists a one-to-one correspondence between the directed paths along the outer face of $\skel_{a_i}$ (see \cref{fig:rigid-b}) and the directed paths along the outer face of $A_i$ (see \cref{fig:rigid-c}).
We have that each \UBE of $A_i$ defines a set $\langle t_1,\dots,t_{k_i}\rangle$, where $t_i$ is an outer-visibility triple describing the visibility of each directed path along the outer face of $A_i$ in the \UBE. We define a new graph $A$ as the union of $A_1$ and $A_2$; see \cref{fig:rigid-d}. 
The purpose of graph $A$ is that of representing the visibility of the spine on the outer face of $\mypert_a$ between pairs of vertices in $X_a$ in a \UBE of $\mypert_a$. In particular, by assigning to the faces of the book embedding the three edges that replace each directed path $p_i$ along the outer face of $\mypert_a$, we are able to model the outer-visibility triples of paths $p_1, \dots, p_k$.
We compute all outer-shape pairs $\langle \pi_a, \langle t_1,\dots,t_k \rangle \rangle$ for $a$ using \cref{algo:pairs}.

\begin{algorithm}[h!]
\caption{Procedure to compute the outer-shape pairs of an arc $a \in T$.}
\paragraph{\textbf{Step~1}} Construct, by brute force, the set $\mathcal U$ of all the \UBEs $\langle \pi_A, \sigma_A \rangle$ of $A$.

\paragraph{\textbf{Step~2}} Let $\mathcal U^*$ be the subset of the \UBEs $\langle \pi_A, \sigma_A \rangle \in \mathcal U$ such that the drawing of 
$A_1$ lies in the outer face of the drawing of $A_2$ in $\Gamma(\pi_A, \sigma_A)$, and vice versa, and the edges of these graphs do not interleave around a vertex. 

\paragraph{\textbf{Step~3}} For each \UBE $\langle \pi_{A}, \sigma_{A}\rangle \in \mathcal U^*$ of $A$, perform the following operations:
\Los{}{
\paragraph{\textbf{Step~3.a}} 
Let  $\langle \pi_{A_i}, \sigma_{A_i} \rangle$ be the \UBE of $A_i$ obtained by restricting $\langle \pi_A, \sigma_A \rangle$ to $A_i$.
Clockwise visit the outer face of $\Gamma(\pi_{A_i},\sigma_{A_i})$ of $A_i$ to compute the sequence $\langle t^i_1,\dots,t^i_{k_i} \rangle$ of the outer-visibility triples of the maximal directed paths between the vertices~in~$X_{a_i}$.

\paragraph{\textbf{Step~3.b}} Verify whether the set of outer-shape pairs stored in $a_1$ contains
$\langle \pi_{a_1}, \langle t^1_1,\dots,t^1_{k_1} \rangle \rangle$ 
and whether the set of outer-shape pairs stored in $a_2$ contains $\langle \pi_{a_2}, \langle t^2_1,\dots,t^2_{k_2} \rangle \rangle$.

\paragraph{\textbf{Step~3.c}} If the pairs $\langle \pi_{a_1}, \langle t^1_1,\dots,t^1_{k_1} \rangle \rangle$  and $\langle \pi_{a_2}, \langle t^2_1,\dots,t^2_{k_2} \rangle \rangle$ pass the above test, then compute a pair $\langle \pi_a, \langle t_1,\dots,t_k \rangle \rangle$ for $a$ as follows:
\begin{enumerate}
\item[~~~\textbf{Step~3.c.1}] Set $\pi_a$ as the restriction of $\pi_A$ to the vertices in $X_a \subseteq V(A)$;
\item[~~~\textbf{Step~3.c.2}] 
Clockwise visit the outer face of $\Gamma(\pi_{A},\sigma_{A})$ to compute the sequence $\langle t^i_1,\dots,t^i_{k_i} \rangle$ of the outer-visibility triples of the maximal directed paths between \\the vertices~in~$X_{a}$.
\end{enumerate}
}
\label{algo:pairs}
\end{algorithm}

We have the following main lemma.

\ledecision*

\begin{proof}
We exploit \cref{algo:pairs} on the arcs of $T_\mu$ to compute their outer-shape pairs. As already observed, this allows us to compute the embedding types realizable by $\pert{\mu}$ from the outer-shape pairs of the arc $a^*$ incident to the leaf $\nu$ of $T_\mu$ such that $\xi_\mu(\nu)=(s_\mu,t_\mu)$. 
The correctness of \cref{algo:pairs} descends from \cref{lem:constant-types} and \cref{obs:properties-skel-rigid}.

We argue about the running time of \cref{algo:pairs} for a given arc $a$ of $T_\mu$. Note that, the auxiliary graph $A$ corresponding to $a$ contains $O(\beta)$ vertices, by \cref{lem:noose-decomposition}, and $O(\beta)$ edges, since it is planar. 
First, the running time of \textbf{Step~1} is bounded by the number of possible \UBEs~of~$A$. By enumerating all the $|V(A)|! \in O(\beta!) \in 2^{O(\beta\log{\beta})}$ linear orders of $V(A)$ and all the $2^{|E(A)|} \in 2^{O(\beta)}$ edge assignments for $E(A)$, we may construct $2^{O(\beta\log{\beta})}$ \UBEs of $A$.
\textbf{Step~2} is also bounded by the numbers of \UBEs of $A$, as testing whether the conditions of \cref{obs:properties-skel-rigid} are satisfied by each such a \UBE can be performed in time linear in the size of $A$, i.e., $O(\beta)$ time. 
It is clear that, for each of the \UBEs of $A$ considered at \textbf{Step~3}, the remaining steps of \cref{algo:pairs} can also be performed in $O(\beta)$ time.
Thus, \cref{algo:pairs} runs in $2^{O(\beta\log{\beta})}$ time.

The overall running time for computing the embedding types of an R-node $\mu$ follows from the fact that tree $T_\mu$ contains $O(k)$ nodes and arcs, since it is a ternary tree whose leaves are the $O(k)$ virtual edges of $\skel(\mu)$, and from the time spent to compute the outer-shape pairs for each arc of $T_\mu$, i.e., the running time of \cref{algo:pairs}.
\end{proof}

\end{document}